\definecolor{myBlue}{RGB}{27, 82, 140}
\definecolor{myBrightBlue}{RGB}{24, 107, 196}
\definecolor{myGreen}{RGB}{82, 140, 27}
\definecolor{myBrightGreen}{RGB}{107, 196, 24}
\definecolor{myRed}{RGB}{140, 27, 82}
\definecolor{myBrightRed}{RGB}{196, 24, 107}
\newcommand\B[1]{{\textstyle\binom{#1}{\lfloor #1/2 \rfloor}}}
\newcommand{\vx}{\ensuremath{\mathbf{x}}}
\newcommand{\ve}{\ensuremath{\mathbf{e}}}
\newcommand{\vy}{\ensuremath{\mathbf{y}}}
\newcommand{\va}{\ensuremath{\mathbf{a}}}
\renewcommand{\epsilon}{\varepsilon}
\renewcommand{\phi}{\varphi}
\newcommand{\ignore}[1]{}
\newlist{algoEnum}{enumerate}{10}
\setlist[algoEnum, 1]{label={\arabic*.}}
\setlist[algoEnum, 2]{label={(\roman*)}}
\crefname{algoEnumi}{step}{steps}
\crefname{algoEnumii}{case}{cases}
\pgfplotsset{compat=1.18}
\renewcommand{\G}{\ensuremath{\mathfrak{G}}}
\DeclareMathOperator{\NBin}{NBin}
\DeclareMathOperator{\elin}{\textnormal{\textsf{E3Lin}}}
\DeclareMathOperator{\belin}{\textnormal{\textsf{BalancedE3Lin}}}
\DeclareMathOperator{\maxPCSP}{\textup{MaxPCSP}}
\DeclareMathOperator{\maxCSP}{\textup{MaxCSP}}
\DeclareMathOperator{\CSP}{\textup{CSP}}
\DeclareMathOperator{\PCSP}{\textup{PCSP}}
\newcommand{\N}{\ensuremath{\mathbb{N}}}
\renewcommand{\R}{\ensuremath{\mathbb{R}}}
\newcommand{\norm}[1]{\left \lVert #1 \right \rVert}
\newclass\UGC{UGC}
\theoremstyle{plain}
\newtheorem{theorem}{Theorem}
\newtheorem{lemma}[theorem]{Lemma}
\newtheorem*{lemma*}{Lemma}
\newtheorem*{proposition*}{Proposition}
\newtheorem{corollary}[theorem]{Corollary}
\newtheorem*{corollary*}{Corollary}
\newtheorem{conjecture}[theorem]{Conjecture}
\theoremstyle{definition}
\newtheorem{definition}[theorem]{Definition}
\begin{document}

\title{A Dichotomy for Maximum PCSPs on Graphs\thanks{An extended abstract of a part of this work appeared in Proceedings of ICALP\,(A) 2025~\cite{NZ25:icalp}. This work was supported by UKRI EP/X024431/1 and Clarendon Fund Scholarship. Work done while Tamio-Vesa Nakajima was at the University of Oxford. For the purpose of Open Access, the authors have applied a CC BY public copyright licence to any Author Accepted Manuscript version arising from this submission. All data is provided in full in the results section of this paper.}}

\author{Tamio-Vesa Nakajima\\
UUniversity of Marburg\\
\url{nakajima@uni-marburg.de}
\and
Stanislav \v{Z}ivn\'y\\
University of Oxford\\
\url{standa.zivny@cs.ox.ac.uk}
}

\date{} 

\maketitle

\begin{abstract}
Fix two non-empty loopless graphs $G$ and $H$ such that $G$ maps homomorphically to $H$. The \emph{Maximum Promise Constraint Satisfaction Problem} parameterised by $G$ and $H$ is the following computational problem, denoted by $\maxPCSP(G, H)$: Given an input (multi)graph $X$ that admits a map to $G$ preserving a $\rho$-fraction of the edges, find a map from $X$ to $H$ that preserves a $\rho$-fraction of the edges. As our main result, we give a complete classification of this problem under Khot's Unique Games Conjecture: The only tractable cases are when $G$ is bipartite and $H$ contains a triangle. 

Along the way, we establish several results, including an efficient approximation algorithm for the following problem: Given a (multi)graph $X$ which contains a bipartite subgraph with $\rho$ edges, what is the largest triangle-free subgraph of $X$ that can be found efficiently? We present an SDP-based algorithm that finds one with at least $0.8823 \rho$ edges, thus improving on the subgraph with $0.878 \rho$ edges  obtained by the classic Max-Cut algorithm of Goemans and Williamson.
\end{abstract}

\section{Introduction}

Given two undirected graphs\footnote{All graphs in this article are loopless and
non-empty, meaning having at least one edge (and thus at least two vertices).}
$G$ and $H$, a \emph{homomorphism} from $G$ to $H$ is an edge preserving map $h$
from $V(G)$ to $V(H)$; that is, if $(u,v)\in E(G)$ then $(h(u),h(v))\in
E(H)$. A classic result of Hell and Ne\v{s}et\v{r}il established a computational
\emph{dichotomy} for the so-called $H$-colouring
problem~\cite{Hell90:h-coloring}, for a fixed graph $H$: if $H$ is bipartite then deciding whether an
input graph $G$ is homomorphic to $H$ is solvable in polynomial time, and for
every other $H$ this problem is NP-complete. Going beyond graphs, Feder and
Vardi conjectured that a similar dichotomy holds for all finite relational
structures~\cite{FederVardiConjecture}, not only for graphs and for relational structures over
the Boolean domain~\cite{Schaefer78:stoc}.\footnote{We will not need relational
structures, but, intuitively, one should think of them as generalisations of
(hyper)graphs in which one is given a ground set and a collection of relations
on the ground set.} Bulatov~\cite{Bulatov17:focs} and, independently,
Zhuk~\cite{Zhuk20:jacm}, confirmed the tractability part of the dichotomy, which
together with the NP-hardness part~\cite{Bulatov05:classifying}, answered the
Feder-Vardi conjecture in the affirmative. The homomorphism problem is also
known as the \emph{constraint satisfaction problem} (CSP)~\cite{Jeavons98:algebraic}. CSPs can be
equivalently defined as problems seeking an assignment of domain values to the
given variables subject to the given constraints. The fixed target structure in
the homomorphism problem corresponds to the set of allowed (domain) values and
the set of allowed relations in the constraints. Homomorphisms taking edges to
edges corresponds to insuring that all constraints are satisfied. Concrete examples of CSPs include
solvability of linear equations over finite fields and variants of (hyper)graph
colourings.

A well-studied line of work focuses on \emph{approximability} of
CSPs~\cite{KSTW00,Trevisan00:sicomp}. A classic example here is
the Max-Cut problem. In Max-Cut, the variables correspond to the
vertices of the input graph, the values are just $0$ and $1$ (corresponding to
the two sides of a cut), and the constraints are binary disequalities associated
with the edges of the graph. Given a CSP, the computational task could be to
find a solution maximising the number of satisfied constraints as in Max-Cut, or
finding a (perfect) solution satisfying all constraints as discussed in the
previous paragraph. 

A \emph{promise} CSP (PCSP) is a CSP in which each
constraint comes in two forms, a strong one and a weak one. The promise is that
a solution exists using the strong versions of the constraints, while the
(possibly easier) task is to find a solution using the weak constraints. A
recent line of work by Austrin, Guruswami, and H{\aa}stad~\cite{AGH17},
Brakensiek and Guruswami~\cite{BG21:sicomp}, and Barto, Bul\'in, Krokhin, and
Opr\v{s}al~\cite{BBKO21} initiated a systematic study of PCSPs with perfect completeness,
i.e., finding a solution satisfying all weak constraints given the promise that
a solution satisfying all strong constraints exists. Canonical examples include 
approximate graph~\cite{GJ76} and hypergraph~\cite{DRS05,ABP20,Barto21:stacs} colouring problems,
e.g., finding a 5-colouring of a given 3-colourable graph~\cite{KOWZ23}.
PCSPs are a vast generalisation of CSPs and their complexity is not well
understood, not even on the Boolean domain~\cite{Ficak19:icalp,BGS23} or for
graphs. In particular, Brakensiek and Guruswami conjectured that only bipartite
graphs lead to tractable PCSPs on graphs~\cite{BG21:sicomp} (cf.~\Cref{conjBG}
in~\Cref{sec:results} for a precise statement). Resolving their 
conjecture would in particular include resolving the notoriously difficult approximate graph
colouring problem, cf.~\cite{Avvakumo25:stoc} for exciting recent progress.

\medskip

In this work, we will focus on the \emph{approximability of maximisation PCSPs}. The ultimate goal is to understand the precise
approximation factor for all $\maxPCSP$s, and thus identify where the transition
from tractability to intractability occurs. This is an ambitious, long-term goal
that would encompass many existing fundamental results.

An example of a $\maxPCSP$ is the following problem. Given a graph $G$ that
admits a $2$-colouring of the vertices with a $\rho$-fraction of the edges
coloured properly, find a $3$-colouring of $G$ with an $\alpha\rho$-fraction of
the edges coloured properly, where $0<\alpha\leq 1$ is the approximation factor.
As one of the results in the present paper, we will show that there is a
1-approximation algorithm; i.e., given a graph with a 2-colouring with
$\rho$-fraction of non-monochromatic edges, one can efficiently find a
3-colouring of the graph with the same fraction of non-monochromatic edges. 

As our main result, we will establish a dichotomy for 1-approximation of graph
PCSPs under Khot's Unique Games Conjecture (\UGC)~\cite{Khot02stoc}.

\begin{restatable}{theorem}{UGCdichotomy}\label{thm:UGCdichotomy}
  Let $G$ and $H$ be two fixed graphs such that there is a homomorphism from $G$ to $H$. 
  If $G$ is bipartite and $H$ contains a triangle then $\maxPCSP(G,H)$ is 1-approximable. Otherwise, 1-approximation of $\maxPCSP(G,H)$ is \NP-hard assuming the \UGC.
\end{restatable}

Along the way to prove~\Cref{thm:UGCdichotomy}, we will design two efficient
approximation algorithms. We shall discuss one of them briefly here, with an
overview of both algorithms and all results in~\Cref{sec:results}.

Given an undirected (multi)graph $G$, what is the bipartite subgraph of $G$ with
the most edges? This is nothing but the already mentioned Max-Cut problem, one of the most fundamental problems in computer science.
Max-Cut was among the 21 problems shown to be \NP-hard by Karp~\cite{Karp1972}.
Papadimitriou and Yannakakis showed that Max-Cut is
\APX-hard~\cite{PapadimitriouY91} and thus does not admit a polynomial-time
approximation scheme, unless $\P=\NP$. However, there are several simple
$0.5$-approximation algorithms. Goemans and Williamson used semidefinite
programming and randomised rounding to design a $0.878$-approximation
algorithm~\cite{GW95}. 
Khot, Kindler, Mossel, O'Donnell, and Oleszkiewicz established the optimality of
this algorithm~\cite{KKMO07,Mossel10:ann} under the \UGC.

\begin{center}\emph{What if the task is merely finding a large
  \underline{triangle-free} subgraph (rather than a bipartite one)?}
\end{center}

While the Goemans-Williamson algorithm can still be used, as one of our results
we design an algorithm with a better approximation guarantee: If $G$ contains a
bipartite subgraph with $\rho$ edges, our algorithm efficiently finds a
triangle-free subgraph of $G$ with $0.8823\rho$ edges.\footnote{We note that our
algorithm can be easily extended to the case where edges have positive weights.} Our
algorithm is a randomised combination of the Goemans-Williamson original
``random hyperplane algorithm'', and an algorithm that first selects ``long
edges'' (meaning edges for which the angle between the corresponding vectors
from the SDP solution is above a certain threshold) and then applies a random
hyperplane rounding, selecting ``shorter edges'' (still longer than some other
threshold). The probability of the biased coin that selects one of the two
algorithms depends on certain geometric quantities which guarantee that the
resulting subgraph is indeed triangle-free.
We complement our tractability result for this problem by showing that it is
$\NP$-hard to find a triangle-free subgraph with $(25 / 26 + \epsilon) \rho
\approx (0.961 + \epsilon) \rho$ edges. This result is obtained by a reduction
from H\aa{}stad's 3-bit PCP~\cite{Hastad01}.

\paragraph{Related work}

The notion of $\maxPCSP$s is a natural generalisation of the well-studied notion of $\maxCSP$s. 
For finite-domain $\maxCSP$s, it is known that a certain rounding of the basic SDP relaxation gives,
up to some $\epsilon$,
the \UGC-optimal approximation ratio (in time doubly exponential in $1 /
\epsilon$)~\cite{Raghavendra08:everycsp,Raghavendra09:focs}. 
However, the
Raghavendra-Steurer algorithm does not immediately give a 1-approximation
algorithm due to the above-mentioned $\epsilon$, even for $\maxCSP$s. Moreover,
that result is established only for finite-domain $\maxCSP$s. 
On other hand, our results include a 1-approximation for $\maxPCSP(K_2,
K_3)$, and an algorithm for infinite-domain structures, namely for
$\maxPCSP(K_2, \G_3)$, which captures the bipartite vs. triangle-free subgraph
discussed above (cf.~\Cref{sec:prelims} for a precise definition).

Approximation of concrete $\maxPCSP$s has been studied for decades, including
several papers on almost approximate graph
colouring~\cite{Engebretsen08:rsa,Dinur10:focs,Khot12:focs,Hecht23:approx},
approximate colouring~\cite{nz23:arxiv}, and promise
Max-3-LIN~\cite{blz25:icalp}. Our work initiates a systematic investigation,
giving a complete classification for 1-approximation of the graph case.

Recent work of Brakensiek, Guruswami, and Sandeep~\cite{BGS23:stoc} studied
robust approximation of $\maxPCSP$s; in particular, they state that
Raghavendra's above-mentioned theorem on approximate
$\maxCSP$s~\cite{Raghavendra08:everycsp} applies verbatim to $\maxPCSP$s. This
in combination with the work of Brown-Cohen and Raghavendra~\cite{BCR15} gives a
framework for studying approximation of $\maxPCSP$s. An alternative framework
for studying approximation of $\maxPCSP$s has recently been put forward by
Barto, Butti, Kazda, Viola, and \v{Z}ivn\'y~\cite{Barto24:lics-algebraic}. 

\paragraph{Paper organisation}

After defining $\maxPCSP$s formally and few other basic concepts
in~\Cref{sec:prelims}, we will state all our results precisely
in~\Cref{sec:results}. The rest of the paper is then split in different parts of
the proofs, including two tractability results in~\Cref{sec:tractability}
and~\Cref{sec:23} and hardness results in~\Cref{sec:hardness}
and~\Cref{sec:ugchardness}

\section{Preliminaries}\label{sec:prelims}

For two nonzero vectors $\vx, \vy\in \R^N$, we denote by $\angle(\vx, \vy)$ the angle between $\vx$ and $\vy$ in radians; i.e.,
$\angle(\vx, \vy) = \arccos\left( \frac{\vx \cdot \vy}{\norm{\vx}\norm{\vy}}\right)$.
The following useful fact is well-known, cf. \cite[Book \textsc{xi}, Proposition 21]{euclid}.

\begin{lemma}\label{lem:noTriangles}
    For any three nonzero vectors $\vx_1, \vx_2, \vx_3 \in \R^N$, we have
    $\angle(\vx_1, \vx_2) + \angle(\vx_2, \vx_3) + \angle(\vx_3, \vx_1) \leq 2\pi $.
\end{lemma}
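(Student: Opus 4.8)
The plan is to regard the $\vx_i$ as points on a sphere and to reduce the claim to the \emph{spherical triangle inequality}
\begin{equation*}
\angle(\va,\mathbf{c}) \le \angle(\va,\vb) + \angle(\vb,\mathbf{c}) \qquad\text{for all nonzero } \va,\vb,\mathbf{c}\in\R^N,
\end{equation*}
applied to an antipodal point. Since $\angle$ is unchanged when each argument is scaled by a positive constant, I would first normalise $\vx_1,\vx_2,\vx_3$ to unit vectors and set $\alpha_{ij}=\angle(\vx_i,\vx_j)\in[0,\pi]$, so that $\vx_i\cdot\vx_j=\cos\alpha_{ij}$. Because $\vx_1\cdot(-\vx_2)=-\cos\alpha_{12}=\cos(\pi-\alpha_{12})$, and likewise for the other pair, we get $\angle(\vx_1,-\vx_2)=\pi-\alpha_{12}$ and $\angle(-\vx_2,\vx_3)=\pi-\alpha_{23}$. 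Hence the spherical triangle inequality applied to the triple $(\vx_1,-\vx_2,\vx_3)$ reads $\alpha_{13}\le(\pi-\alpha_{12})+(\pi-\alpha_{23})$, which rearranges to precisely $\alpha_{12}+\alpha_{23}+\alpha_{31}\le 2\pi$.

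It then remains to establish the spherical triangle inequality, which is the only substantive step. Writing $\alpha=\angle(\va,\vb)$ and $\beta=\angle(\vb,\mathbf{c})$, the case $\alpha+\beta\ge\pi$ is immediate since every angle is at most $\pi$; so I would assume $\alpha+\beta<\pi$, and after dispatching the trivial cases where $\va$ or $\mathbf{c}$ is parallel or antiparallel to $\vb$, also $\alpha,\beta\in(0,\pi)$. In that case one may write $\va=(\cos\alpha)\vb+(\sin\alpha)\va'$ and $\mathbf{c}=(\cos\beta)\vb+(\sin\beta)\mathbf{c}'$ with $\va',\mathbf{c}'$ unit vectors orthogonal to $\vb$, whence
\begin{equation*}
\va\cdot\mathbf{c}=\cos\alpha\cos\beta+\sin\alpha\sin\beta\,(\va'\cdot\mathbf{c}')\ \ge\ \cos\alpha\cos\beta-\sin\alpha\sin\beta=\cos(\alpha+\beta),
\end{equation*}
using $\va'\cdot\mathbf{c}'\ge-1$ and $\sin\alpha,\sin\beta>0$. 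Since $\angle(\va,\mathbf{c})$ and $\alpha+\beta$ both lie in $[0,\pi]$, where $\cos$ is strictly decreasing, this forces $\angle(\va,\mathbf{c})\le\alpha+\beta$.

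The main (and essentially only) obstacle is the spherical triangle inequality itself; everything else is bookkeeping with the antipode and the normalisations. The displayed computation is in effect the spherical law of cosines combined with the bound $\cos(\text{dihedral angle at }\vb)\ge-1$, arranged so as not to invoke that identity explicitly. One must be a little careful with the degenerate configurations (parallel or antiparallel vectors, and $\alpha+\beta\ge\pi$), which is exactly why the inequality in the statement is non-strict --- equality is attained, for instance, when $\vx_1,\vx_2,\vx_3$ are coplanar at mutual angles $2\pi/3$, or when $\vx_1=\vx_3=-\vx_2$. Alternatively, one could simply cite Euclid (Book~\textsc{xi}, Proposition~21), as the statement already indicates.
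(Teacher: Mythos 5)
Your proof is correct. The paper itself gives no argument for this lemma at all: it simply records it as a well-known fact and cites Euclid, Book \textsc{xi}, Proposition 21 (the sum of the plane angles of a solid angle is less than four right angles), so your write-up is necessarily a different route in the sense that it supplies a complete, self-contained proof where the paper relies on a citation. Your reduction is the natural modern rendering of the classical argument: pass to the antipode $-\vx_2$ to convert the $2\pi$ bound into the spherical triangle inequality $\angle(\va,\mathbf{c})\le\angle(\va,\vb)+\angle(\vb,\mathbf{c})$, and prove the latter by splitting $\va$ and $\mathbf{c}$ into components along and orthogonal to $\vb$, which gives $\va\cdot\mathbf{c}\ge\cos(\alpha+\beta)$ and hence the claim by monotonicity of $\cos$ on $[0,\pi]$; your case analysis ($\alpha+\beta\ge\pi$, parallel/antiparallel degeneracies, $\sin\alpha,\sin\beta>0$ needed for the unit normalisation of $\va',\mathbf{c}'$) is complete. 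What your approach buys over the citation is that it works verbatim in $\R^N$ for arbitrary $N$ (no appeal to three-dimensional solid geometry or a reduction to the span of the three vectors), and it makes the equality cases transparent, which the bare reference to Euclid does not; what the paper's choice buys is brevity for a fact it treats as folklore. Either is acceptable; if you wanted to shorten yours, you could also note that the whole statement only concerns the at most three-dimensional subspace spanned by $\vx_1,\vx_2,\vx_3$, and then invoke the classical result directly, which is in effect what the paper does.
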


\paragraph{Graphs and (partial) homomorphisms.}
All graphs will be nonempty, undirected and loopless but with possibly multiple edges.
Fix two  graphs $G = (V, E)$, $H = (U, F)$, with vertex sets $V, U$ and edge multisets $E, F$ respectively, as well as $\rho \in \N$. We say that there exists a 
\emph{partial homomorphism} of weight $\rho$ from $G$ to $H$, and write $G \xrightarrow{\rho} H$, if there exists a mapping $h : V \to U$ such that for at least $\rho$ edges $(x, y) \in E$ we have $(h(x), h(y)) \in F$. If $G \xrightarrow{|E|} H$, we say that there exists a homomorphism from $G$ to $H$ and write $G \to H$. (Note that for any $G, H, I$, if $G \xrightarrow{\rho} H \rightarrow I$  then $G \xrightarrow{\rho} I$.)

We denote by $K_2$ a clique on two vertices.
A partial homomorphism $h:G \xrightarrow{\rho}K_2$ represents a cut of weight $\rho$, namely the edges $(x,y)$ with $h(x)\neq h(y)$. Equivalently, it represents a bipartite subgraph of $G$ with weight $\rho$. We now introduce a graph that similarly captures triangle-free subgraphs.
Let $\G_3$ be the direct sum of all finite triangle-free graphs. In other words, for every finite triangle-free graph $G = (V, E)$, the graph $\G_3$ contains vertices $x_G$ for $x \in V$, and edges $(x_G, y_G)$ for $(x, y) \in E$. Then, for finite $G$, a partial homomorphism $h : G \xrightarrow{\rho} \G_3$ represents a triangle-free subgraph of $G$ with weight $\rho$: all the edges that connect vertices that are mapped by $h$ to neighbouring vertices in $\G_3$ form a triangle-free subgraph of $G$.

\paragraph{Maximum PCSPs.} Fix two (possibly infinite) graphs $G \rightarrow H$.
Then the \emph{maximum promise constraint satisfaction problem} ($\maxPCSP$) for
undirected graphs, denoted by $\maxPCSP(G, H)$, is defined as follows. In the
search version of the problem, we are given a (multi)graph $X$ such that $X
\xrightarrow{\rho} G$, and must find $h :X \xrightarrow{\rho} H$; this problem
can be approximated with the approximation ratio $\alpha$ if we can find $h :
X\xrightarrow{\lceil \alpha \rho\rceil} H$. In the decision version, we are
given a (multi)graph $X$ and a number $\rho \in \N$ and must output \textsc{Yes}
if $X \xrightarrow{\rho}G$, and \textsc{No} if not even $X \xrightarrow{\rho}H$.
This problem can be approximated with approximation ratio $\alpha$ if we can
decide between $X \xrightarrow{\rho}G$ and not even $X \xrightarrow{\lceil\alpha
\rho\rceil} H$.\footnote{In all cases, the \emph{maximum} $\rho$ for which $X
\xrightarrow{\rho} G$ is \emph{not} part of the input; for the decision version
in particular, the number $\rho$ we are given can be arbitrary, it does not
necessarily have any relationship with the graph $X$.}

In particular, approximating the problem $\maxPCSP(K_2, \G_3)$ with approximation ratio $\alpha$ means the following. In the search version: given a graph $G$ that contains a cut of weight $\rho$, find a triangle-free subgraph of weight $\alpha \rho$. In the decision version: given a graph $G$ and a number $\rho \in \N$, output \textsc{Yes} if it has a cut of weight $\rho$, and \textsc{No} if it has no triangle-free subgraph of weight $\alpha \rho$.

We define the problem $\PCSP(G, H)$ identically to $\maxPCSP(G, H)$, except that it is guaranteed that $\rho$ is the number of edges of $G$. Thus observe that $\PCSP(G, H)$ reduces to $\maxPCSP(G, H)$ trivially, in the sense that there is a polynomial-time reduction from $\PCSP(G,H)$ to $\maxPCSP(G,H)$ that does not change the input.
 
Suppose $G \to G' \to H' \to H$. Then, it follows that 
 $\PCSP(G, H)$ polynomial-time reduces to $\PCSP(G', H')$
and $\maxPCSP(G, H)$ polynomial-time reduces to $\maxPCSP(G', H')$ (and the same holds for $\alpha$-approximation).
Furthermore, the decision version of $\PCSP(G,H)$ and $\maxPCSP(G, H)$ polynomial-time reduces to the search version of $\PCSP(G,H)$ and $\maxPCSP(G,H)$, respectively. In other words, the decision version is no harder than the search version. Hence by proving our tractability results for the search version, and our hardness results for the decision version, we prove them for both versions of the problems.

\paragraph{SDP} For the Max-Cut problem, which is just $\maxPCSP(K_2,K_2)$, the
\emph{basic SDP relaxation} for a graph $G = (V, E)$ with $n$ vertices,
which can be solved within additive error $\epsilon$ in polynomial time with
respect to the size of $G$ and $\log(1 / \epsilon)$,\footnote{Throughout we will
assume, as is common in the literature, that numbers are represented with
arbitrary precision.}
is as follows:
\begin{maxi}|s|
{}{\sum_{(u, v) \in E} \frac{1 - \vx_u \cdot \vx_v}{2}} 
{}{}\label{eq:sdp}
  \addConstraint{\norm{\vx_u}^2 = 1}{} 
\addConstraint{\vx_u \in \R^n}. 
\end{maxi}

Goemans and Williamson~\cite{GW95} gave a randomised rounding algorithm for the SDP~\eqref{eq:sdp} with approximation ratio
\[
\alpha_{GW} = \left(\max_{0\leq \tau \leq \pi} \frac{\pi}{2}\frac{1 - \cos \tau}{\tau}\right)^{-1} = 0.878 \cdots,
\]
thus beating the trivial approximation ratio of $1/2$ obtained by, e.g., a random cut. Their algorithm solves the SDP~\eqref{eq:sdp}, selects a uniformly random hyperplane in $\R^n$, and returns the cut induced by the hyperplane.

\section{Results}\label{sec:results}

Our main result is the following theorem (already advertised in the Introduction).

\UGCdichotomy*

This is an optimisation variant of a conjecture by Brakensiek and Guruswami on
the tractability boundary of promise CSPs on undirected graphs.

\begin{conjecture}[\cite{BG21:sicomp}]\label{conjBG}
  Let $G$ and $H$ be two fixed graphs such that there is a homomorphism from $G$ to $H$. 
  If $G$ is bipartite then $\PCSP(G,H)$ is tractable. Otherwise, $\PCSP(G,H)$ is \NP-hard.
\end{conjecture}

The currently known cases supporting~\Cref{conjBG} are
\NP-hardness of $\PCSP(K_3,K_5)$~\cite{BBKO21}, $\PCSP(K_k,K_{\B{k}}-1)$ for
$k\geq 4$~\cite{KOWZ23}, and $\PCSP(C_{2k+1},K_4)$ for $k\geq 1$~\cite{Avvakumo25:stoc}, where $C_{2k+1}$ denotes a cycle on $2k+1$ vertices.
As $\PCSP(G, H)$ reduces  to $\maxPCSP(G, H)$, \Cref{conjBG} implies that
$\maxPCSP(G, H)$ is \NP-hard whenever $G$ is non-bipartite. We establish this
result under the $\UGC$ but \emph{not} relying on~\Cref{conjBG}. It follows from
our~\Cref{thm:UGCdichotomy} that not all cases of $\maxPCSP(G,H)$ with bipartite
$G$ are 1-approximable, and thus the tractability boundary lies elsewhere for 1-approximation.

We note that establishing~\Cref{thm:UGCdichotomy} appears easier than
resolving~\Cref{conjBG}, similarly to how the complexity of exact solvability of 
$\maxCSP$s~\cite{TZ16:jacm} was resolved before the complexity of decision
$\CSP$s~\cite{Bulatov17:focs,Zhuk20:jacm}.

\medskip
An important part in proving~\Cref{thm:UGCdichotomy} is the \NP-hardness of finding a triangle-free subgraph.
For this problem,  we also establish a non-trivial approximation result.

\begin{restatable}{theorem}{main}\label{thm:main}
$\maxPCSP(K_2, \G_3)$ is $0.8823$-approximable (in the search version) in polynomial time, and it is \NP-hard to $(25 / 26+\epsilon)$-approximate (even in the decision version) for any fixed $\epsilon > 0$.
\end{restatable}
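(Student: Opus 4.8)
The plan is to prove the two halves of \Cref{thm:main} by independent arguments: an SDP-based rounding algorithm for the positive part, and a gadget reduction from H\aa{}stad's $3$-bit PCP for the negative part.

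\emph{The algorithm.} First solve the basic SDP~\eqref{eq:sdp} on the input multigraph $X$; since $X$ admits a cut of weight $\rho$, the optimum is at least $\rho$ and equals $\sum_{(u,v)\in E}\frac{1-\cos\tau_{uv}}{2}$, where $\tau_{uv}=\angle(\vx_u,\vx_v)$. Then run, according to a biased coin with probability $p$, one of two rounding procedures, each of which returns a triangle-free subgraph. The first is Goemans--Williamson random-hyperplane rounding, which always returns a cut, hence a bipartite subgraph, and keeps an edge of angle $\tau$ with probability $\tau/\pi$. The second first puts \emph{all} ``long'' edges --- those with $\tau>\theta_1$ --- into the output, then picks a random hyperplane and adds the ``medium'' cut edges (those with $\tau\in(\theta_2,\theta_1]$), discarding the ``short'' edges. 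By \Cref{lem:noTriangles}, three long edges of a triangle have angle sum exceeding $3\theta_1$, two long and one medium edge exceeding $2\theta_1+\theta_2$, and one long and two medium edges exceeding $\theta_1+2\theta_2$; imposing $\theta_1+2\theta_2\ge 2\pi$ (which in particular forces $\theta_1>2\pi/3$) makes each of these exceed $2\pi$, a contradiction, while a triangle with no long edge would need all three of its edges cut by a single hyperplane, which is impossible. Hence the second procedure's output is triangle-free for every hyperplane.

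Consequently, an edge of angle $\tau$ survives the combined algorithm with probability $f(\tau)$ equal to $p\tau/\pi+(1-p)$ if $\tau>\theta_1$, to $\tau/\pi$ if $\tau\in(\theta_2,\theta_1]$, and to $p\tau/\pi$ if $\tau\le\theta_2$. Since the SDP optimum is at least $\rho$, it is enough to pick $p,\theta_1,\theta_2$ (with $\theta_2<\theta_1\le\pi$ and $\theta_1+2\theta_2\ge 2\pi$) so that $f(\tau)\ge 0.8823\cdot\frac{1-\cos\tau}{2}$ for all $\tau\in[0,\pi]$; the expected weight of the output is then at least $0.8823\,\rho$, and since one of the two procedures must already attain this bound in expectation and hyperplane rounding can be derandomised, a deterministic polynomial-time algorithm follows. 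The ratio $\frac{2\tau/\pi}{1-\cos\tau}$ equals $\alpha_{GW}\approx0.878$ only at the Goemans--Williamson angle $\tau^{*}\approx2.33$, which lies above $2\pi/3$; taking $\theta_1<\tau^{*}$ moves $\tau^{*}$ into the ``long'' regime, where the extra $(1-p)$ term lifts the ratio strictly above $\alpha_{GW}$. I expect the hard part of this half to be not the triangle-freeness check but squeezing the worst-case ratio up to $0.8823$: forcing $\theta_1<\tau^{*}$ makes $\theta_2$ large and hence $p$ close to $1$, which in turn weakens the gain in the long regime, so the bottlenecks near $\tau=\theta_1$, near $\tau=\theta_2$, and just past $\tau^{*}$ must be balanced by a careful one-dimensional optimisation, perhaps together with a more refined rounding of the medium edges.

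\emph{The hardness.} I would reduce from H\aa{}stad's theorem that for every $\delta>0$ it is \NP-hard to distinguish systems of three-variable linear equations modulo $2$ that are $(1-\delta)$-satisfiable from those that are at most $(1/2+\delta)$-satisfiable. From such a system $\Phi$, build a multigraph $X$ and a weight $\rho$ by replacing each equation with a fixed finite graph gadget and each variable with a consistency gadget, glued along vertices that represent the variables, so that an assignment satisfying a given equation induces a cut of its gadget of the full weight; this gives completeness, $X\xrightarrow{\rho}K_2$, whenever $\Phi$ is $(1-\delta)$-satisfiable. For soundness one needs that when $\Phi$ is at most $(1/2+\delta)$-satisfiable, every \emph{triangle-free} subgraph of $X$ has weight below $(25/26+\epsilon)\rho$. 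This is exactly where the problem parts ways with classical Max-Cut inapproximability: a triangle-free subgraph carries no global two-colouring, so the gadget must be engineered so that its near-maximal triangle-free subgraphs are forced to be essentially bipartite and consistent on the boundary variables --- only then can a heavy triangle-free subgraph of $X$ be decoded into an assignment satisfying more than a $(1/2+\delta)$-fraction of $\Phi$, contradicting H\aa{}stad. The exact constant $25/26$ should then fall out of optimising the size and the ``deficiency'' of the gadget, and since the decision version reduces to the search version this yields \NP-hardness of $(25/26+\epsilon)$-approximation for both versions. I expect the main obstacle here to be the design of a single gadget that is simultaneously robust against triangle-free (not merely bipartite) subgraphs and tight enough to deliver the ratio $25/26$.
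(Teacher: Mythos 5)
Your algorithmic half follows the paper's approach almost exactly: the same two thresholds (the paper takes $\theta_1=\tau$ and $\theta_2=\pi-\tau/2$, so your constraint $\theta_1+2\theta_2\ge 2\pi$ holds with equality), the same triangle-freeness argument via \Cref{lem:noTriangles} plus bipartiteness of the hyperplane cut, the same three-regime survival probability $f$, and the same derandomisation. What is missing is the actual content of the quantitative claim: you never exhibit $p,\theta_1,\theta_2$ for which $f(\tau)\ge 0.8823\,\frac{1-\cos\tau}{2}$ holds on all of $[0,\pi]$, and you explicitly hedge that a ``more refined rounding'' might be needed. The paper closes this by \Cref{lem:bounds}: it reduces the three constraints to two explicit upper bounds on $\alpha$ as functions of $\tau$ (tightness at $\psi=\pi-\tau/2$, at $\phi=\tau$, and at $\theta=\pi-\arcsin X(\tau)$), then optimises numerically to get $\tau=2.18746$, $P\approx 0.9875$, $\alpha\ge 0.88232$. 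Without this (or an equivalent verification), the constant $0.8823$ is unproven; also, the small SDP/derandomisation errors $\epsilon,\epsilon'$ have to be absorbed, which the paper does using $\rho\ge 1/2$ and the slack between $0.88232$ and $0.8823$.

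The hardness half has a more substantial gap. You correctly identify the strategy (gadget reduction from H\aa{}stad with soundness against triangle-free rather than bipartite subgraphs), but the constant $25/26$ does not ``fall out'' of an unspecified optimisation: it equals $1-1/(\alpha_0+\alpha_1)$ for two concrete gadgets the paper constructs, a weighted $K_5$ of performance $9$ for equations $x+y+z\equiv 1$ and a weighted $K_6$ of performance $17$ for equations $x+y+z\equiv 0$, and your proposal exhibits neither. More importantly, you miss the obstruction the paper flags at the outset: unlike Max-Cut reductions, one cannot freely negate variables, because negation is not preserved by triangle-freeness; this is why a \emph{single} equation gadget does not suffice, why two gadgets of different parities (and different sizes) are needed, and why the source problem must be replaced by the balanced variant $\belin$ (\Cref{lem:source}) so that the completeness accounting over gadgets of sizes $9$ and $17$ works out. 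Finally, your soundness decoding (``near-maximal triangle-free subgraphs are essentially bipartite and consistent on the boundary'') is too weak as stated: the paper's \Cref{def:gadget} additionally requires that any heavy triangle-free subgraph connect the distinguished vertex $0$ to $x,y,z$ by paths of length at most $2$ with the right parity, and the global assignment is decoded by shortest-path parity from a single shared vertex $0'$, with \Cref{lem:bins} giving the count of ``good'' constraints. Without the shared reference vertex, the distance/parity conditions, and the two explicit gadgets, the reduction cannot be carried out and the factor $25/26$ is unsupported.
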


Note that $0.8823 > 0.878\ldots$, thus our algorithm beats the Goemans-Williamson algorithm.
\Cref{thm:main} is proved in two parts: the tractability side
in~\Cref{sec:tractability} and the hardness side in~\Cref{sec:hardness}. We quickly give an intuitive explanation of why such an algorithm is possible.
Define
\[
\tau_{GW} = 
\arg \max_{0\leq \tau \leq \pi} \frac{\pi}{2}\frac{1 - \cos \tau}{\tau} \approx 0.742 \pi.
\]
The function $\tau \mapsto (\pi / 2) (1 - \cos \tau) / \tau$, depicted in~\Cref{fig:GW}, is increasing up to $\tau_{GW}$, then decreasing.
\begin{figure}[thbp]
    \centering
    \begin{tikzpicture}[scale=0.8,
    declare function={
    mycos(\t) = cos(180 * \t / pi);
    }
    ]
    \begin{axis}[
        xmin=0,xmax=pi,
        ymin=0,ymax=1.2,
        axis x line=middle,
        axis y line=middle,
        axis line style=<->,
        xlabel={$\tau$},
        ylabel={$\alpha$},
        extra x ticks = {2.3311,pi},
        extra x tick labels = {$\tau_{GW}$, $\pi$},
        extra y ticks = {1/0.878},
        extra y tick labels = {$\alpha_{GW}^{-1}$},
        grid=both,
        width=\textwidth*0.7,
           legend style={at={(0.65,.8)},anchor=north west},
        ]
        \addplot[no marks,myBrightBlue,thick] expression[domain=0:pi,samples=300]{(pi / 2) * (1 - mycos(x)) / x};
        \addlegendentry{$\tau \mapsto \frac{\pi}{2} \frac{1 - \cos \tau}{\tau}$}
        
    \end{axis}
    \end{tikzpicture}
    \caption{Function giving rise to $\alpha_{GW}, \tau_{GW}$.}
    \label{fig:GW}
\end{figure}
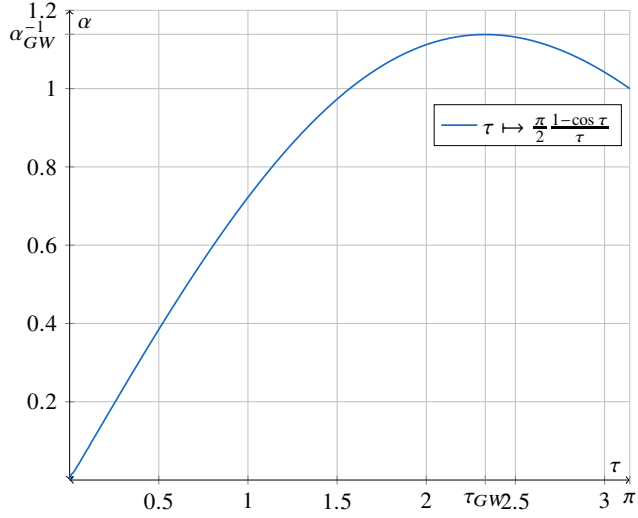
Why would $\maxPCSP(K_2, \G_3)$ be easier to approximate than $\maxPCSP(K_2, K_2)$? Consider the Goemans-Williamson algorithm for $\maxPCSP(K_2, K_2)$; the worst-case performance of this algorithm appears in a graph where the embedding into $\R^n$ given by solving the SDP \eqref{eq:sdp} gives all edges an angle of approximately $\tau_{GW}$. Observe that $\tau_{GW} > 2\pi /3$ --- so immediately (cf.~\Cref{lem:noTriangles}) this instance is triangle-free. So, for this instance an algorithm for $\maxPCSP(K_2, \G_3)$ could just return the entire graph! Indeed, in order to create an instance that contains triangles
one needs to introduce shorter edges. This suggests that a hybrid algorithm, that either selects ``long edges'' or some appropriate selection of ``shorter edges''  (still longer than some threshold), should have better performance.
The details can be found in~\Cref{sec:tractability}, while the \NP-hardness part
of~\Cref{thm:main} is proved in~\Cref{sec:hardness}.

\medskip
Our next result is a combination of
the Goemans-Williamson SDP for Max-Cut~\cite{GW95} and a rounding scheme due to Frieze
and Jerrum for Max-3-Cut~\cite{FJ97}.

\begin{restatable}{theorem}{thmktwokthree}\label{thm:k2k3}
$\maxPCSP(K_2, K_3)$ is 1-approximable in polynomial time (in the search version).
\end{restatable}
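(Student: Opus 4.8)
The plan is to solve the Goemans--Williamson Max-Cut SDP~\eqref{eq:sdp} for the given graph $X$ and then round its solution into three colour classes using the Frieze--Jerrum rounding scheme~\cite{FJ97} for Max-$3$-Cut. Recall that an instance of $\maxPCSP(K_2,K_3)$ is a (multi)graph $X$ with a cut of weight $\rho$, i.e.\ with $X\xrightarrow{\rho}K_2$, and the task is to produce $h\colon X\xrightarrow{\rho}K_3$, that is, a $3$-colouring of $V(X)$ with at least $\rho$ bichromatic edges. Since any cut witnessing $X\xrightarrow{\rho}K_2$ is a feasible solution of~\eqref{eq:sdp} of objective value at least $\rho$, the optimum of~\eqref{eq:sdp} for $X$ is at least $\rho$; solving~\eqref{eq:sdp} we obtain unit vectors $\{\vx_u\}_{u\in V(X)}\subseteq\R^n$ with $\sum_{(u,v)\in E(X)}\tfrac{1-\vx_u\cdot\vx_v}{2}\ge\rho$. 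We then sample three independent standard Gaussian vectors $g_1,g_2,g_3\in\R^n$ and colour each vertex $u$ by $c(u)=\arg\max_{\ell\in\{1,2,3\}}(g_\ell\cdot\vx_u)$ (ties forming a null event). An edge $(u,v)$ is bichromatic under $c$ exactly when $c(u)\ne c(v)$, so everything reduces to the per-edge inequality
\[
P(\theta)\;:=\;\Pr[\,c(u)\ne c(v)\,]\;\ge\;\frac{1-\cos\theta}{2},\qquad \theta=\angle(\vx_u,\vx_v)\in[0,\pi].
\]
Granting this, the expected number of bichromatic edges is at least $\sum_{(u,v)\in E(X)}\tfrac{1-\vx_u\cdot\vx_v}{2}\ge\rho$, so some outcome has at least $\rho$ bichromatic edges; a deterministic polynomial-time algorithm then follows by the usual derandomisation of Gaussian SDP roundings (method of conditional expectations on a discretised Gaussian), and even a naive repetition of the randomised rounding works since its success probability is at least $1/|E(X)|$.

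It remains to prove the per-edge bound. By rotational invariance, $P(\theta)$ depends only on the angle $\theta$, and conditioning on $g_1$ yields the closed form $P(\theta)=1-3\,\mathbb{E}[\Phi_\theta(a,b)^2]$, where $(a,b)$ is a centred Gaussian vector with unit variances and covariance $\cos\theta$ and $\Phi_\theta$ is its cumulative distribution function; equivalently, the claim reads $3\,\mathbb{E}[\Phi_\theta(a,b)^2]\le\cos^2(\theta/2)$. For $\theta\in[0,2\pi/3]$ this is precisely the regime handled by the Frieze--Jerrum rounding lemma for Max-$3$-Cut: there one has the pointwise estimate $P(\theta)\ge\alpha_{FJ}\cdot\tfrac23(1-\cos\theta)$, where $\alpha_{FJ}$ is the Frieze--Jerrum approximation ratio for Max-$3$-Cut, which is known to exceed $3/4$. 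Since $\tfrac34\cdot\tfrac23(1-\cos\theta)=\tfrac{1-\cos\theta}{2}$, the desired inequality follows on $[0,2\pi/3]$.

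The remaining range $\theta\in(2\pi/3,\pi]$ is not covered by~\cite{FJ97} --- their SDP never produces such angles --- and is where I expect the only genuine work to lie. At the endpoints the inequality is (essentially) tight: $P(\pi)=1=\tfrac{1-\cos\pi}{2}$, because antipodal vectors always receive distinct $\arg\max$; and $P(2\pi/3)\ge\alpha_{FJ}>3/4=\tfrac{1-\cos(2\pi/3)}{2}$. One cannot conclude by convexity alone, since on this interval $\theta\mapsto\tfrac{1-\cos\theta}{2}$ is concave and lies strictly above the chord through its endpoint values, so a more careful pointwise argument is needed. I would verify the one-variable inequality $3\,\mathbb{E}[\Phi_\theta(a,b)^2]\le\cos^2(\theta/2)$ directly, via elementary estimates of the bivariate Gaussian orthant probability $\Phi_\theta(a,b)$ in the strongly anticorrelated regime $\cos\theta\le-\tfrac12$ (if a clean closed form is awkward, a rigorous check on a fine grid together with an easy Lipschitz bound on $P$ suffices). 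This one-dimensional estimate is the main obstacle; the remaining ingredients --- the closed form for $P$, the Frieze--Jerrum rounding lemma, the endpoint computations, and the derandomisation --- are routine, and the details are deferred to \Cref{app:23}.
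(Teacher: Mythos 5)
Your overall route is the same as the paper's (solve the Goemans--Williamson SDP \eqref{eq:sdp}, round with three Gaussians and an $\arg\max$ as in Frieze--Jerrum, reduce everything to the per-edge inequality $\Pr[c(u)\neq c(v)]\geq\frac{1-\cos\theta}{2}$, then derandomise), and your reduction to $3\,\mathbb{E}[\Phi_\theta(a,b)^2]\le\cos^2(\theta/2)$ and the split at $\theta=2\pi/3$ are sound. The genuine gap is exactly the step you yourself flag: the per-edge inequality on $(2\pi/3,\pi]$ is asserted as a plan, not proved. Moreover, your proposed fallback (``a rigorous check on a fine grid together with an easy Lipschitz bound on $P$'') cannot work as stated, because the inequality is \emph{tight} at $\theta=\pi$ (both sides equal $1$), so the margin vanishes at the endpoint and no finite grid plus Lipschitz estimate can certify the inequality on a neighbourhood of $\pi$; one needs a first-order/monotonicity argument there. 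This is precisely the content the paper supplies: it first gets an exact closed form for the orthant probability via Cheng's formula (\Cref{thm:cheng}, \Cref{lem:prExpr}), giving $P(\alpha)$ in terms of $\arcsin\alpha$ and $\arcsin(\alpha/2)$, and then proves $1-3P(\alpha)\geq\frac{1-\alpha}{2}$ on all of $[-1,1]$ in \Cref{lem:ineq} by a sign analysis of $f'''$, concavity, Jensen, and the endpoint values $f(-1)=f(1)=0$ --- i.e.\ a monotonicity argument near the tight endpoints, not a margin argument. (Your handling of $[0,2\pi/3]$ by citing the Frieze--Jerrum pointwise ratio $\alpha_{FJ}>3/4$ is fine, but the paper does not need this case split since its single lemma covers the whole range.)

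Two smaller points. First, the theorem claims a (deterministic) polynomial-time algorithm; ``naive repetition'' only gives a randomised algorithm (and you cannot even certify success without knowing $\rho$), and the ``usual derandomisation'' of an $\arg\max$-of-Gaussians rounding by conditional expectations is not off-the-shelf: the paper devotes \Cref{thm:derandGaussian}, a Berry--Esseen-type discretisation by normalised binomials, and a step-function construction to make the conditional quantities computable. Second, the SDP is only solved to additive error, so you get value $\geq\rho-\epsilon$, not $\geq\rho$; the paper absorbs the accumulated losses (SDP error plus derandomisation error, totalling $2/3<1$) using integrality of the number of bichromatic edges. Neither of these is fatal to your outline, but together with the unproven inequality on $(2\pi/3,\pi]$ --- which is the actual mathematical content of the theorem --- the proposal falls short of a proof.
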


This algorithm is somewhat similar to the Goemans-Williamson algorithm, except
that rather than selecting a uniformly random hyperplane, it selects three
normally distributed vectors, and partitions the vertices according to which
vector they are closest to (where closeness is measured in terms of inner
products). Thus, while this algorithm solves the same SDP as the algorithm
from~\Cref{thm:main}, it rounds the solution differently. This rounding scheme is the same as the rounding scheme of~\cite{FJ97}; it is also very similar to one of the rounding schemes of~\cite{Karger98:jacm}. The details can be
found in~\Cref{sec:23}.

\medskip

The key in proving \NP-hardness in~\Cref{thm:UGCdichotomy} is the following
result, established by generalising the proof of Dinur, Mossel, and
Regev\cite{Dinur09:sicomp} with a multilayered unique games conjecture, in the
style of~\cite{BWZ21}. The details can be found in~\Cref{sec:ugchardness}.

\begin{restatable}{theorem}{thmUGChardness} \label{thm:ugcHardness}
    For every $k\geq 1$ and $\ell \geq 3$, 1-approximation of $\maxPCSP(C_{2k+1},
    K_\ell$) is \NP-hard assuming the \UGC.
\end{restatable}

We now have all tools to prove~\Cref{thm:UGCdichotomy}.

\begin{proof}[Proof of~\Cref{thm:UGCdichotomy}]
    If $G$ is bipartite and $H$ contains a triangle then $G \to K_2 \to K_3 \to
    H$, so it follows that $\maxPCSP(G, H)$  reduces to $\maxPCSP(K_2, K_3)$,
    which is 1-approximable by~\Cref{thm:k2k3}. 
    
    If $G$ is bipartite and $H$ is triangle-free then $K_2 \to G \to H \to
    \G_3$, so $\maxPCSP(K_2, \G_3)$ reduces to $\maxPCSP(G, H)$. Hence,
    by~\Cref{thm:main}, it is \NP-hard even to approximate $\maxPCSP(G, H)$ with
    approximation ratio $25 / 26 + \epsilon$ (in the decision version), and thus
    in particular 1-approximation of $\maxPCSP(G, H)$ is \NP-hard (in the decision version).

    If $G$ is not bipartite then we have $C_{2k + 1} \to G$ for some $k$, and,
    since $H$ is loopless, we have $H \to K_\ell$ for some $\ell \geq 3$. Hence
    \NP-hardness of 1-approximation under the \UGC{} follows
    from~\Cref{thm:ugcHardness}.
\end{proof}

We note that our hardness result depends in an essential way on the fact that the input graph can have multiple edges; we equivalently could have allowed non-negative integer weights on the edges. This variant of the problem is most natural when looking at it as a constraint satisfaction problem. It is  interesting to ask what the complexity of $\maxPCSP(K_2, \G_3)$ is if the input graph is  both weightless and without multiple edges.

\section{\texorpdfstring{Approximation of $\maxPCSP(K_2, {\mathfrak{G}}_3)$}{Approximation of maxPCSP(K2,G3)}}
\label{sec:tractability}

In this section, we will prove the tractability part of the following result.

\main*

We will need the following technical lemma.

\begin{lemma}\label{lem:bounds}
    There exist $\alpha \geq 0.88232, P, Q, \tau \in \R$ with $P + Q = 1, P \geq 0, Q \geq 0$, $\tau \in [2 \pi / 3, \tau_{GW}]$, such that the following hold
\begin{align}
    P \frac{\theta}{\pi} + Q \geq \alpha \frac{1 - \cos\theta}{2} && \theta \in [\tau, \pi],\label{eq:cons1}\\
    \frac{\phi}{\pi}\ =\ P \frac{\phi}{\pi} + Q \frac{\phi}{\pi} \geq \alpha
    \frac{1 - \cos\phi}{2} && \phi \in [\pi - \tau/2, \tau],\label{eq:cons2}\\
    P \frac{\psi}{\pi} \geq \alpha \frac{1 - \cos \psi}{2} && \psi \in [0, \pi- \tau/2].\label{eq:cons3}
\end{align}
In particular, we can take $\tau = 2.18746$, $Q = 1 - P$ and
\[
P = \frac{\alpha \pi}{2} \left(\frac{1 - \cos(\pi - \tau / 2)}{\pi - \tau / 2} \right) \approx 0.987535.
\]
\end{lemma}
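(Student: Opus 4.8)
The plan is to prove the lemma \emph{constructively}: exhibit the explicit values $\alpha = 0.88232$, $\tau = 2.18746$, $P = \tfrac{\alpha\pi}{2}\cdot\tfrac{1-\cos(\pi-\tau/2)}{\pi-\tau/2}$, $Q = 1-P$, and check the three constraint families one by one. Since $P+Q=1$ by construction and $\tau \in [2\pi/3,\tau_{GW}]$ is immediate from $2.0944 \le 2.18746 \le 0.742\pi$, the whole task reduces to verifying \eqref{eq:cons1}, \eqref{eq:cons2}, \eqref{eq:cons3} together with $P,Q\ge 0$. Each of the three inequalities is a single-variable statement that I would reduce, by convexity or monotonicity, to a bounded number of point evaluations.

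First, \eqref{eq:cons2}. Because $P+Q=1$ it reads $\tfrac{\phi}{\pi}\ge\alpha\tfrac{1-\cos\phi}{2}$, i.e.\ $\alpha\le\tfrac{2\phi}{\pi(1-\cos\phi)}$; this is just the Goemans--Williamson inequality for ratio $\alpha$ on the subinterval $\phi\in[\pi-\tau/2,\tau]$. Recall from \Cref{fig:GW} that $t\mapsto\tfrac{\pi}{2}\tfrac{1-\cos t}{t}$ is increasing on $[0,\tau_{GW}]$, hence $t\mapsto\tfrac{2t}{\pi(1-\cos t)}$ is decreasing there; since $\tau\le\tau_{GW}$ the interval $[\pi-\tau/2,\tau]$ lies inside $[0,\tau_{GW}]$, so the right-hand side is minimised at $\phi=\tau$. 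Thus \eqref{eq:cons2} is equivalent to the single numerical inequality $\alpha\le\tfrac{2\tau}{\pi(1-\cos\tau)}$.

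Second, \eqref{eq:cons3}. For $\psi>0$ divide by $\psi/\pi>0$ to rewrite it as $P\ge\tfrac{\alpha\pi}{2}\cdot\tfrac{1-\cos\psi}{\psi}$ (the case $\psi=0$ reads $0\ge0$). From $\tau\ge 2\pi/3$ we get $\pi-\tau/2\le 2\pi/3<\tau_{GW}$, so on $[0,\pi-\tau/2]$ the function $\psi\mapsto\tfrac{1-\cos\psi}{\psi}$ is increasing, and the binding case is $\psi=\pi-\tau/2$. But $P$ was \emph{defined} to make equality hold exactly there, so \eqref{eq:cons3} holds; it remains only to confirm $0\le P\le 1$, which gives $Q=1-P\ge 0$ as well. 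Third, and this is the crux, \eqref{eq:cons1}. Substituting $Q=1-P$, put
\[
f(\theta)\;=\;P\frac{\theta}{\pi}+(1-P)-\alpha\,\frac{1-\cos\theta}{2},
\]
so that \eqref{eq:cons1} says $f\ge 0$ on $[\tau,\pi]$. Now $f''(\theta)=-\tfrac{\alpha}{2}\cos\theta>0$ for $\theta\in[\tau,\pi]\subseteq(\pi/2,\pi]$, using $\tau\ge 2\pi/3>\pi/2$; hence $f$ is convex on $[\tau,\pi]$, and $f'(\theta)=\tfrac{P}{\pi}-\tfrac{\alpha}{2}\sin\theta$ has a unique zero $\theta^{\star}\in(\pi/2,\pi)$ with $\sin\theta^{\star}=\tfrac{2P}{\alpha\pi}$, i.e.\ $\theta^{\star}=\pi-\arcsin\!\big(\tfrac{2P}{\alpha\pi}\big)$. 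So the minimum of $f$ over $[\tau,\pi]$ is attained at $\theta^{\star}$ if $\theta^{\star}\in[\tau,\pi]$ and otherwise at an endpoint; one therefore checks $f(\pi)=1-\alpha\ge 0$, $f(\tau)\ge 0$, and $f(\theta^{\star})\ge 0$, which together give $f\ge 0$ on $[\tau,\pi]$ and complete the proof.

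I expect the only real work to be the final numerical verifications. The values were evidently tuned so that \eqref{eq:cons1} at $\theta^{\star}$ and \eqref{eq:cons2} at $\phi=\tau$ are essentially tight (the margins there are of order $10^{-4}$ and $10^{-5}$ respectively), so these two must be certified with rigorous bounds on $\cos$ and $\arcsin$ — e.g.\ via interval arithmetic — whereas the remaining checks ($f(\pi)\ge 0$, $f(\tau)\ge 0$, $0\le P\le 1$, and the monotonicity facts about $\tfrac{1-\cos t}{t}$, which follow from the known shape of the Goemans--Williamson curve in \Cref{fig:GW}) hold with comfortable slack. In spirit, $\alpha$ is taken as large as possible subject to \eqref{eq:cons1} at $\theta^{\star}$ and \eqref{eq:cons2} at $\tau$, and $\tau$ is chosen to balance these two; but for the lemma as stated it suffices to certify that the one exhibited choice is feasible.
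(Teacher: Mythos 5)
Your proposal is correct and follows essentially the same route as the paper: constraint \eqref{eq:cons3} is made tight at $\psi=\pi-\tau/2$ (which is exactly how $P$ is defined), \eqref{eq:cons2} reduces via the Goemans--Williamson monotonicity to the single check at $\phi=\tau$, and \eqref{eq:cons1} is minimised at $\theta^\star=\pi-\arcsin\bigl(\tfrac{2P}{\alpha\pi}\bigr)$, which coincides with the paper's $\pi-\arcsin X(\tau)$ since $\tfrac{2P}{\alpha\pi}=X(\tau)$. The only difference is presentational (the paper treats $\alpha,\tau$ as parameters, derives the two bounds on $\alpha$, and optimises $\tau$ numerically, whereas you plug in the exhibited values and verify feasibility, with the same final numerical checks), so no further comment is needed.
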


The existence of these constants will be of key importance to our approximation algorithm; in particular, $\alpha$ will turn out to be the approximation ratio of our algorithm; we therefore are trying to make $\alpha$ as large as possible.

\begin{proof}
    Observe that the function $(1 - \cos x) / x$ increases monotonically for $0
    \leq x \leq \tau_{GW}$, and decreases monotonically for $\tau_{GW} \leq x
    \leq \pi$ (cf.~\Cref{fig:GW}). Note that the bound~\eqref{eq:cons3} is
    equivalently a bound of the form
    \[
    \frac{2P}{\pi \alpha} \geq \frac{1 - \cos \psi}{\psi}
    \]
    for $\psi \in [0, \pi - \tau / 2]$. Noting that $\tau \geq 2\pi / 3$, and hence $\pi - \tau / 2 \leq 2\pi / 3 \leq \tau_{GW}$, by monotonicity this bound holds for all $\psi \in [0, \pi - \tau / 2]$ if and only if it holds for $\psi = \pi - \tau / 2$, i.e.~if and only if
    \[
    P\geq \frac{\alpha\pi}{2} \underbrace{\left(\frac{1 - \cos (\pi - \tau/2)}{\pi - \tau/2} \right)}_{X(\tau)}.
    \]

    For~\eqref{eq:cons2}, since we will take $\tau
    \leq \tau_{GW}$, by a similar monotonicity argument~\eqref{eq:cons2} holds for all $\phi \in [\pi - \tau / 2, \tau]$ if and only if it holds for $\phi = \tau$, i.e.
    \[
    \alpha \leq \frac{2}{\pi} \frac{\tau}{1 - \cos \tau}.
    \]

    For \eqref{eq:cons1} we see that it becomes easier to satisfy if $P$ is smaller and $Q$ is larger, so we will choose $P$ to be the minimum value it could have vis-à-vis \eqref{eq:cons2} and \eqref{eq:cons3}, namely
    \[
    P = \frac{\alpha \pi}{2} X(\tau).
    \]
    Hence, as $P + Q = 1$, the first constraint becomes
    \[\frac{\alpha\theta}{2}  X(\tau)
     + 1 -  \frac{\alpha\pi}{2} X(\tau)
    \geq \alpha \frac{1 - \cos \theta}{2}.
    \]
    Simplifying, we get
    \[\alpha \frac{\theta - \pi }{2}X(\tau)
     + 1 
      \geq \alpha \frac{1 - \cos \theta}{2}.
    \]
    This is equivalent to
    \[
    \alpha \leq \frac{2}{(\pi - \theta) X(\tau) - \cos \theta + 1}.
    \]
    Separating out $\alpha$, we have
    \[
    \alpha \leq \min_{\theta} \frac{2}{(\pi - \theta) X(\tau) - \cos \theta + 1}
    =
    2 \left(\max_{\theta}(\pi - \theta) X(\tau) - \cos \theta + 1 \right)^{-1}.
    \]

    Now we can observe that $0.75 > X(\tau) > 0.6$ for $2\pi / 3 \leq \tau \leq \pi$ numerically. The derivative of the function we are maximising is $- X(\tau) + \sin \theta$ --- this is necessarily positive at $\theta = 2 \pi / 3$ and negative for $\theta = \pi$, because of the approximation of $X(\tau)$ above. So the maximum is hit at $\pi - \arcsin X(\tau)$, as this is the unique point where this derivative is 0 within the bound for $\theta$. Thus the bound is
    \[
    \alpha \leq \frac{2}{X(\tau) \arcsin X(\tau)  + \cos \arcsin X(\tau) + 1}.
    \]
    So we now want to find $\tau$ that maximises $\alpha$ such that
    \begin{align}
    \alpha & \leq \frac{2}{X(\tau) \arcsin X(\tau)  + \cos \arcsin X(\tau) + 1}, \label{eq:bound1}\\
    \alpha & \leq \frac{2}{\pi} \frac{\tau}{1 - \cos \tau}. \label{eq:bound2}
    \end{align}

    We can see this situation in~\Cref{fig:bounds}. Numerically we compute that if we choose $\tau = 2.18746$, then we get $\alpha \geq 0.88232$.
    \begin{figure}
    \centering
    \begin{tikzpicture}[
    declare function={
    mycos(\t) = cos(180 * \t / pi);
    X(\t) =(1 - mycos(pi - \t / 2)) / (pi - \t / 2);
    f2(\t) = (2 / pi) * \t / (1 - mycos(\t));
    f1(\t) = 2 / (rad(asin(X(\t))) * X(\t) + mycos(rad(asin(X(\t)))) + 1);},
    ]
    \begin{axis}[
        xmin=2.0944,xmax=2.331122375858596,
        ymin=0.870,ymax=0.9,
        axis x line=middle,
        axis y line=middle,
        axis line style=<->,
        y tick label style={/pgf/number format/.cd,fixed,fixed zerofill,precision=4},
        xlabel={$\tau$},
        ylabel={$\alpha$},
        extra x ticks = {2.18746},
        extra y ticks = {0.8823198361},
        xtick distance=0.06,
        grid=both,
        width=\textwidth*0.7
        ]
        \addplot[name path=f1,no marks,myBrightBlue,thick, forget plot] expression[domain=2.0944:2.331122375858596,samples=300]{f1(x)};
        \addplot[name path = f2,no marks,myBrightGreen,thick, forget plot] expression[domain=2.0944:2.331122375858596,samples=300]{f2(x)};

        \path[name path=axis] (axis cs:2.0944,0) -- (axis cs:2.331122375858,0);

       \addplot [
            pattern color=myBrightBlue,
            pattern=north east lines,
            area legend
        ]
        fill between[of=f1 and axis]; 
        \addlegendentry{Bound \eqref{eq:bound1}}
        
       \addplot [
            pattern color=myBrightGreen,
            pattern=north west lines,
            area legend
        ]
        fill between[of=f2 and axis]; 
        \addlegendentry{Bound \eqref{eq:bound2}}

        \addplot [black, mark = *, nodes near coords={Optimum},every node near coord/.style={anchor=270}] coordinates {(2.18746,0.8823198361)};
    \end{axis}
\end{tikzpicture}
    \caption{Bounds from~\Cref{lem:bounds}}
    \label{fig:bounds}
\end{figure}
\end{proof}

We now prove the desired tractability result.%
\footnote{We remark in passing that \emph{no} SDP-based algorithm can have performance greater than $8/9 = 0.888\ldots$, since for the triangle $K_3$ the SDP value is $9/4$, yet the largest triangle-free subgraph has weight $2$ (and $2 / (9/4) = 8/9$).}

\begin{theorem}\label{thm:approx}
    $\maxPCSP(K_2, \G_3)$ can be $0.8823$-approximated in polynomial time.
\end{theorem}
\begin{proof}
    Take $\alpha, P, Q, \tau$ as in~\Cref{lem:bounds}. Consider the following randomised algorithm.
    \begin{algoEnum}
        \item Input: a graph $G = (V, E)$, which admits an (unknown) maximum cut of weight $\rho$.
        \item Solving SDP \eqref{eq:sdp} to within error $\epsilon$, we get a set of vectors $\vx_v$ for $v \in V$, with $\norm{\vx_v}^2 = 1$ and 
        \[
        \sum_{(u, v) \in E} \frac{1 - \vx_u \cdot \vx_v}{2} \geq \rho - \epsilon.
        \]
        \item Flip a biased coin, randomly choosing from the following two cases.
        \begin{algoEnum}
        \item\label{case:cut} With probability $P$, sample a uniformly random hyperplane $H$, and compute the set of edges $(u, v)$ with $\vx_u$ on the opposite side of $H$ as $\vx_v$. Return this set of edges.
        \item\label{case:other} With probability $Q$, return all the edges $(u, v)$ with $\angle(\vx_u, \vx_v) > \tau$, then sample a uniformly random hyperplane $H$ and additionally return all the edges $(u, v)$ with $\angle(\vx_u, \vx_v) \geq \pi - \tau / 2$ and with $\vx_u, \vx_v$ on opposite sides of $H$.
        \end{algoEnum}
    \end{algoEnum}
    First, let us verify that our algorithm returns a triangle-free subgraph. First, in~\Cref{case:cut}, we return a bipartite subgraph, so we certainly return a triangle-free subgraph. The reasoning for~\Cref{case:other} is more geometric. Consider any three edges returned in this case. If the three edges are of angle between $\pi - \tau / 2$ and $\tau$, then they cannot form a triangle, since the edges of this kind that we return form a bipartite graph (determined by $H$). Conversely, suppose that at least one edge has angle greater than $\tau$, and the other two have angle at least $\pi - \tau / 2$. Then the sum of the angles of the edges are $> 2\pi$, and hence by~\Cref{lem:noTriangles}, they cannot form a triangle.

    Now, we compute the expected performance of our algorithm. Recall that for a uniformly random hyperplane $H$, two unit vectors $\vx, \vy$ are on opposite sides of $H$ with probability $\angle(\vx, \vy) / \pi$~\cite{GW95}.
    Consider any edge $(u, v)$ in our graph. We will show that the probability that the edge is included in the cut is at least $\alpha (1 - \vx_u \cdot \vx_v) / 2 = \alpha (1 - \cos \angle(\vx_u, \vx_v))/ 2$; there are three cases depending on $\angle(\vx_u, \vx_v)$.

    \begin{description}
    \item[$\bm{\angle(\vx_u, \vx_v) \in (\tau, \pi]}$.]
    In this case, the edge is included with probability $\angle(\vx_u, \vx_v) / \pi$ in~\Cref{case:cut}, and with probability 1 in~\Cref{case:other}. Thus, and by applying \eqref{eq:cons1} from~\Cref{lem:bounds}, we find that the edge is included with probability
    \[
    P \frac{\angle(\vx_u, \vx_v)}{\pi} + Q \geq \alpha\frac{1 - \cos \angle(\vx_u, \vx_v)}{2}.
    \]
    
    \item[$\bm{\angle(\vx_u, \vx_v) \in [\pi - \tau / 2, \tau]}$.]
    In this case, the edge is included with probability $\angle(\vx_u, \vx_v) / \pi$ both in~\Cref{case:cut} and in~\Cref{case:other}. Thus, by \eqref{eq:cons2} from~\Cref{lem:bounds}, the edge is included with probability
    \[
    P \frac{\angle(\vx_u, \vx_v)}{\pi} + Q
    \frac{\angle(\vx_u, \vx_v)}{\pi} \geq \alpha\frac{1 - \cos \angle(\vx_u, \vx_v)}{2}.
    \]
    
    \item[$\bm{\angle(\vx_u, \vx_v) \in [0, \pi - \tau / 2)}$.]
    In this case, the edge is included with probability $\angle(\vx_u, \vx_v) / \pi$ in~\Cref{case:cut} and never included in~\Cref{case:other}. Thus, by~\eqref{eq:cons3} from~\Cref{lem:bounds}, the edge is included with probability
    \[
    P \frac{\angle(\vx_u, \vx_v)}{\pi} \geq \alpha\frac{1 - \cos \angle(\vx_u, \vx_v)}{2}.
    \]
    \end{description}

    Hence overall we include an edge $(u, v)$ with probability at least $\alpha (1 - \vx_u \cdot \vx_v) / 2$. Since
    \[
    \sum_{(u, v) \in E} \frac{1 - \vx_u \cdot \vx_v}{2} \geq \rho - \epsilon,
    \]
    it follows that we return a triangle-free subgraph with expected weight $\alpha (\rho - \epsilon)$.

    Now, we derandomise our algorithm. First, rather than randomly choosing between~\Cref{case:cut} and~\Cref{case:other}, we can just run both cases and return the better of the two solutions. Second, using the techniques of~\cite{Mahajan99:sicomp} (or, more efficiently~\cite{BK05}), we can derandomise the choice of random hyperplane in polynomial time, at the cost of losing $\epsilon'$ potential value, in polynomial time in $\log(1 / \epsilon')$. Hence, in polynomial time in the size of the graph and $\epsilon, \epsilon'$ we return a triangle-free subgraph with weight $\alpha (\rho - \epsilon) - \epsilon' \geq \alpha \rho - (\epsilon + \epsilon')$.

    To complete the proof, note that we take $\alpha \geq 0.88232$, but we only advertise an approximation ratio of $0.8823$. Recalling that any graph admits a cut with at least half its edges, we see that $\rho \geq |E| / 2$. Hence, the subgraph we return has weight $0.8823 \rho + 2 \cdot 10^{-5} \rho - (\epsilon + \epsilon')$, and since $\rho \geq |E|/2 \geq 1/2$ (since if $|E| = 0$ the problem is trivial), this is at least $0.8823 \rho + (10^{-5} - \epsilon - \epsilon')$. Hence it is sufficient to choose $\epsilon + \epsilon' = 10^{-5}$. The algorithm then runs in polynomial time in the size of $G$, and finds a triangle-free subgraph with weight $0.8823 \rho$, as required.
\end{proof}

It is also interesting to consider what the power of our approximation algorithm
is in the \emph{almost satisfiable regime}, i.e.~if an input graph that has a
cut of value $1 - \epsilon$. It turns out that in this case we output a
triangle-free subgraph with $1 - O(\epsilon)$ edges, significantly more than the
$1 - O(\sqrt{\epsilon})$ edges outputted by the Goemans-Williamson
algorithm~\cite{GW95}.
This is not very hard to see, it follows immediately from the fact that our algorithm can choose all edges of angle $> \tau$ immediately.

\begin{theorem}\label{thm:almost}
    The derandomised algorithm from \Cref{thm:approx}, if run on an input graph $G$ with a cut with a  $(1 - \epsilon)$-fraction of edges, produces a triangle-free subgraph with $(1 - O(\epsilon))$-fraction of edges.
\end{theorem}
Indeed, the (extremely loose) analysis below gives us that it returns a triangle-free subgraph with a $(1 - 15\epsilon)$-fraction of edges at least.
\begin{proof}
    Suppose we solve the SDP \eqref{eq:sdp} within error $\epsilon$, hence getting a solution with value at least $(1 - \epsilon)^2 \geq 1 - 3\epsilon$ times the number of edges in $G$. Suppose that of the edges, a proportion of $a$ of them have angle $> \tau$, and a proportion of $b$ of them have angle $\leq \tau$. We will output at least all of the edges with angle $> \tau$, so we must show that $a = 1 - O(\epsilon)$. Observing that in the worst case all edges counting towards $a$ have angle $\pi$ and all edges counting towards $b$ have angle $\tau$, we have
    \begin{align*}
    a + b &= 1 \\
    a + b\underbrace{\frac{1 - \cos(\tau)}{2}}_{C = C(\tau)} &\geq 1 - 3\epsilon.
    \end{align*}
    Note that $\tau$ is an absolute constant, and thus $C$ is also an absolute constant. Indeed we can approximate $C \approx 0.789$. Subtracting $C$ times the first equation from the second we get $ (1 - C)a = (1 - C) - 3\epsilon$, and then dividing again by $1 - C$ to get $a = 1 - (3 / (1 - C)) \epsilon \geq 1 - 15 \epsilon = 1 - O(\epsilon)$, as required.
\end{proof}

Interestingly, the non-derandomised algorithm has worse performance! Since it chooses at random between selecting all long edges deterministically and cutting according to the Goemans-Williamson algorithm, the performance degrades to $1 - O(\sqrt{\epsilon})$.

\section{\texorpdfstring{Approximation of $\maxPCSP(K_2,K_3)$}{Approximation of maxPCSP(K2,K3)}}
\label{sec:23}

We first introduce some useful notation.
For any predicate $\phi$, we let $[\phi] = 1$ if $\phi$ is true, and $0$ otherwise.

For an event $\phi$ we let $\Pr [\phi]$ be the probability that $\phi$ is true. For a random variable $X$, we let $\mathbb{E}[X]$ denote its expected value. Note that $\mathbb{E}[ [\phi]] = \Pr[\phi]$.
For any two distributions $\mathcal{D}, \mathcal{D}'$ with domains $A, A'$, we let $\mathcal{D} \times \mathcal{D}'$ denote the product distribution, whose domain is $A \times A'$.
For any distribution $\mathcal{D}$ over $\R$ and $a, b \in \R$, the distribution $a\mathcal{D} + b$ is the distribution of $aX + b$ when $X \sim \mathcal{D}$.
We use the standard probability theory abbreviations i.i.d.~(independent and identically distributed) and p.m.f.~(probability mass function).

We introduce a few classic distributions we will need. The uniform distribution $\mathcal{U}(D)$ over a discrete set $D$ is the distribution with p.m.f.~$f : D \to [0, 1]$ given by $f(x) = 1 / |D|$. Note that $\mathcal{U}(D^n)$ is the same as ${\mathcal{U}(D)}^n$, a fact which we will use implicitly. We let $\NBin(n)$ denote a normalised binomial distribution: it is the distribution of $X_1 + \cdots + X_n$, where $X_i \sim \mathcal{U}(\{-1/\sqrt{n}, 1/\sqrt{n}\})$. The domain of this distribution is $\{(-n + 2k) / \sqrt{n} \mid 0 \leq k \leq n\}$, the probability mass function is $(-n + 2k)/\sqrt{n} \mapsto \binom{n}{k} / 2^n$, the expectation is 0, and the variance is 1.
If $\mu, \sigma \in \R$, then we let $\mathcal{N}(\mu, \sigma^2)$ denote the
normal distribution with mean $\mu$ and variance $\sigma^2$. Fixing $d$, if
$\mathbf{\mu} \in \R^d, \mathbf{\Sigma} \in \R^{d\times d}$, then we let
$\mathcal{N}(\mathbf{\mu}, \mathbf{\Sigma})$ denote the multivariate normal
distribution with mean $\mathbf{\mu}$ and covariance matrix $\mathbf{\Sigma}$.
(Recall that the covariance matrix is a square matrix containing the covariances
between all pairs of variables.) 
We let $\mathbf{I}_d$ denote the $d \times d$ identity matrix. Observe that if $\mathbf{x} \sim \mathcal{N}( \mathbf{\mu}, \mathbf{\Sigma})$, where $\mathbf{x} \in \R^d$, then for any matrix $A \in \R^{d' \times d}$ we have that $A \mathbf{x} \sim \mathcal{N}(A \mathbf{\mu}, A \mathbf{\Sigma} A^T)$. Furthermore if $\mathbf{x} \sim \mathcal{N}(\mathbf{\mu}, \mathbf{\Sigma})$ with $\mathbf{\Sigma}$ positive semidefinite, then by finding the Cholesky decomposition $\mathbf{\Sigma} = \mathbf{A} \mathbf{A}^T$, where $\mathbf{A} \in \R^{d \times d}$, we find that $\vx$ is identically distributed to $\mathbf{A} \vx' + \mathbf{\mu}$, where $\vx' \sim \mathcal{N}(\mathbf{0}, \mathbf{I}_d)$.

Our goal is to prove the following result.

\thmktwokthree*

Our proof will be split into three parts: First we prove some technical bounds which we will need. Next, we provide a randomised algorithm. Finally, we derandomise the algorithm.

\paragraph{Technical bounds.} For the proof, we will need a technical lemma,
stated as~\Cref{lem:prExpr} below. The proof of~\Cref{lem:prExpr} is an application of the following result of Cheng.

\begin{theorem}[{\cite{Cheng68}\cite[Equation (2.18)]{Cheng69}}]\label{thm:cheng}
Suppose $\mathbf{u} = (u_1, u_2, u_3, u_4) \sim \mathcal{N}(\mathbf{0}, \mathbf{\Sigma})$ are drawn from a quadrivariate normal distribution with mean zero and covariance matrix
\[
\mathbf{\Sigma} =
\begin{pmatrix}
   1 & a & b & ab \\
   a & 1 & ab & b \\
   b & ab & 1 & a \\
   ab & b & a & 1 \\
\end{pmatrix},
\]
where $a, b \in [-1, 1]$. Then $\Pr_{\mathbf{u}}[u_1 \geq 0, u_2 \geq 0, u_3 \geq 0, u_4 \geq 0]$ is 
\[
\frac{1}{16} + \frac{\arcsin a + \arcsin b + \arcsin ab}{4\pi} + \frac{{(\arcsin a)}^2 + {(\arcsin b)}^2 - {(\arcsin ab)}^2}{4\pi^2}.
\]
\end{theorem}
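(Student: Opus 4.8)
The plan is to prove this by Plackett's differentiation identity, reducing the four-variable orthant probability to the (classically known) bivariate one and then integrating. Write $P(a,b) = \Pr[u_1 \geq 0, u_2 \geq 0, u_3 \geq 0, u_4 \geq 0]$. A first useful observation is that, after reindexing $\{1,2,3,4\}$ by $\{0,1\}^2$ so that $\mathrm{corr}(u_{(s,t)},u_{(s',t')}) = a^{|s-s'|}b^{|t-t'|}$, the covariance matrix is the Kronecker product $\left(\begin{smallmatrix}1 & a\\ a & 1\end{smallmatrix}\right)\otimes\left(\begin{smallmatrix}1 & b\\ b & 1\end{smallmatrix}\right)$, whose eigenvalues are $(1\pm a)(1\pm b)$. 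Hence $\mathbf\Sigma$ is positive definite on the open square $a,b\in(-1,1)$, and by continuity it suffices to establish the formula there.

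Next I would invoke the classical heat-type identity $\partial f_{\mathbf\Sigma}/\partial\rho_{ij} = \partial^2 f_{\mathbf\Sigma}/\partial u_i\partial u_j$ (for $i\neq j$) satisfied by the density $f_{\mathbf\Sigma}$ of $\mathcal N(\mathbf 0,\mathbf\Sigma)$; integrating this over the positive orthant and integrating by parts twice yields Plackett's identity
\[
\frac{\partial P}{\partial \rho_{ij}} \;=\; \frac{1}{2\pi\sqrt{1-\rho_{ij}^2}}\;\Pr\!\big[\,u_k\geq 0,\ u_l\geq 0 \;\big|\; u_i = u_j = 0\,\big],
\]
where $\{k,l\}=\{1,2,3,4\}\setminus\{i,j\}$. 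Since the conditional law of $(u_k,u_l)$ given $u_i=u_j=0$ is a centred bivariate normal with correlation $\tilde\rho_{kl}$ read off from the Schur complement of the $(i,j)$-block, the right-hand side equals $\tfrac{1}{2\pi\sqrt{1-\rho_{ij}^2}}\big(\tfrac14 + \tfrac{\arcsin\tilde\rho_{kl}}{2\pi}\big)$ by the bivariate orthant formula. The key computation is to evaluate these Schur complements for our structured $\mathbf\Sigma$: conditioning on $u_1=u_2=0$ gives conditional covariance $(1-b^2)\left(\begin{smallmatrix}1 & a\\ a & 1\end{smallmatrix}\right)$, i.e. $\tilde\rho_{34}=a$, while conditioning on $u_1=u_4=0$ gives $\tfrac{(1-a^2)(1-b^2)}{1-a^2b^2}\left(\begin{smallmatrix}1 & -ab\\ -ab & 1\end{smallmatrix}\right)$, i.e. $\tilde\rho_{23}=-ab$; the Kronecker structure is exactly what makes these come out this clean. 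The automorphism $(1\,3)(2\,4)$ of $\mathbf\Sigma$ shows $\partial P/\partial\rho_{12}=\partial P/\partial\rho_{34}$ and $\partial P/\partial\rho_{14}=\partial P/\partial\rho_{23}$, so by the chain rule through $\rho_{12}=\rho_{34}=a$, $\rho_{14}=\rho_{23}=ab$, $\rho_{13}=\rho_{24}=b$,
\[
\frac{\partial P}{\partial a} \;=\; \frac{1}{\pi\sqrt{1-a^2}}\Big(\tfrac14 + \tfrac{\arcsin a}{2\pi}\Big) \;+\; \frac{b}{\pi\sqrt{1-a^2 b^2}}\Big(\tfrac14 - \tfrac{\arcsin ab}{2\pi}\Big),
\]
and symmetrically for $\partial P/\partial b$. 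Antidifferentiating in $a$ (using $\int \arcsin t\,(1-t^2)^{-1/2}\,dt = \tfrac12(\arcsin t)^2$) gives $\tfrac{\arcsin a + \arcsin ab}{4\pi} + \tfrac{(\arcsin a)^2 - (\arcsin ab)^2}{4\pi^2} + C(b)$, and $C(b)$ is pinned down by the boundary case $a=0$: there $(u_1,u_3)$ and $(u_2,u_4)$ are independent pairs of correlation $b$, so $P(0,b) = \big(\tfrac14 + \tfrac{\arcsin b}{2\pi}\big)^2$. Substituting $C(b)$ produces the claimed expression; one then double-checks consistency by verifying it also has the correct $\partial P/\partial b$.

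The step I expect to be most delicate is the rigorous justification of Plackett's identity — differentiating the orthant integral under the integral sign and controlling the degeneracy as $a,b\to\pm1$ — together with checking that the straight-line integration path in $(a,b)$ stays inside the positive-definite region (which it does, by the eigenvalue computation above). By contrast the Schur-complement algebra is routine, but it must be carried out carefully, since the whole argument hinges on the conditional correlations being exactly $a$, $b$, and $-ab$.
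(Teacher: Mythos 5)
Your derivation is correct, but note that the paper itself does not prove this statement at all: it is quoted verbatim from Cheng's work (the cited Equation (2.18) of Cheng 1969) and used as a black box, so there is no internal proof to compare against. What you supply is a legitimate self-contained route via Plackett's identity, and the key computations check out: the permuted covariance matrix is indeed the Kronecker product $\bigl(\begin{smallmatrix}1 & a\\ a & 1\end{smallmatrix}\bigr)\otimes\bigl(\begin{smallmatrix}1 & b\\ b & 1\end{smallmatrix}\bigr)$ with eigenvalues $(1\pm a)(1\pm b)$; the Schur complements give conditional correlations $a$ (conditioning on $u_1=u_2=0$) and $-ab$ (conditioning on $u_1=u_4=0$), so with Sheppard's bivariate orthant formula and the chain rule through $\rho_{12}=\rho_{34}=a$, $\rho_{14}=\rho_{23}=ab$ your expression for $\partial P/\partial a$ is right; the antiderivative and the boundary value $P(0,b)=\bigl(\tfrac14+\tfrac{\arcsin b}{2\pi}\bigr)^2$ then reproduce Cheng's formula exactly, and continuity in $\Sigma$ (each margin being nondegenerate, the orthant boundary has measure zero) extends it to $a,b\in[-1,1]$. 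You correctly identify the only genuinely technical point, the rigorous justification of Plackett's differentiation identity, which is classical. So your proposal would serve as a proof of the quoted theorem, whereas the paper's ``proof'' is simply the citation; the citation buys brevity, your argument buys self-containedness at the cost of importing Plackett's identity (or re-deriving it by the heat-equation/integration-by-parts argument you sketch).
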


\begin{lemma}\label{lem:prExpr}
Fix $\alpha, \beta \in \mathbb{R}$ such that $\alpha^2 + \beta^2 = 1$. Suppose $x_1, x_2, x_3, y_1, y_2, y_3 \sim \mathcal{N}(0, 1)$ i.e.~they are i.i.d.~standard normal variables. The probability that
\begin{align*}
x_1 & \geq x_2 \\
x_1 & \geq x_3 \\
\alpha x_1 + \beta y_1 & \geq  \alpha x_2 + \beta y_2 \\
\alpha x_1 + \beta y_1 & \geq  \alpha x_3 + \beta y_3 \\
\end{align*}
is precisely
\[
P(\alpha) = 
\frac{1}{9}
+ \frac{\arcsin \alpha + \arcsin \frac{\alpha}{2}}{4\pi} 
+ \frac{
{(\arcsin \alpha)}^2-
{(\arcsin \frac{\alpha}{2})}^2
}{4\pi^2}.
\]
\end{lemma}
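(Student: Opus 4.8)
The plan is to recognise the event as an orthant probability for a suitable centred quadrivariate Gaussian and then invoke Cheng's formula (\Cref{thm:cheng}). Concretely, I would set
\[
u_1 = x_1 - x_2, \quad u_2 = x_1 - x_3, \quad u_3 = \alpha(x_1 - x_2) + \beta(y_1 - y_2), \quad u_4 = \alpha(x_1 - x_3) + \beta(y_1 - y_3),
\]
so that the four inequalities in the statement hold precisely when $u_1 \geq 0$, $u_2 \geq 0$, $u_3 \geq 0$, $u_4 \geq 0$. Since $\mathbf{u} = (u_1, u_2, u_3, u_4)$ is a linear image of the i.i.d.\ standard Gaussian vector $(x_1, x_2, x_3, y_1, y_2, y_3)$, it is itself a centred Gaussian, and the sought probability is the orthant probability $\Pr[\mathbf{u} \geq \mathbf{0}]$, which is unchanged if each coordinate is rescaled by a positive constant.

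Next I would compute the covariance matrix of $\mathbf{u}$ directly from the independence of the $x_i$'s and the $y_i$'s. Each $u_i$ has variance $2$ (for $u_3, u_4$ this uses $\alpha^2 + \beta^2 = 1$), and the off-diagonal entries come out as $\operatorname{Cov}(u_1, u_2) = \operatorname{Cov}(u_3, u_4) = 1$, $\operatorname{Cov}(u_1, u_3) = \operatorname{Cov}(u_2, u_4) = 2\alpha$, and $\operatorname{Cov}(u_1, u_4) = \operatorname{Cov}(u_2, u_3) = \alpha$. Rescaling $v_i = u_i / \sqrt{2}$ produces a unit-variance centred Gaussian whose covariance matrix is exactly the matrix $\mathbf{\Sigma}$ of \Cref{thm:cheng} with $a = 1/2$ and $b = \alpha$ (so that $ab = \alpha/2$); here $a, b \in [-1, 1]$ since $\alpha^2 \leq \alpha^2 + \beta^2 = 1$. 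Applying \Cref{thm:cheng} then gives
\[
\Pr[\mathbf{u} \geq \mathbf{0}] = \frac{1}{16} + \frac{\arcsin \tfrac12 + \arcsin \alpha + \arcsin \tfrac\alpha2}{4\pi} + \frac{(\arcsin\tfrac12)^2 + (\arcsin\alpha)^2 - (\arcsin\tfrac\alpha2)^2}{4\pi^2}.
\]
Finally I would substitute $\arcsin \tfrac12 = \pi/6$ and collect the constants, $\tfrac{1}{16} + \tfrac{1}{24} + \tfrac{1}{144} = \tfrac{1}{9}$, which yields exactly $P(\alpha)$.

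I expect the only genuine work to be the covariance bookkeeping in the second step: in particular, making sure the coordinates are ordered so that the matrix of $(v_1,v_2,v_3,v_4)$ matches the template of \Cref{thm:cheng} verbatim (rather than only up to a permutation of indices), and checking that the chosen $a = 1/2$, $b = \alpha$ lie in the admissible range. Everything after that is a direct substitution and arithmetic simplification. As consistency checks one can verify $P(1) = 1/3$ and $P(0) = 1/9$, matching the degenerate cases $\beta = 0$ (all four inequalities reduce to $x_1 = \max\{x_1,x_2,x_3\}$) and $\alpha = 0$ (the two pairs become independent).
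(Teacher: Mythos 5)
Your proposal is correct and follows essentially the same route as the paper: it forms the same four differences (the paper just builds in the $1/\sqrt{2}$ normalisation from the start), computes the same covariance matrix, and applies \Cref{thm:cheng} with $a = 1/2$, $b = \alpha$, after which $\arcsin\tfrac12 = \pi/6$ collapses the constants to $1/9$. The covariance bookkeeping and the consistency checks $P(1)=1/3$, $P(0)=1/9$ are all accurate.
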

\begin{proof}
Define the following normally distributed random variables:
\begin{align*}
u_1 &= (x_1 - x_2)/\sqrt{2}, \\
u_2 &= (x_1 - x_3)/\sqrt{2}, \\
u_3 &= (\alpha x_1 + \beta y_1 - \alpha x_2 - \beta y_2)/\sqrt{2}, \\
u_4 &= (\alpha x_1 + \beta y_1 - \alpha x_3 - \beta y_3)/\sqrt{2}.
\end{align*}
Equivalently, we define
\[
\begin{pmatrix}
    u_1 \\ u_2 \\ u_3 \\ u_4
\end{pmatrix}
=
\underbrace{
\frac{1}{\sqrt{2}}
\begin{pmatrix}
    1 & -1 & 0 & 0 & 0 & 0 \\
    1 & 0 & -1 & 0 & 0 & 0 \\
    \alpha & -\alpha & 0 & \beta & -\beta & 0 \\
    \alpha & 0 & -\alpha & \beta & 0 & -\beta 
\end{pmatrix}}_A
\begin{pmatrix} x_1 \\ x_2 \\ x_3 \\ y_1 \\ y_2 \\ y_3\end{pmatrix}.
\]
Hence the covariance matrix $\mathbf{\Sigma}$ of $(u_1, u_2, u_3, u_4)$ is given by $AA^T$
which (noting that since $\alpha^2 + \beta^2 = 1$) we compute to be
\[
\mathbf{\Sigma} = \begin{pmatrix}
1 & \frac{1}{2} & \alpha & \frac{\alpha}{2} \\
\frac{1}{2} & 1 & \frac{\alpha}{2}& \alpha \\
 \alpha & \frac{\alpha}{2} & 1 & \frac{1}{2} \\
\frac{\alpha}{2}& \alpha & \frac{1}{2} & 1 \\
\end{pmatrix}.
\]
In particular, $(u_1, u_2, u_3, u_4) \sim \mathcal{N}(\mathbf{0}, \mathbf{\Sigma})$.

The probability we want is just $\Pr_{\mathbf{u}}[ u_1 \geq 0, u_2 \geq 0, u_3 \geq 0, u_4 \geq 0]$. Apply~\Cref{thm:cheng} with $a = \frac{1}{2}$ and $b = \alpha$ to find that the required probability is
\begin{multline*}
\frac{1}{16} + \frac{\arcsin \frac{1}{2} + \arcsin \alpha + \arcsin \frac{\alpha}{2}}{4\pi} + \frac{{(\arcsin \frac{1}{2})}^2 + {(\arcsin \alpha)}^2 - {(\arcsin \frac{\alpha}{2})}^2}{4\pi^2} = \\
\frac{1}{9}
+ \frac{\arcsin \alpha + \arcsin \frac{\alpha}{2}}{4\pi} 
+ \frac{{(\arcsin \alpha)}^2-{(\arcsin \frac{\alpha}{2})}^2}{4\pi^2},
\end{multline*}
as required.
\end{proof}

We will need a bound on the $P(\alpha)$ function from~\Cref{lem:prExpr}.

\begin{lemma}\label{lem:ineq}
    For $-1 \leq \alpha \leq +1$, $1 - 3P(\alpha) \geq \frac{1 - \alpha}{2}$.
\end{lemma}
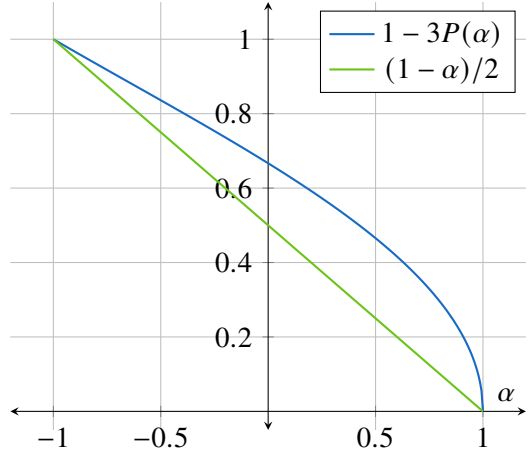
\begin{figure}
\begin{center}
\begin{tikzpicture}[>=stealth]
    \begin{axis}[
        xmin=-1.2,xmax=1.2,
        ymin=-0.05,ymax=1.1,
        axis x line=middle,
        axis y line=middle,
        axis line style=<->,
        xlabel={$\alpha$},
        grid=both,
        ]
        \addplot[name path=a, no marks,myBrightBlue,thick] expression[domain=-1:1,samples=300]{1 - 3*(1/9 + (rad(asin(x)) + rad(asin(x/2)))/(4*pi) + (rad(asin(x))^2 - rad(asin(x/2))^2)/(4*pi*pi))};
        \addlegendentry{$1 - 3P(\alpha)$}
        \addplot[name path=b, no marks,myBrightGreen,thick] expression[domain=-1:1,samples=100]{(1-x)/2};
        \addlegendentry{$(1 - \alpha)/2$}
    \end{axis}
\end{tikzpicture}
\end{center}
\caption{Plot of expressions from~\Cref{lem:ineq}.}\label{fig:plots}
\end{figure}
The functions involved are shown in~\Cref{fig:plots}.
\begin{proof}
    Define 
    \[
    f(\alpha) = 1 - 3P(\alpha) - \frac{1 - \alpha}{2} = \frac{1}{6} + \frac{\alpha}{2} - \frac{3}{4\pi}\left(\arcsin \alpha + \arcsin \frac{\alpha}{2}\right) - \frac{3}{4\pi^2}\left({\left(\arcsin \alpha\right)}^2 - {\left(\arcsin \frac{\alpha}{2}\right)}^2\right).
    \]
    We want to show that $f(x) \geq 0$ for $x \in [-1, -1]$. First we show that $f(x) \geq 0$ for $x \in [-1, 0]$. By numeric computation,\footnote{Why not simply verify that the original inequality holds numerically? That inequality has an equality case precisely at the edges of the domain on which we want it to hold. Checking such an inequality numerically is very prone to numerical errors around the endpoints. On the other hand, the fact that $f'''(x) < 0$ for $-1 < x < 1$ holds strictly, and with considerable slack of about $0.1$. Furthermore, the maximum of $f'''$ even falls somewhere in the middle of the interval $(-1, 1)$, and the inequality holds in the limit towards $-1$ and 1. In such cases, even if the minimum calculation is slightly off, we can still be confident that $f'''(x) < 0$ on all of $(-1, 1)$.}
    we can verify that
    \[
    \max_{-1 < x < 1} f'''(x) \approx -0.1454 < 0,
    \]
    at $x \approx-0.5681$. Thus $f''(x)$ is decreasing, and $f'(x)$ is concave. Thus, by Jensen's inequality, for $x \in (-1, 0)$,
    \[
    f'(x) \geq  -x\lim_{t \to -1^+}f'(t) + (1+x)f'(0),
    \]
    (as $f'$ is not defined at $-1$). But $f'(0) \approx 0.1419 > 0$, and $\lim_{t \to -1^+} f'(t) \approx 0.1642 > 0$, so $f'(x) > 0$ for $x \in (-1, 0)$. It follows that $f$ is increasing on $[-1, 0]$, which is sufficient to show that $f(x) \geq 0$ for $x \in [-1, 0]$, as $f(-1) = 0$.

    Now, we consider $x \in [0, 1]$. Observe again that we know that $f''(x)$ is decreasing. But since $f''(0) \approx -0.1139 < 0$, it follows that $f''(x) < 0$ for $x \in [0,1)$; so $f$ is concave on $[0, 1]$. Now, again applying Jensen's inequality, we find that for $x \in [0, 1]$,
    \[
    f(x) \geq xf(0) + (1 - x) f(1).
    \]
    Now, $f(0) = 1/6 > 0$, and $f(1) = 0$, so $f(x) \geq 0$ for $x \in [0, 1]$.

    Thus our conclusion follows in all cases.
\end{proof}

\paragraph{Randomised algorithm.} We first provide a randomised version of our algorithm, accurate up to some $\epsilon$.

\begin{theorem}[Randomised version of~\Cref{thm:k2k3}]
    There exists a randomised algorithm which, given a graph $G = (V, E)$ that has a cut with $\rho$ edges and an accuracy parameter $\epsilon$, finds a 3-colouring of $G$ that satisfies $\rho-\epsilon$ edges in expectation, in polynomial time with respect to the size of $G$ and $\log(1 / \epsilon)$.
\end{theorem}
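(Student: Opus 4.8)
The plan is to combine the Goemans--Williamson SDP relaxation for Max-Cut with the Frieze--Jerrum-style rounding for Max-3-Cut, and to argue edge-by-edge using the bound from \Cref{lem:ineq}. First I would solve the SDP \eqref{eq:sdp} on $G$ to additive error $\epsilon/2$, obtaining unit vectors $\vx_v \in \R^n$ with $\sum_{(u,v) \in E} (1 - \vx_u \cdot \vx_v)/2 \geq \rho - \epsilon/2$. The key observation is that, unlike in Max-Cut where one rounds with a single random hyperplane, here we have \emph{three} colours available, so we round with two independent random hyperplanes: equivalently, sample two independent standard Gaussian vectors $\mathbf{g}^{(1)}, \mathbf{g}^{(2)}$, and assign to each vertex $v$ the colour determined by the pair of signs $(\operatorname{sgn}(\mathbf{g}^{(1)} \cdot \vx_v), \operatorname{sgn}(\mathbf{g}^{(2)} \cdot \vx_v))$ — but this gives four cells, not three, so instead the right move is to use the Frieze--Jerrum rounding: pick three i.i.d.\ Gaussian-like vectors $\mathbf{r}_1, \mathbf{r}_2, \mathbf{r}_3$ and colour $v$ by $\arg\max_{i} \mathbf{r}_i \cdot \vx_v$. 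An edge $(u,v)$ is then satisfied (its endpoints get different colours) exactly when $u$ and $v$ do not \emph{both} maximise the same coordinate.

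The second step is to compute, for a single edge $(u,v)$ with $\vx_u \cdot \vx_v = \alpha$, the probability that the edge is \emph{un}satisfied. By rotational symmetry we may place $\vx_u, \vx_v$ in a $2$-dimensional subspace and write $\vx_v = \alpha \vx_u + \beta \vx_u^\perp$ with $\alpha^2 + \beta^2 = 1$; projecting the Gaussian vectors onto the relevant coordinates, the event ``$u$ and $v$ both have colour $1$'' is precisely the event in \Cref{lem:prExpr} (with the roles of $x_i$ being $\mathbf{r}_i \cdot \vx_u$ and of $\alpha x_i + \beta y_i$ being $\mathbf{r}_i \cdot \vx_v$), so it has probability exactly $P(\alpha)$. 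Since the three colour classes are symmetric, the probability that $u$ and $v$ receive the same colour is $3 P(\alpha)$, and hence the edge is satisfied with probability $1 - 3P(\alpha)$. Now \Cref{lem:ineq} gives $1 - 3P(\alpha) \geq (1-\alpha)/2 = (1 - \vx_u \cdot \vx_v)/2$ for all $\alpha \in [-1,1]$.

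Summing over edges, the expected number of satisfied edges of the $3$-colouring is $\sum_{(u,v) \in E} (1 - 3P(\vx_u \cdot \vx_v)) \geq \sum_{(u,v) \in E} (1 - \vx_u \cdot \vx_v)/2 \geq \rho - \epsilon/2$. Accounting for the SDP solving error and absorbing it into $\epsilon$ (and noting the SDP can be solved, and a Gaussian-like rounding distribution such as $\NBin(n)$ sampled, in time polynomial in $|G|$ and $\log(1/\epsilon)$) completes the proof of the randomised version. The main obstacle is purely in the bookkeeping of the rounding: one must take care that the rounding distribution used is genuinely samplable with finite precision in polynomial time — this is why the later derandomisation step replaces true Gaussians by the normalised binomial $\NBin(n)$, whose moments match closely enough — and that the reduction of the four-variable event to exactly the covariance matrix $\mathbf{\Sigma}$ of \Cref{lem:prExpr} is carried out correctly; the inequality $1 - 3P(\alpha) \geq (1-\alpha)/2$ itself is already supplied by \Cref{lem:ineq}, so no further analytic work is needed there.
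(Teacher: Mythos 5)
Your proposal is correct and follows essentially the same route as the paper's own proof: solve the Goemans--Williamson SDP \eqref{eq:sdp}, round via Frieze--Jerrum $\arg\max$ with three i.i.d.\ Gaussian vectors, reduce the per-edge event to the quadrivariate probability of \Cref{lem:prExpr} by rotational symmetry, and conclude with \Cref{lem:ineq} and linearity of expectation. The brief detour about two hyperplanes is harmless, and your error bookkeeping ($\epsilon/2$ for the SDP) matches the paper's up to constants.
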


\begin{proof}
Our algorithm is as follows. In essence, it solves the SDP of Goemans and Williamson~\cite{GW95} i.e.~\eqref{eq:sdp}, then randomly rounds as the 3-colouring algorithm of Frieze and Jerrum~\cite{FJ97}.

\begin{algoEnum}
    \item Input: a graph $G = (V, E)$, which admits a cut of weight $\rho$. Suppose $n = |V|$.
    \item Solve SDP \eqref{eq:sdp} to within error $\epsilon$. Thus we get a set of vectors $\vx_v \in \R^n$ for $v \in V$, with $\norm{\vx_v}^2 = 1$ and 
    \[
    \sum_{(u, v) \in E} \frac{1 - \vx_u \cdot \vx_v}{2} \geq \rho - \epsilon.
    \]
    \item Sample three i.i.d.~normally distributed $n$-dimensional vectors $\va_1, \va_2, \va_3 \sim \mathcal{N}(\mathbf{0}, \mathbf{I}_n)$.
    \item Set the colour of node $u$ to $\arg \max_i \vx_u \cdot \va_i$. (Break ties arbitrarily.)
\end{algoEnum}

Let us compute the expected number of edges which are satisfied by this 3-colouring. Consider an edge $(u, v) \in E$; in terms of $\frac{1 - \vx_u \cdot \vx_v}{2}$, what is the probability that $(u, v)$ is properly coloured? This is the same as the probability that $\arg \max_i \vx_u \cdot \va_i \neq \arg \max_i \vx_v \cdot \va_i$, which, by symmetry, is equal to
\begin{equation}\label{eq:prob}
1 - 3 \Pr_{\va_1, \va_2, \va_3}[
\vx_u \cdot \va_1 \geq \vx_u \cdot \va_2,
\vx_u \cdot \va_1 \geq \vx_u \cdot \va_3,
\vx_v \cdot \va_1 \geq \vx_v \cdot \va_2,
\vx_v \cdot \va_1 \geq \vx_v \cdot \va_3].
\end{equation}
Let $\alpha = \vx_u \cdot \vx_v, \beta = \sqrt{1 - \alpha^2}$. 
Let us now consider an orthonormal basis $(\ve_1, \ldots, \ve_n)$, where we take $\ve_1 = \vx_u$, and take $\ve_2$ such that $\vx_v = \alpha \ve_1 + \beta \ve_2$. (If $\vx_u = \vx_v$ then we can take any unit vector orthogonal to $\ve_1$; otherwise we take one of the two unit vectors orthogonal to $\ve_1$ in the plane spanned by $\vx_u, \vx_v$.) Since $\va_1, \va_2, \va_3$ are taken from the standard $n$-variate normal distribution, we see that their projections to each $\ve_i$ are i.i.d.~standard normal variables. Hence, by letting $\va_i \cdot \ve_1 = a_i$ and $\va_i \cdot \ve_2 = b_i$ for $i = 1, 2, 3$, we can rewrite~\eqref{eq:prob} as
\begin{equation}\label{ineq:final}
1 - 3 \Pr_{\substack{a_1, a_2, a_3 \\ b_1, b_2, b_3}}[a_1 \geq a_2, a_1 \geq a_3, \alpha a_1 + \beta b_1 \geq \alpha a_2 + \beta b_2, \alpha a_1 + \beta b_1 \geq \alpha a_3 + \beta b_3 ],
\end{equation}
where $a_1, a_2, a_3, b_1, b_2, b_3$ are i.i.d.~standard normal variables. 
By~\Cref{lem:prExpr}, the expression in \eqref{ineq:final} is $1 - 3P(\alpha)$, which is, by~\Cref{lem:ineq}, at least $\frac{1 - \alpha}{2} = \frac{1 - \vx_u \cdot \vx_v}{2}$. Thus, by linearity of expectation, the expected number of edges we satisfy is at least
$\sum_{(u, v) \in E}\frac{1 - \vx_u \cdot \vx_v}{2} \geq \rho - \varepsilon$, as required.
\end{proof}

\paragraph{Derandomised algorithm.}
We will now show how to derandomise our algorithm, using
the method of conditional expectations, which was also used by Mahajan
and Ramesh~\cite{Mahajan99:sicomp}. The approach of Bhargava and
Kosaraju~\cite{BK05} derandomises conditional probabilities by an approximation
of normal distributions via polynomials; we approximate by a
scaled binomial distribution. We believe that the results of the literature are sufficient to prove the derandomisation theorem we need (which crucially needs to work for our 1-approximation setting); however we propose a simpler derandomisation method that we believe will be easier to apply in general. (Indeed, our method avoids integration altogether.) There is an interesting duality between our approach and that of Mahajan and Ramesh: we discretise the normal distribution, whereas they discretise the SDP vectors.

Our goal will be the following general derandomisation theorem.

\begin{theorem}\label{thm:derandGaussian}
    Fix a constant $d$. There exists an algorithm that does the following. Suppose we are given $n, m \in \mathbb{N}$, $\vx_{ij} \in \mathbb{R}^n$ and $y_{ij}, \varepsilon \in \mathbb{R}$ for all $i \in [m], j \in [d]$. Suppose $\va = (a_1, \ldots, a_n) \sim \mathcal{N}(\mathbf{0}, \mathbf{I}_n)$ and that
    \[
    \sum_{i = 1}^m \Pr_{\va}\left[ \bigwedge_{j = 1}^d \vx_{ij} \cdot \va > y_{ij} \right] \geq \rho
    \]
    for some $\rho \in \mathbb{R}$. Then the algorithm computes some particular $\va^* = (a_1^*, \ldots, a_n^*) \in \mathbb{R}^n$ such that
    \[
    \sum_{i =1 }^m \left[ \bigwedge_{j = 1}^d \vx_{ij} \cdot \va^* > y_{ij} \right] \geq \rho - \varepsilon,
    \]
    in polynomial time with respect to $n, m, 1/\varepsilon$.
\end{theorem}

To facilitate the proof of~\Cref{thm:derandGaussian}, we will need a multidimensional version of the Berry-Esseen theorem. We will use the following version with explicit constants, due to Rai\v{c}~\cite{Raic19}.

\begin{theorem}[{\cite[Theorem~1.1]{Raic19}}]\label{thm:BE}
Suppose $\mathbf{t}_1, \ldots, \mathbf{t}_N \in \mathbb{R}^d$ are independent random variables with mean zero, such that the sum of their covariance matrices is $\mathbf{I}_d$. Let $\mathbf{s} = \mathbf{t}_1 + \cdots + \mathbf{t}_N$. Suppose $\va \sim \mathcal{N}(\mathbf{0}, \mathbf{I}_d)$, and let $C \subseteq \mathbb{R}^d$ be convex and measurable. Then
\[
| \Pr_{\mathbf{s}}[\mathbf{s} \in C] - \Pr_{\va}[\va \in C] | \leq \left(42\sqrt[4]{d} + 16\right) \sum_{i = 1}^N \mathbf{E} \left[ \norm{ \mathbf{t}_i }^3 \right].
\]
\end{theorem}

The following is an easy and well-known corollary of~\Cref{thm:BE}: We can approximate a multivariate normal distribution with binomial distributions. For completeness, we provide a proof.

\begin{corollary}\label{corr:appliedBerryEsseen}
Let $d \in \mathbb{N}$ be a constant and take $\varepsilon \in (0, 1)$. Take
\begin{equation}\label{eq:defN}
N = N_\varepsilon \geq {\left(\frac{42 d^{7/4} +16 d^{3/2} }{\varepsilon}\right)}^2 = \frac{\xi_d}{\varepsilon^2},
\end{equation}
where $\xi_d = O(d^{7/2})$ depends only on $d$. Suppose
$s_1, \ldots, s_d \sim \NBin(N)$ are i.i.d., and let $\mathbf{s} = (s_1, \ldots, s_d)$.
 Let $\va \sim \mathcal{N}(\mathbf{0}, \mathbf{I}_d)$. Then for all convex measurable sets $C \subseteq \mathbb{R}^d$ we have
\[
| \Pr_{\mathbf{s}}[ \mathbf{s} \in C] - \Pr_{\va}[\va \in C] | \leq \varepsilon.
\]
\end{corollary}
\begin{proof}
Note that each component of $\mathbf{s}$ is i.i.d.~and distributed as the sum of $N$ independent trials that take value $\pm 1 / \sqrt{N}$ equiprobably. 
In other words, we can see
$\mathbf{s}$ as the sum $\mathbf{t}_1 + \cdots  + \mathbf{t}_N$, where $\mathbf{t}_1, \ldots, \mathbf{t}_N \sim \mathcal{U}({\{-1/\sqrt{N}, 1/\sqrt{N}\}}^d)$ are i.i.d.
Observe that the covariance matrix of $\mathbf{t}_i$ is $\mathbf{I}_d / N$, so the sum of these covariance matrices for all $i$ is $\mathbf{I}_d$. Furthermore $\norm{ \mathbf{t}_i } = \sqrt{(\pm 1 / \sqrt{N})^2 + \cdots + (\pm 1 / \sqrt{N})^2 } = \sqrt{d / N}$ with probability 1.
Now, apply~\Cref{thm:BE} to $\mathbf{s} = \sum_{i = 1}^N \mathbf{t}_i$. We find that
    \[
    |\Pr_{\mathbf{s}}[\mathbf{s} \in C] - \Pr_\va[\va \in C]| \leq
    \left(42 \sqrt[4]{d} + 16\right)
    \sum_{i = 1}^N \mathbb{E}\left[ \left|\left| \mathbf{t}_i \right|\right|^3 \right]
    =
    \left(42 \sqrt[4]{d} + 16\right) N\sqrt{\frac{d}{N}}^3 = \frac{42 d^{7/4} + 16 d^{3/2}}{\sqrt{N}}.
    \]
    Substituting~\eqref{eq:defN}, it follows that $|\Pr_{\mathbf{s}}[ \mathbf{s} \in C] - \Pr_{\va}[\va \in C]| \leq \epsilon$, as required.
\end{proof}

\begin{theorem}\label{thm:stepApprox}
    Fix a constant $d$, and take $\vx_{1}, \ldots, \vx_d \in \mathbb{R}^n, y_1, \ldots, y_d \in \mathbb{R}, z_1, \ldots, z_d \in \mathbb{R}, \epsilon \in \mathbb{R}$. Consider the function
    \[
    p(t) = \Pr_{\va \sim \mathcal{N}(\mathbf{0}, \mathbf{I}_n)}\left[\bigwedge_{i=1}^d \vx_{i} \cdot \va + z_i t > y_i \right].
    \]
    There exists a step function $\widehat{p}$ with $\poly(1/\epsilon)$ steps, where the steps and the values at those steps are computable in polynomial time with respect to $1 / \epsilon$
    and $n$, such that $| \widehat{p}(t) - p(t)| \leq \epsilon$ for all $t \in \mathbb{R}$.
\end{theorem}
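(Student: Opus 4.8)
Assume without loss of generality that $0 < \epsilon < 1$ (otherwise $\widehat p \equiv 0$ already works, since $0 \le p \le 1$). The key observation is that the event defining $p(t)$ depends on $\va$ only through the $d$-dimensional vector $\mathbf{u} = (\vx_1\cdot\va,\ldots,\vx_d\cdot\va)$, which has mean $\mathbf{0}$ and covariance the Gram matrix $M \in \mathbb{R}^{d\times d}$ with $M_{ij} = \vx_i\cdot\vx_j$, i.e.\ $\mathbf{u}\sim\mathcal{N}(\mathbf{0},M)$. Writing $A = M^{1/2}$ for the symmetric positive semidefinite square root of $M$ (computable in time polynomial in $n,d$, ignoring real precision as throughout the paper), $\mathbf{u}$ is identically distributed to $A\mathbf{g}$ with $\mathbf{g}\sim\mathcal{N}(\mathbf{0},\mathbf{I}_d)$. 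Hence I would rewrite
\[
p(t) = \Pr_{\mathbf{g}\sim\mathcal{N}(\mathbf{0},\mathbf{I}_d)}\bigl[A\mathbf{g}\in S_t\bigr],\qquad S_t := \{u\in\mathbb{R}^d : u_i > y_i - z_i t \text{ for all } i\in[d]\},
\]
where $S_t$ is an open (hence Borel measurable) box, in particular convex.

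\textbf{Discretisation via Berry--Esseen.} Next I would replace $\mathbf{g}$ by a discrete approximation. Let $N = N_\epsilon = \lceil \xi_d/\epsilon^2\rceil$ as in \Cref{corr:appliedBerryEsseen}, and let $\mathbf{s} = (s_1,\ldots,s_d)$ with $s_1,\ldots,s_d\sim\NBin(N)$ i.i.d.; since $d$ is constant its support has $(N+1)^d = \poly(1/\epsilon)$ points. Define
\[
\widehat p(t) := \Pr_{\mathbf{s}}\bigl[A\mathbf{s}\in S_t\bigr] = \sum_{\mathbf{v}\in\operatorname{supp}(\mathbf{s})}\Pr[\mathbf{s}=\mathbf{v}]\prod_{i=1}^d\bigl[(A\mathbf{v})_i > y_i - z_i t\bigr].
\]
For the error bound, fix $t$ and set $S := \{g\in\mathbb{R}^d : Ag\in S_t\}$, the preimage of $S_t$ under the linear map $A$; this is convex (preimage of a convex set under a linear map) and measurable (preimage of an open set under a continuous map). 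Then $\widehat p(t) = \Pr_{\mathbf{s}}[\mathbf{s}\in S]$ and $p(t) = \Pr_{\mathbf{g}}[\mathbf{g}\in S]$, so \Cref{corr:appliedBerryEsseen} applied in dimension $d$ with parameter $\epsilon$ gives $|\widehat p(t) - p(t)|\le\epsilon$, uniformly in $t$. Note this never uses invertibility of $A$, so a degenerate (singular) Gram matrix needs no separate treatment.

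\textbf{Step structure and computability.} It remains to check that $\widehat p$ is a step function with few, efficiently computable steps. For fixed $\mathbf{v}$ and $i$, the indicator $[(A\mathbf{v})_i > y_i - z_i t]$, as a function of $t$, is constant if $z_i = 0$, equals $[t > (y_i - (A\mathbf{v})_i)/z_i]$ if $z_i > 0$, and equals $[t < (y_i - (A\mathbf{v})_i)/z_i]$ if $z_i < 0$ --- in every case a step function with at most one breakpoint. Therefore $\widehat p$, being a finite nonnegative combination of products of such functions, is a step function whose breakpoints all lie in $\{(y_i - (A\mathbf{v})_i)/z_i : i\in[d],\ z_i\neq 0,\ \mathbf{v}\in\operatorname{supp}(\mathbf{s})\}$, a set of size at most $d(N+1)^d = \poly(1/\epsilon)$. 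Algorithmically: compute $M$ and $A$; enumerate the $(N+1)^d$ support points; for each, compute $A\mathbf{v}$ (a dimension-$d$ matrix--vector product) and its $\le d$ breakpoints; sort the breakpoints; and on each of the $\poly(1/\epsilon)$ resulting intervals evaluate $\widehat p$ at an interior point via the explicit sum, using $\Pr[\mathbf{s}=\mathbf{v}] = \prod_i\binom{N}{k_i}/2^N$. Everything is polynomial in $1/\epsilon$ and $n$ (the dependence on $n$ entering only through the computation of $M$).

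\textbf{Main obstacle.} The one genuine idea --- as opposed to routine bookkeeping --- is to push the Berry--Esseen approximation through the $d$-dimensional marginal $\mathbf{u}=(\vx_i\cdot\va)_i$ rather than through $\va\in\mathbb{R}^n$ directly: approximating $\va$ coordinate-wise by scaled binomials would give a support of size $(N+1)^n$, hence exponentially many steps, whereas the Gram-matrix reduction keeps the relevant dimension constant and makes both the support and the number of steps $\poly(1/\epsilon)$. A secondary point requiring a moment's care is that $\{g : Ag\in S_t\}$ remains convex and measurable even when $A$ is singular, which is exactly what allows the same argument to cover a degenerate Gram matrix.
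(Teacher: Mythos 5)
Your proof is correct and takes essentially the same approach as the paper's: reduce to a $d$-dimensional Gaussian by factorising the Gram/covariance matrix (you use the symmetric square root where the paper uses a Cholesky factor), apply the binomial Berry--Esseen corollary to the convex, measurable event set uniformly in $t$, and observe that the resulting discrete expectation is a step function with $\poly(1/\epsilon)$ efficiently computable breakpoints. Your breakpoint accounting (at most $d$ per support point, coming from a product of one-breakpoint indicators, which also handles mixed signs of the $z_i$) is if anything slightly more careful than the paper's max-based formulation, but this is a cosmetic difference rather than a different route.
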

\begin{proof}
    Observe that the tuple $(\vx_1 \cdot \va, \ldots, \vx_d \cdot \va)$ (interpreted as a column vector) is a $d$-variate normally distributed vector; namely, if we let
    \[
    \mathbf{X} =
    \begin{pmatrix}
        \vx_1^T \\ 
        \vdots \\
        \vx_d^T
    \end{pmatrix}
    \]
    be a block matrix whose rows are $\vx_1^T, \ldots, \vx_d^T$, then $(\vx_1 \cdot \va, \ldots, \vx_d \cdot \va) = \mathbf{X} \va \sim \mathcal{N}(\mathbf{0}, \mathbf{X} \mathbf{I}_D \mathbf{X}^T) = \mathcal{N}(\mathbf{0}, \mathbf{X} \mathbf{X}^T)$.
    
    We can compute the covariance matrix, namely $\mathbf{X} \mathbf{X}^T$, in polynomial time with respect to $n$. Now, by computing the Cholesky decomposition of this positive semidefinite matrix, we can find $\mathbf{X}' \in \R^{d \times d}$ such that $\mathbf{X}' \mathbf{X}'^T = \mathbf{X}\mathbf{X}^T$. Thus $(\vx_1 \cdot \va, \ldots, \vx_d \cdot \va) \sim \mathcal{N}(\mathbf{0}, \mathbf{X}' \mathbf{X}'^T)$.
    Letting $\vx_1'^T, \ldots, \vx_d'^T$ be the rows of $\mathbf{X}'$, we find that $(\vx_1 \cdot \va, \ldots, \vx_d \cdot \va)$ is identically distributed to $(\vx'_1 \cdot \va', \ldots, \vx_d' \cdot \va')$, when $\va \sim \mathcal{N}(\mathbf{0}, \mathbf{I}_n)$ and $\va' \sim \mathcal{N}(\mathbf{0}, \mathbf{I}_d)$, since both follow the distribution $\mathcal{N}(0, \mathbf{X}' \mathbf{X}'^T)$. Thus,
    \[
    p(t) = \Pr_{\va' \sim \mathcal{N}(\mathbf{0}, \mathbf{I}_d)}\left[\bigwedge_{i=1}^d \vx_{i}' \cdot \va' + z_i t > y_i \right].
    \]
    In other words, we have reduced the dimensionality of our problem from $n$ to $d$, a constant.

    Note that the set defined by $\bigwedge_{i = 1}^d \vx_i' \cdot \va' + z_i t
    > y_i$ is necessarily convex and measurable, being the intersection of
    finitely many half-spaces. Thus we can apply~\Cref{corr:appliedBerryEsseen}. Let $N = N_\epsilon$ and $s_1, \ldots, s_N \sim \NBin(N)$, and suppose $\mathbf{s} = (s_1, \ldots, s_d)$. Then, we know that 
    \[
    \left| p(t) - \Pr_{\mathbf{s}}\left[ \bigwedge_{i=1}^d \vx_i' \cdot \mathbf{s} + z_i t > y_i \right] \right| \leq \epsilon.
    \]
    This suggests using the following definition:
    \[
    \widehat{p}(t) =  \Pr_{\mathbf{s}}\left[ \bigwedge_{i=1}^d \vx_i' \cdot \mathbf{s} + z_i t > y_i  \right],
    \]
    as this must satisfy the condition
    $| \widehat{p}(t) - p(t)| \leq \epsilon$. What remains is to show that $\widehat{p}$ is a step function, and that these steps can be efficiently computed.

    Intuitively, this is the case since the probability distribution we define $\widehat{p}$ over is discrete. More precisely, letting $D = {\{ -N/\sqrt{N}, (-N+2)/\sqrt{N}, \ldots, (N-2)/\sqrt{N}, N/\sqrt{N} \}}^d$ be the domain of $\mathbf{s}$, and letting $q$ be the p.m.f.~of $\mathbf{s}$ (note that it can be efficiently computed, since $\mathbf{s}$ is distributed according to a product distribution of normalised binomials), we note that
    \begin{multline*}
    \widehat{p}(t) =  \Pr_{\mathbf{s}}\left[ \bigwedge_{i=1}^d \vx_i' \cdot \mathbf{s}  + z_i t > y_i \right]
    =
    \sum_{\mathbf{u} \in D}  q(\mathbf{u}) \left[ \bigwedge_{i = 1}^d \vx_i' \cdot \mathbf{u} + z_i t > y_i \right]
    \\=
    \sum_{\mathbf{u} \in D} q(\mathbf{u}) \prod_{i = 1}^d \left[ \mathbf{x}_i' \cdot \mathbf{u} + z_i t > y_i \right].
    \end{multline*}
    Now observe that, for each $\mathbf{u} \in D$, the function
    \[
    t \mapsto \left[ \mathbf{x}_i' \cdot \mathbf{u} + z_i t > y_i \right]
    \]
    is a step function with at most one step: if $z_i = 0$ then the function is a constant (whose value is easy to compute); otherwise the step is at $(1/z_i) (y_i - \vx_i' \cdot \mathbf{u})$, where the step being increasing or decreasing is determined by the sign of $z_i$ (and again the values of the function are easy to compute). It therefore follows that $\widehat{p}$ is a step function that has at most $|D| = O(N_\epsilon^d) = \poly(1 / \epsilon)$ steps, and that each of the values that the function takes can be computed in polynomial time with respect to $1 / \epsilon$ and $n$.
\end{proof}

\begin{proof}[Proof of~\Cref{thm:derandGaussian}]
    We give a recursive algorithm. If $n = 0$ then there is nothing to output, so assume $n \geq 1$. Let $\va = (t, \va')$, and $\vx_{ij} = (z_{ij}, \vx_{ij}')$ --- in other words, separate out the first variable. We are given that
    \begin{equation*}
    \sum_{i =1 }^m \Pr_{\mathbf{a}', t}\left[ \bigwedge_{j = 1}^d \vx_{ij}' \cdot \va' + z_{ij} t > y_{ij} \right] =
    \sum_{i =1 }^m \Pr_{\mathbf{a}}\left[ \bigwedge_{j = 1}^d \vx_{ij} \cdot \va > y_{ij} \right] \geq \rho,
    \end{equation*}
    when $\va = (\va', t) \sim \mathcal{N}(\mathbf{0}, \mathbf{I}_d)$. But then this must be true for some particular value of $t$, say $t^*$, i.e.
    \begin{equation}\label{eq:alpha}
    \sum_{i =1 }^m \Pr_{\va'}\left[ \bigwedge_{j = 1}^d \vx_{ij}' \cdot \va' + z_{ij} t^* > y_{ij} \right]  \geq \rho.
    \end{equation}
    Apply~\Cref{thm:stepApprox} to each of the probabilities above viewed as functions of $t$, with $\epsilon' = \epsilon / 2nm$; we thus build step functions $p_1, \ldots, p_m$ in polynomial time with respect to $n$ and $1/\epsilon' = 2nm/\epsilon$, such that
    \[
    \left| p_i(t) -  \Pr_{\va'}\left[ \bigwedge_{j = 1}^d \vx_{ij}' \cdot \va' + z_{ij} t > y_{ij} \right] \right| \leq \frac{\epsilon}{2nm}
    \]
    for all $t$.
    Add these equations for $i = 1, \ldots, m$ to find
    \begin{equation}\label{eq:err}
    \left| \sum_{i = 1}^m p_i(t) -  \sum_{i = 1}^m \Pr_{\va'}\left[ \bigwedge_{j = 1}^d \vx_{ij}' \cdot \va' + z_{ij} t > y_{ij} \right] \right| \leq \frac{\epsilon}{2n}.
    \end{equation}
    Observe that $\sum_{i = 1}^m p_i$ is a step function with polynomially many steps with respect to $n, m$, whose values are also computable in polynomial time. Thus it is easy to find some value $\hat{t}$ that maximises the expression $\sum_{i = 1}^m p_i(\hat{t})$. By~\eqref{eq:alpha} and~\eqref{eq:err} we have that $\sum_{i = 1}^m p_i(\hat{t}) \geq \sum_{i = 1}^m p_i(t^*) \geq \rho - \frac{\epsilon}{2n}$, and by~\eqref{eq:err} again we find that
    \[
    \sum_{i = 1}^m \Pr_{\va'}\left[ \bigwedge_{j = 1}^d \vx_{ij}' \cdot \va' + z_{ij} \hat{t} > y_{ij} \right] \geq \rho - \frac{\epsilon}{n}.
    \]
    Equivalently,
    \[
    \sum_{i = 1}^m \Pr_{\va'}\left[ \bigwedge_{j = 1}^d \vx_{ij}' \cdot \va' > y_{ij} - z_{ij} \hat{t} \right] \geq \rho - \frac{\epsilon}{n},
    \]
    and we can recursively find optimal values for the remaining random variables in $\va'$. Observe that our recursive depth is $n$, that at each level we use polynomial time with respect to $n, m, 1 / \epsilon$, and that, finally, at each step we lose $\epsilon / n$ from the sum of our probabilities. These facts together imply the correctness of our general derandomisation procedure.

    We note in passing that the total time complexity of our method is exponential in $d$; however this does not matter, as we consider $d$ a constant.
\end{proof}

This is enough to derandomise our algorithm.

\begin{proof}[Proof of~\Cref{thm:k2k3}]
    Let $G=(V,E)$, where $V = [n]$ (without loss of generality) and $m = |E|$.
    Suppose this graph has a cut of size $\rho$. By the analysis of our
    randomised algorithm from~\Cref{thm:main} with $\epsilon=1/3$, using SDP we can find, in polynomial time with respect to $G$, a set of vectors $\vx_1, \ldots, \vx_n$ such that $\norm{ \vx_i }^2 = 1$ and, if $\va_1, \va_2, \va_3 \sim \mathcal{N}(\mathbf{0}, \mathbf{I}_n)$ are normally distributed variables, then
    \[
    \sum_{(u, v) \in E} \Pr_{\va_1, \va_2, \va_3} \left[ \arg \max_{i} \va_k \cdot \vx_u \neq 
    \arg \max_i \va_i \cdot \vx_v \right] \geq \rho - \frac{1}{3}.
    \]
    Now, let $\va = (\va_1, \va_2, \va_3) \sim \mathcal{N}(\mathbf{0}, \mathbf{I}_{3n})$, and define $\vx_{ui}$ such that $\va \cdot \vx_{ui} = \va_i \cdot \vx_u$; in other words, pad out $\vx_u$ with $2n$ zeroes. In what follows, let $\oplus$ denote addition mod 3 over $[3]$. We first claim that the event
    \[
    \arg \max_{i} \va_i \cdot \vx_u \neq \arg \max_i \va_i \cdot \vx_v,
    \]
    can be seen as the disjoint union of 6 intersections of 4 half spaces in the space of $\va$. To express it in this way, first fix the value of the respective sides to $c \neq c'$, where $c, c' \in [3]$. Observe that the event that $\arg \max_i \va_i \cdot \vx_u = c$ is the same as $\va_c \cdot \vx_u > \va_{c\oplus 1} \cdot \vx_u \land \va_c \cdot \vx_u > \va_{c \oplus 2} \cdot \vx_u$. Now, using the notation from before, this is equivalent to $\va \cdot (\vx_{uc} - \vx_{u, c\oplus 1}) > 0 \land
    \va \cdot (\vx_{uc} - \vx_{u, c\oplus 2}) > 0$. It follows that
    \begin{multline*}
    \rho - \frac{1}{3} \leq 
    \sum_{(u, v) \in E} \Pr_{\va_1, \va_2, \va_3} \left[ \arg \max_{i} \va_i \cdot \vx_u \neq 
    \arg \max_i \va_i \cdot \vx_v \right] \\
    =
    \sum_{(u, v) \in E} \sum_{c \neq c'} \Pr_{\va} [
    \va \cdot (\vx_{uc} -
    \vx_{u,c\oplus 1}) > 0
   \land 
    \va \cdot (\vx_{uc} 
    - \vx_{u,c\oplus 2}) > 0
    \\
   \land 
    \va \cdot (\vx_{vc'} -
    \vx_{v,c' \oplus 1}) > 0
   \land 
    \va \cdot (\vx_{vc'}-
     \vx_{v,c'\oplus 2}) > 0
    ].
    \end{multline*}
    By~\Cref{thm:derandGaussian} with $\epsilon = \frac{1}{3}$, we can find particular values $\va^*$ such that 
    \begin{multline*}
    \sum_{(u, v) \in E} \sum_{c \neq c'} [
    \va^* \cdot (\vx_{uc} -
    \vx_{u,c\oplus 1}) > 0
   \land 
    \va^* \cdot (\vx_{uc} 
    - \vx_{u,c\oplus 2}) > 0
    \\
   \land 
    \va^* \cdot (\vx_{vc'} -
    \vx_{v,c' \oplus 1}) > 0
   \land 
    \va^* \cdot (\vx_{vc'}-
     \vx_{v,c'\oplus 2}) > 0
    ] \geq \\
    \geq \rho - \frac{1}{3} - \frac{1}{3} = \rho - \frac{2}{3}.
    \end{multline*}
    Defining $(\va_1^*, \va_2^*, \va_3^*) = \va^*$, this is equivalent to
    \[
    \sum_{(u, v) \in E} [\arg \max_i \va_i^* \cdot \vx_u \neq
    \arg \max_i \va_i^* \cdot \vx_v] \geq \rho - \frac{2}{3}.
    \]
    In other words, if we set the colour of vertex $u$ to $\arg \max_i \va_i^* \cdot \vx_u$, then we will correctly colour at least $\rho - 2 / 3$ edges. Now, note that $\rho \in \mathbb{N}$. Since the number of correctly coloured edges must also be an integer, and it is at least $\rho - 2 / 3$, it follows that it is at least $\rho$. Thus our algorithm returns a 3-colouring of value $\rho$, as required.
\end{proof}

\section{\texorpdfstring{Hardness of $\maxPCSP(K_2, \G_3)$}{Hardness of maxPCSP(K2,G3)}}
\label{sec:hardness}

In this section, we will prove the hardness part of the following result.

\main*

Our general strategy will be to gadget reduce from the 3-bit PCP of H{\aa}stad ~\cite{Hastad01}, similarly to~\cite{Trevisan00:sicomp} or~\cite{BGS:98}. The main difficulty comes in from the fact that it is not possible to ``negate'' variables in an obvious way, since ``negation'' is not globally preserved by the property of being triangle-free, as opposed to that of being bipartite. Some mild complications will be forced by this. Recall first the definition of exactly-3 linear equations.

\begin{definition}
    In the problem $\elin_\delta$, one is given a system of mod-2 linear
    equations with exactly 3 variables per equation; i.e.~$x + y + z \equiv 0
    \bmod 2$ or $x + y + z \equiv 1 \bmod 2$. If it is possible to
    simultaneously solve a $1 - \delta$ fraction of all the equations, one must
    answer \textsc{Yes}; otherwise, if it is not even possible to simultaneously solve a $\frac{1}{2} + \delta$ fraction of the equations, one must answer \textsc{No}.
\end{definition}

\begin{theorem}[\cite{Hastad01}]
    For every small enough $\delta$, the problem $\elin_\delta$ is \NP-hard.
\end{theorem}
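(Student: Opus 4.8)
This is H\aa{}stad's theorem \cite{Hastad01}; in the paper proper one would simply cite it, so what follows is a sketch of the proof. The plan is to build a three-query probabilistically checkable proof (PCP) for an \NP-complete problem whose acceptance predicate, on the three queried bits, is a single equation $\bmod\,2$, with completeness $1-\epsilon$ and soundness $\tfrac12+\delta$. Reading the proof bits as variables and each setting of the verifier's random coins as an equation, a \textsc{Yes}-instance produces a system with a simultaneously satisfiable $(1-\epsilon)$-fraction of equations, and a \textsc{No}-instance one in which no more than a $(\tfrac12+\delta)$-fraction is satisfiable; taking $\delta<\epsilon$ this is exactly $\elin_\delta$.

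The starting point is \textsc{Label Cover}: from the PCP theorem together with Raz's parallel repetition one obtains, for every $\gamma>0$, an \NP-hard gap problem on bipartite constraint graphs with label sets $[L],[M]$ and an edge-indexed family of projections $\pi_e:[L]\to[M]$, such that \textsc{Yes}-instances are perfectly satisfiable while \textsc{No}-instances admit no labelling satisfying a $\gamma$-fraction of edges. The verifier expects the proof to contain the \emph{long codes} of all labels — for a left vertex the table of a function $f:\{\pm1\}^L\to\{\pm1\}$, honestly the dictator $y\mapsto y_a$, and similarly $g$ on $\{\pm1\}^M$ for a right vertex — with every table \emph{folded} so that $f(-y)=-f(y)$; folding forces $\hat f_\emptyset=0$ and puts all nonzero Fourier mass on odd-size sets. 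It then picks a random edge with projection $\pi$, samples $x\in\{\pm1\}^M$ and $y\in\{\pm1\}^L$ uniformly and a noise vector $\mu\in\{\pm1\}^L$ with each coordinate $-1$ independently with probability $\epsilon$, sets $z_l=x_{\pi(l)}\,y_l\,\mu_l$, queries $g(x),f(y),f(z)$, and accepts iff $g(x)f(y)f(z)=1$. With honest long codes of a satisfying labelling this product equals $\mu_a$, so completeness is $1-\epsilon$, and each test is a single linear equation on three proof bits.

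For soundness one writes $\Pr[\text{accept}\mid e]=\tfrac12+\tfrac12\,\mathbb E_{x,y,\mu}[g(x)f(y)f(z)]$, expands $f$ and $g$ in the Fourier basis, and — using orthogonality, $\mathbb E_\mu[\chi_\alpha(\mu)]=(1-2\epsilon)^{|\alpha|}$, and the identity that the $\alpha$-character of $z$ equals $\chi_{\pi_2(\alpha)}(x)\,\chi_\alpha(y)\,\chi_\alpha(\mu)$ with $\pi_2(\alpha)=\{m:|\pi^{-1}(m)\cap\alpha|\text{ odd}\}$ — obtains $\mathbb E[g(x)f(y)f(z)]=\sum_{\alpha\neq\emptyset}\hat f_\alpha^2\,\hat g_{\pi_2(\alpha)}\,(1-2\epsilon)^{|\alpha|}$, the empty set being excluded thanks to folding. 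If this quantity is at least $2\delta$ on average over edges, then the randomized labelling that assigns a left vertex a uniformly random element of $\alpha$ chosen with probability $\hat f_\alpha^2$ (and a right vertex likewise via $\hat g^2$) satisfies a $\mathrm{poly}(\delta,\epsilon)$-fraction of edges in expectation, contradicting \textsc{Label Cover} soundness once $\gamma$ is chosen small enough. Finally, since $x$ is uniform the locations queried for $f(z)$ and $f(y)$ coincide with negligible probability, so after discarding that vanishing fraction every equation genuinely involves three distinct variables, and the $\pm1$ folding makes both constant terms $0,1$ appear — yielding a bona fide $\elin_\delta$ instance.

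The main obstacle is the Fourier soundness analysis: correctly tracking the noise and the projection, using folding to kill the constant Fourier coefficient, and then turning ``acceptance probability $>\tfrac12+\delta$'' into a labelling good enough to beat the chosen \textsc{Label Cover} soundness $\gamma$. A close second is importing a sufficiently strong parallel-repetition bound to make \textsc{Label Cover} \NP-hard with arbitrarily small $\gamma$ in the first step.
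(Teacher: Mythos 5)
The paper does not prove this statement; it is imported verbatim from H\aa{}stad's work \cite{Hastad01}, so citing it is all that is required. Your sketch is a faithful outline of H\aa{}stad's actual argument (Label Cover via PCP plus parallel repetition, folded long codes, the noisy three-bit linearity test with completeness $1-\epsilon$, and the Fourier-analytic soundness/decoding step), so it is consistent with the cited source and with how the paper uses the theorem.
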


To deal with our negation problems, we will need a ``balanced'' version of this problem.

\begin{definition}
    In the problem $\belin_\delta$, one is given a system of mod-2 linear equations with exactly 3 variables per equation; i.e.~$x + y + z \equiv 0 \bmod 2$ or $x + y + z \equiv 1 \bmod 2$. Furthermore, the number of equations of the two types is equal. 
    A \emph{balanced solution} to such a system of equations is one that satisfies exactly as many equations of form $x + y + z \equiv 0 \bmod 2$ as those of form $ x + y + z \equiv 1 \bmod 2$.
    
    If it is possible to find a balanced solution that satisfies a $1 - \delta$ fraction of all the equations, one must answer \textsc{Yes}; otherwise, it if is not even possible to find any (\emph{possibly even unbalanced}) solution that satisfies a $\frac{1}{2} + \delta$ fraction of the equations, one must answer \textsc{No}.
\end{definition}

We believe 
that~\cite{Hastad01} proves, without being explicit about it, \NP-hardness of $\belin_\delta$, although it is not straightforward to see it from the proof in~\cite{Hastad01}. For completeness, we provide a simple, self-contained reduction.

\begin{lemma}\label{lem:source}
    For every small enough $\delta$, the problem $\belin_\delta$ is \NP-hard.
\end{lemma}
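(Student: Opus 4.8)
The plan is to reduce from $\elin_{\delta'}$ for a suitable $\delta' $ proportional to $\delta$. Given an instance $\Phi$ of $\elin_{\delta'}$ with $m$ equations, say $m_0$ of type $x+y+z\equiv 0$ and $m_1$ of type $x+y+z\equiv 1$, I first want to make the two types equinumerous without changing satisfiability in an essential way. The obvious move is to pad: introduce fresh auxiliary variables and add many ``trivial'' equations of the deficient type. For instance, if $m_1 < m_0$, I can add $m_0 - m_1$ equations of the form $a_i + b_i + c_i \equiv 1$ on pairwise-disjoint fresh variable triples; each such equation is independently satisfiable, so the optimum of $\Phi$ changes in a completely controlled way (every new equation can always be satisfied). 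After this step the instance $\Phi'$ has $2\max(m_0,m_1) =: M$ equations, equally split between the two types, and the padding is at most an $m$ fraction, so it only perturbs the completeness/soundness fractions by a constant factor, which is absorbed by choosing $\delta'$ small enough relative to $\delta$.

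The real issue is \emph{balance} of the solution, not just of the instance: the \textsc{Yes} case must produce an assignment satisfying a $(1-\delta)$-fraction \emph{and} satisfying equally many equations of each type. Here I would exploit a symmetry trick: equations of type $x+y+z\equiv 1$ and $x+y+z\equiv 0$ are interchanged by flipping one variable's value. Concretely, take the padded instance $\Phi'$ and build $\Phi''$ as the disjoint union of $\Phi'$ with a ``mirror copy'' $\overline{\Phi'}$ in which every equation $x+y+z\equiv c$ is replaced by $x'+y'+z'\equiv 1-c$ on a disjoint set of primed variables, except that we single out one variable per equation to carry the parity flip --- more carefully, replace each equation on variables $(x,y,z)$ by one on fresh variables where exactly the constant is toggled; since the primed variable set is independent of the original, any assignment $\sigma$ to $\Phi'$ can be copied to $\overline{\Phi'}$ by flipping the value of one designated variable in each equation, turning a satisfied type-$c$ equation into a satisfied type-$(1-c)$ equation. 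Then in $\Phi''$, a solution that is good on $\Phi'$ and good on $\overline{\Phi'}$ automatically satisfies equally many equations of each type (the counts in the mirror copy are the reflections of those in the original). The \textsc{No} direction is easy and monotone: if no solution of $\Phi$ beats $\tfrac12+\delta'$, then by the same accounting no (possibly unbalanced) solution of $\Phi''$ beats $\tfrac12+\delta$, since $\Phi''$ is essentially two copies of (a padding of) $\Phi$ and the $\frac12$-threshold behaviour is preserved under disjoint union and under the trivial padding.

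In summary, the steps in order are: (1)~pad the deficient equation type with disjoint always-satisfiable trivial equations so the instance is type-balanced, tracking the $O(m)$ change in the number of equations; (2)~take a disjoint mirrored copy in which each equation's constant is toggled on fresh variables, so that any assignment lifts to one with reflected type-counts, forcing solution-balance in the \textsc{Yes} case; (3)~verify completeness: a balanced $(1-\delta)$-solution of the combined instance comes from a $(1-\delta')$-solution of $\Phi$ plus its mirror lift; (4)~verify soundness: a $(\tfrac12+\delta)$-solution of the combined instance restricts to a $(\tfrac12+\delta')$-solution of $\Phi$, contradiction; (5)~choose $\delta'$ as a small constant multiple of $\delta$ so that ``small enough $\delta$'' in the hypothesis translates to ``small enough $\delta'$''. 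I expect the main obstacle to be the bookkeeping in step~(2): making precise exactly which variable in each mirrored equation absorbs the parity toggle so that (a)~the mirrored instance really does have the two equation types in equal numbers, and (b)~an arbitrary assignment to the originals genuinely extends to one whose satisfied-equation counts in the mirror are the mirror images of the originals --- it is easy to write down a construction that is balanced as an \emph{instance} but where the \textsc{Yes}-case assignment fails to be a \emph{balanced solution}, so the disjoint-copy-with-toggle has to be set up with care.
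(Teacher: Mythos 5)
There is a genuine gap, and it sits in your step (1). Padding the deficient type with always-satisfiable equations on fresh variables destroys soundness: if the original instance has, say, only type-$0$ equations ($m_1=0$), you add $m$ trivial type-$1$ equations, and then even a \textsc{No} instance of $\elin_{\delta'}$ (no assignment beats $\tfrac12+\delta'$) yields a padded instance where one can satisfy all $m$ trivial equations plus half of the originals, i.e.\ roughly a $\tfrac34$ fraction. This is an \emph{additive} inflation of the satisfiable fraction that no choice of $\delta'$ ``small relative to $\delta$'' can absorb, so the claim in your step (4) that a $(\tfrac12+\delta)$-solution of the combined instance yields a $(\tfrac12+\delta')$-solution of $\Phi$ is false as stated. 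The padding is also unnecessary: your mirrored copy in step (2) already makes the \emph{instance} type-balanced on its own, since every equation of type $c$ contributes a mirror equation of type $1-c$, giving exactly $m$ equations of each type. Deleting step (1) entirely is the repair, and it is what the paper does: the reduction is simply $\mathcal{E}\sqcup\mathcal{E}'$ where $\mathcal{E}'$ is the constant-toggled copy on primed variables.

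The second problem is the lift in step (2), which you yourself flag but do not resolve. ``Flipping the value of one designated variable in each equation'' is not a well-defined assignment, because a primed variable occurs in many equations and cannot be flipped in one and unflipped in another; what you need is a set $S$ of primed variables meeting every equation an odd number of times, and a system with exactly one designated variable per equation need not exist. The clean fix (and the paper's choice) is to flip \emph{all} primed variables, i.e.\ set $x_i'\mapsto 1-c(x_i)$: since each equation has exactly three variables, the parity of the sum flips, so a type-$c$ equation satisfied by $c$ becomes a satisfied type-$(1-c)$ mirror equation, which gives both the $(1-\delta)$ fraction and the exact balance of the \textsc{Yes}-case solution. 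Finally, your soundness sketch should also handle the case where the $(\tfrac12+\delta)$-mass of a solution to the combined instance lies on the mirror half: there you cannot ``restrict'' to $\Phi$ directly (the variables are disjoint), but mapping $x_i\mapsto 1-d(x_i')$ flips the parity back and recovers a $(\tfrac12+\delta)$-solution of the original system. With these repairs your argument coincides with the paper's proof.
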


\begin{proof}
    We reduce from $\elin_\delta$ to $\belin_\delta$.

    \paragraph{Reduction.}
    Given a system of $m$ equations $\mathcal{E}$ on $n$ variables, which contains the equations $x_i + y_i + z_i \equiv p_i \bmod 2$ for $i \in [m]$,\footnote{As usual, we denote by $[m]$ the set $\{1,\ldots,m\}$.} we define the system of equations $\mathcal{E}'$ on $n$ variables, which contains the equations $x_i' + y_i' + z_i' \equiv 1 - p_i \bmod 2$ for $i \in [m]$. We then return the system of equations $\mathcal{E} \sqcup \mathcal{E}'$ i.e.~the disjoint union of the two systems.

    \paragraph{Completeness.} Suppose that $\mathcal{E}$ has a solution $x_i \mapsto c(x_i)$ that satisfies a $1 - \delta$ fraction of the equations. Then the system $\mathcal{E} \sqcup \mathcal{E}'$ has a balanced solution that also satisfies a $1 - \delta$ fraction of its equations, namely the one that sends $x_i$ to $c(x_i)$ and $x_i'$ to $1- c(x_i)$. This solution is balanced since every equation $x_i + y_i + z_i \equiv p_i \bmod 2$ that it solves within $\mathcal{E}$ can be paired up with an equation $x_i' + y_i' + z_i' \equiv 1 - p_i \bmod 2$ which is solved in $\mathcal{E}'$.

    \paragraph{Soundness.} Suppose $\mathcal{E} \sqcup \mathcal{E}'$ has a
    solution $x_i \mapsto c(x_i)$, $x_i' \mapsto d(x_i')$, which satisfies a $\frac{1}{2} + \delta$ fraction of the equations. Such a solution must either satisfy a $\frac{1}{2} + \delta$ fraction of the equations within $\mathcal{E}$, or a $\frac{1}{2} + \delta$ fraction of the equations within $\mathcal{E}'$. In the first case, $c$ is the required solution for the original problem; in the second, $x_i \mapsto 1 - d(x_i')$ is the required solution.
\end{proof}

We now define the notion of ``gadget'' that we will need for this particular reduction. This is along the same lines as~\cite{BGS:98,Trevisan00:sicomp}, but (i) generalised to deal with promise problems and (ii) specialised to our particular promise problem.

For the following, if $G = (V, E)$ is any bipartite graph, and $V' \subseteq V$, then we say that a function $c : V' \to \{ 0, 1 \}$ is compatible with $G$ if it is possible to extend $c$ into a 2-colouring of $G$.

\begin{definition}\label{def:gadget}
    A gadget with performance $\alpha \in \N$ and parity $p \in \{0, 1\}$ is a graph $G = (V, E)$, with $0, x, y, z \in V$, where the following hold.
    \begin{enumerate}
        \item\label{item:soundness} For any function $c : \{ 0, x, y, z \} \to \{0, 1\}$ such that $c(0) + c(x) + c(y) + c(z) \equiv p \bmod 2$, there exists a bipartite subgraph $H$ of $G$ with $\alpha$ edges, such that $c$ is compatible with $H$.
        \item\label{item:completenes1} Any triangle-free subgraph of $G$ has at most $\alpha$ edges.
        \item\label{item:completenes2} Every triangle-free subgraph $H$ of $G$ with strictly more than $\alpha-1$ edges (and hence, by \Cref{item:completenes1}, exactly $\alpha$ edges) puts $0, x, y, z$ in the same connected component $C$, and the distance from $0$ to $x, y, z$ respectively is at most 2. Furthermore $C$ is bipartite, and for any $c : \{0, x, y, z\} \to \{0, 1\}$ that is compatible with $C$, we have that $c(0) + c(x) + c(y) + c(z) \equiv p \bmod 2$.
    \end{enumerate}
\end{definition}

\begin{lemma}\label{lem:bins}
    Suppose we have $n$ containers with capacities $c_1, \ldots, c_n \geq 0$. Suppose we distribute a volume of at least $c_1 + \cdots + c_n - n + a$ among the containers, distributing $v_i \leq c_i$ volume to container $i$. Then the number of containers $i$ for which $v_i > c_i - 1$ is at least $a$.
\end{lemma}
\begin{proof}
    We prove this by contrapositive: if we distribute $> c_i - 1$ volume to less than $a$ containers, then we cannot distribute $c_1 + \cdots + c_n - n + a$ or more.

    By permuting containers, we may assume we have distributed a volume $v_i > c_i - 1$ to containers $i = 1, \ldots, a - 1$, and a volume of $v_i \leq c_i - 1$ to containers $i = a, \ldots, n$. Conditional on this, each container is independent, and we see that the maximum volume that can be distributed is $c_1 + \cdots + c_{a - 1}$ in the first $a - 1$ containers, and $(c_a - 1) + \cdots + (c_n - 1) = c_a + \cdots + c_n - n + a - 1$. Hence the total amount of volume we could distribute is at most $c_1 + \cdots + c_n - n + a - 1 < c_1 + \cdots + c_n - n + a$.
\end{proof}

The next theorem encodes our reduction from 
the 3-bit PCP of~\cite{Hastad01} to $\maxPCSP(K_2, \G_3)$. This reduction is standard, needing only some care to deal with the fact that the triangle-free graph selected in the soundness case must be ``connected enough''.

\begin{theorem}
    Suppose that for $i \in \{0, 1\}$ there exist gadgets $G_i$ with performance $\alpha_i$ and parity $i$. Then for every $\epsilon > 0$ it is \NP-hard to approximate $\maxPCSP(K_2, \G_3)$  with approximation ratio $1 - 1 / (\alpha_0 + \alpha_1) + \epsilon$.
\end{theorem}
\begin{proof}
    We reduce $\belin_\delta$
    to approximate 
    $\maxPCSP(K_2, \G_3)$ with approximation ratio $1 - 1 / (\alpha_0 + \alpha_1) + \epsilon$. Our choice of $\delta$ will be bounded above by a value that depends on $\epsilon, \alpha_0, \alpha_1$. We now describe our gadget reduction.

    \paragraph{Reduction.} Suppose we are given an instance of $\belin_{\delta}$, with $2m$ constraints and $n$ variables $V$. Suppose that the constraints are $x_i^0 + y_i^0 + z_i^0 \equiv 0 \bmod 2$ and $x_i^1 + y_i^1 + z_i^1 \equiv 1 \bmod 2$ for $i \in [m]$.
     Our reduction first creates $n + 1$ vertices, one for every variable, plus a special vertex denoted by $0'$. For every constraint $x_i^p + y_i^p + z_i^p = p \bmod 2$ we create a copy of $G_{p}$, identifying vertices $0, x, y, z$ with $0', x_i^p, y_i^p, z_i^p$. Finally set $\rho = \lceil m(1 - \delta)(\alpha_0 + \alpha_1)\rceil$.

    \paragraph{Completeness.} Suppose that our $\belin{}$ instance has a balanced solution that satisfies at least a $1 - \delta$ fraction of the constraints. We claim that the graph outputted by our algorithm has a cut with at least $\lceil m (1 - \delta) (\alpha_0 + \alpha_1) \rceil$ edges. Indeed, this cut is guaranteed by~\Cref{item:soundness} in~\Cref{def:gadget}: place the vertices corresponding to variables in the original problem in the cut according to their value; then the vertices in the gadgets are placed according to the cut guaranteed by this assumption. The cut then has at least $m(1 - \delta)(\alpha_0 + \alpha_1)$ edges because the solution is guaranteed to be balanced. Since any cut must have an integral number of edges, this cut must even have $\lceil m (1 - \delta) (\alpha_0 + \alpha_1) \rceil$ edges.

    \paragraph{Soundness.} Suppose that the graph outputted by our reduction has a triangle-free subgraph $S$ with at least
    \[
    \lceil
    m (\alpha_0 + \alpha_1 - 1 + 2\delta)
    \rceil \geq
    m (\alpha_0 + \alpha_1 - 1 + 2\delta) = m(\alpha_0 + \alpha_1) - 2m + m + 2m\delta\]
    edges. We call a constraint with parity $i$ ``good'' if the intersection of $S$ with the gadget corresponding to that constraint has strictly more than $\alpha_i- 1$ edges. Observe that at least $m + 2m\delta$ constraints must be good by~\Cref{lem:bins}:  the gadgets for the $2m$ constraints are the containers; by~\Cref{item:completenes1} from~\Cref{def:gadget}, $m$ of the capacities are $\alpha_0$, and $m$ are $\alpha_1$; finally, a constraint is ``good'' if its container is allocated strictly more volume than its capacity minus one.
    We now show how to create a solution to the original $\belin{}$ instance that satisfies all the good constraints: this solution then satisfies a $(m + 2m \delta) / 2m = 1/2 + \delta$ fraction of the constraints.

    To create our solution, consider the subgraph $S$. Find the shortest path from $0'$ to every variable, then set the variable to the parity of the length of that path. (If there is no path, then we can set that variable arbitrarily.) Consider now any good constraint: suppose it is $x + y + z \equiv p \bmod 2$. Suppose $S'$ is the intersection of $S$ with the gadget for this constraint. Since $S$ is triangle-free, $S'$ must be triangle-free; since the constraint is good, $S'$ has strictly more than $\alpha_i - 1$ edges. By~\Cref{item:completenes2} of~\Cref{def:gadget}, $S'$ is bipartite; furthermore $S'$ connects $0'$ to $x, y, z$ by paths of length at most 2. Thus the shortest path from $0'$ to $x, y, z$ \emph{in $S$ (as well as $S'$)} is also of length at most 2. Since $S$ is triangle-free, the parity of the length of the shortest path from $0'$ to $x, y, z$ in $S$ matches the parity of any path from $0'$ to $x, y, z$ in $S'$. (To see why: if this were not the case, then for one of $x, y, z$, there are two paths from $0'$ of different parities, both of length at most 2. The first path is the one that exists in $S$ by assumption, the second one is the one that exists in $S'$ by \Cref{item:completenes2} of \Cref{def:gadget}. This implies a triangle in $S$, a contradiction.)
    Colouring according to the sides of the bipartite graph $S'$ satisfies the constraint according to~\Cref{item:completenes2} of~\Cref{def:gadget}, and our colouring is the same as this. So all good constraints are satisfied.

    \paragraph{Hardness factor.} Let
    \begin{multline*}
    \alpha =
    \frac{m(\alpha_0 + \alpha_1 - 1 +  2\delta)}{\lceil m(1 - \delta)(\alpha_0 + \alpha_1)\rceil}
    \leq 
    \frac{m(\alpha_0 + \alpha_1 - 1 +  2\delta)}{m(1 - \delta)(\alpha_0 + \alpha_1)}
    =
    \left(
    \frac{\alpha_0 + \alpha_1 - 1 + 2\delta}{\alpha_0 + \alpha_1}
    \right)\left(\frac{1}{1 - \delta}\right)\\
    =
    \left(1 - \frac{1}{\alpha_0 + \alpha_1} + O(\delta)\right)(1 + O(\delta))
    = 1 - \frac{1}{\alpha_0 + \alpha_1} + O(\delta),
    \end{multline*}
    where the $O$ notation hides factors depending on $\alpha_0,\alpha_1$.
    Note that $\alpha$ can be made at most $1 - 1 / (\alpha_0 + \alpha_1) + \epsilon$ by setting $\delta$ small enough, as a function of $\epsilon, \alpha_0, \alpha_1$.
    
    So far, we have shown that given a $\belin{}$ instance with a balanced solution that satisfies a $1 - \delta$ fraction of the equations, our reduction yields a graph with a cut with at least $\lceil m(1 - \delta)(\alpha_0 + \alpha_1)\rceil = \rho$ edges; and if the graph we output has a triangle-free subgraph with at least $\lceil m(\alpha_0 + \alpha_1 - 1 + 2\delta)\rceil = \lceil \alpha \rho \rceil$ edges then the original instance must have had a solution that satisfies at least a $1 / 2 + \delta$ fraction of equations. It follows that $\maxPCSP(K_2, \G_3)$ is \NP-hard to approximate with approximation ratio $\alpha \leq 1 - \frac{1}{\alpha_0 + \alpha_1} + \epsilon$.
\end{proof}
 
We now exhibit the gadgets. The first gadget is identical to a gadget of Bellare, Goldreich and Sudan~\cite{BGS:98} (although our analysis is slightly more complicated). In~\cite{BGS:98}, this gadget is called ``PC-CUT'', defined immediately before~\cite[Claim~4.17]{BGS:98}. The second gadget is a generalisation of the first. Recall that the gadgets of~\cite{BGS:98} were improved in~\cite{Trevisan00:sicomp}, and indeed the results of~\cite{Trevisan00:sicomp} indicate a generic method to find optimal gadgets for finite-domain CSPs. We do not believe this approach directly applies to our case because the property of being triangle-free is not captured by any finite CSP template (indeed, $\G_3$ is infinite, and any homomorphism-equivalent structure must also be).

We will write our gadgets as graphs with non-negative integer weights for simplicity of presentation. These gadgets can then be implemented by adding edges multiple times.

\begin{lemma}
    There exists a gadget with performance 9 and parity 1.
\end{lemma}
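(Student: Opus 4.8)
The plan is to construct an explicit gadget $G_1 = (V, E)$ on a small vertex set containing the distinguished vertices $0, x, y, z$, and to verify the three conditions of \Cref{def:gadget} with $\alpha = 9$ and $p = 1$. Since we are allowed non-negative integer edge weights, the natural candidate is a graph built from edges among $0, x, y, z$ (possibly with auxiliary vertices) whose total weight is calibrated so that the maximum bipartite subgraph compatible with an admissible parity-$1$ colouring of $\{0,x,y,z\}$ has weight exactly $9$. Following \cite{BGS:98}, a plausible choice is the complete graph on $\{0,x,y,z\}$ with each of the six edges of weight $3$ (total weight $18$): for any $c$ with $c(0)+c(x)+c(y)+c(z)\equiv 1\bmod 2$, exactly one of the four coordinates differs in parity from the other three, so the ``cut'' $\{c=0\}$ vs.\ $\{c=1\}$ is either a $1$-vs-$3$ split or a $3$-vs-$1$ split, which cuts $3$ of the $6$ edges, i.e.\ weight $9$. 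That gives \Cref{item:soundness}.

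Next I would establish \Cref{item:completenes1}: any triangle-free subgraph of $K_4$ (with these weights) has at most $9$ edges. A triangle-free subgraph of $K_4$ can use at most $4$ of the $6$ edges (since $K_4$ minus a perfect matching is a $4$-cycle, which is the largest triangle-free subgraph by edge count), but we need the weighted statement — the maximum-weight triangle-free subgraph is a $4$-cycle of weight $4\cdot 3 = 12 > 9$, so the plain complete-graph choice is \emph{not} correct and must be adjusted. The right fix, matching the claimed performance $9$, is presumably the weighting used in \cite{BGS:98}: for instance put weight on edges so that the max-cut value and the max triangle-free (indeed max bipartite) value coincide at $9$. I would pin down the exact weights by demanding simultaneously (a) every parity-$1$ colouring yields a compatible bipartite subgraph of weight $9$, and (b) no triangle-free subgraph exceeds weight $9$ — likely this forces a specific small configuration such as the ``bipartite-like'' gadget where $0$ is joined appropriately to $x,y,z$. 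The cleanest verified route is to take whatever graph \cite{BGS:98} uses for the parity check $w \oplus x \oplus y \oplus z = 1$ in their Max-Cut reduction, which is exactly designed so its Max-Cut equals $9$ and every cut achieving it encodes a satisfying assignment.

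Then for \Cref{item:completenes2} I would argue: a triangle-free subgraph $H$ of $G$ with weight strictly greater than $8$ (i.e.\ at least $9$, since weights are integral) must achieve the maximum, and by the case analysis of which edges can be present without creating a triangle, $H$ must be (up to symmetry) one of the few optimal configurations. For each such configuration I check directly that it is bipartite, that $0, x, y, z$ lie in one component with $0$ within distance $2$ of each of $x, y, z$, and that the induced $2$-colouring on $\{0,x,y,z\}$ has parity $\equiv 1 \bmod 2$. The distance-$\leq 2$ and connectivity requirements are what force the gadget to have enough ``central'' structure (hence $0$ acting as a hub), so I would design the gadget with that in mind.

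The main obstacle I anticipate is pinning down the precise edge multiplicities: the value $9$ must be simultaneously the max weight of a parity-compatible bipartite subgraph \emph{and} the max weight of \emph{any} triangle-free subgraph, and these two extremal problems have different optima on the obvious $K_4$ weighting, so the gadget must be chosen carefully (and then the case analysis in \Cref{item:completenes2} — enumerating all maximum triangle-free subgraphs and checking each is bipartite with the correct parity and the distance-$2$ hub property — is the bulk of the routine work). I would lean on the fact that a triangle-free subgraph of a graph on few vertices has very limited structure, making the enumeration finite and short, and I would borrow the specific gadget and its Max-Cut analysis from \cite{BGS:98} to handle (a), adding the (slightly more involved, as the authors note) verification that triangle-freeness — not just bipartiteness — does not allow beating weight $9$.
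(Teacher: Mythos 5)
There is a genuine gap: your write-up never actually exhibits a gadget, so the existence claim is not proved. Your first candidate ($K_4$ on $\{0,x,y,z\}$ with every edge of weight $3$) you correctly reject yourself, since a $4$-cycle in it is a triangle-free subgraph of weight $12>9$, violating \Cref{item:completenes1}; but the ``fix'' is left as ``presumably the weighting used in \cite{BGS:98}'', without stating the graph or verifying any of the three conditions of \Cref{def:gadget} for it. Since the lemma is precisely an existence statement, deferring the construction is the missing heart of the proof, and the extra work you flag (bounding \emph{all} triangle-free subgraphs, not just bipartite ones, and checking the connectivity/parity conditions of \Cref{item:completenes2}) is exactly the part that cannot be discharged without a concrete graph in hand.

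The missing idea is an auxiliary fifth vertex. The paper takes the complete graph on $\{0,a,x,y,z\}$ with vertex weights $w(a)=2$, $w(0)=w(x)=w(y)=w(z)=1$ and edge weights $w(i,j)=w(i)w(j)$ (total vertex weight $6$). Then a complete bipartite subgraph with parts $(A,B)$ has weight $\bigl(\sum_{i\in A}w(i)\bigr)\bigl(\sum_{j\in B}w(j)\bigr)\le 3\cdot 3=9$, with equality exactly when $a$ together with one of $0,x,y,z$ forms one side; every parity-$1$ assignment of $\{0,x,y,z\}$ extends to such a $3$--$3$ split (put $a$ with the minority vertex), giving \Cref{item:soundness}. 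Non-bipartite triangle-free subgraphs on five vertices contain a $C_5$ and have weight at most $2+2+1+1+1=7$, so all triangle-free subgraphs of weight greater than $8$ are the $K_{2,3}$'s just described; these are bipartite, place $0$ within distance $2$ of $x,y,z$, and their unique $2$-colouring has $c(0)+c(x)+c(y)+c(z)\equiv 1 \bmod 2$, giving \Cref{item:completenes1,item:completenes2}. Your general plan (calibrate weights so the parity-compatible max cut and the max triangle-free weight coincide, then enumerate the extremal triangle-free subgraphs) is the right shape, but without this specific construction --- in particular without realising that a weighted hub vertex $a$ outside $\{0,x,y,z\}$ is needed to kill the $C_4$/$C_5$ obstructions --- the argument does not go through.
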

\begin{proof}
    Consider the complete graph on $\{ 0, a, x, y, z\}$. Set $w(a) = 2, w(0) = w(x) = w(y) = w(z) = 1$, and give an edge $(i, j)$ weight $w(i, j) = w(i) w(j)$. We now show that this satisfies the conditions from~\Cref{def:gadget}.
    \begin{enumerate}
    \item Without loss of generality suppose $c(0) = 0$. We have two cases. First, suppose that $c(x) = c(y) = c(z) = 1$. Then our cut is $(\{ x, y, z\}, \{ a, 0 \})$ which has weight 9. Conversely, suppose without loss of generality that $c(x) = c(y) = 0, c(z) = 1$. Then our cut is $(\{ 0, x, y \}, \{ a, z \})$, which also has weight 9.
    \item Consider any triangle-free subgraph of the gadget. This graph is either bipartite or it is $C_5$. If it is bipartite, suppose it has parts $(A, B)$. Then
    \[
    \sum_{i \in A, j \in B} w(i, j) = \left(\sum_{i \in A} w(i)\right)\left(\sum_{j \in B} w(j) \right),
    \]
    and the optimal cut puts $a$ (with weight 2) and one other vertex on one side, and the other three vertices on the other side --- this cut has weight 9, as required (all other cuts have weight at most 8). On the other hand, if the triangle-free subgraph is $C_5$ it has weight at most $2 + 2 + 1 + 1 + 1 = 7 \leq 9$.
    \item Consider any triangle-free subgraph of the gadget. By the analysis in the previous item, the only triangle-free subgraphs with weight greater than 8 are isomorphic to $K_{2,3}$, which put $a$ and one of $0, x, y, z$ on one side, and the other three vertices on the other side. It is not difficult to check that all of these graphs connect $0$ to $x, y, z$ with paths of length at most 2, and that the sides of the cut exhibit the correct parity requirements.\qedhere
    \end{enumerate}
\end{proof}

\begin{lemma}
    There exists a gadget with performance 17 and parity 0.
\end{lemma}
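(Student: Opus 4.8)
The plan is to exhibit an explicit weighted graph $G_0$ — a complete graph on six vertices $\{0, a, b, x, y, z\}$, where $a, b$ are two auxiliary vertices carrying a carefully chosen integer weighting (a two-auxiliary analogue of the single-weight-$2$-vertex construction of the previous lemma) — and then verify the three conditions of \Cref{def:gadget} with $\alpha = 17$ and $p = 0$. The $x, y, z$ should play symmetric roles, so the weight function is determined by a handful of parameters: the weights $w(0a), w(0b)$, the common weight of $a$--$\{x,y,z\}$ and of $b$--$\{x,y,z\}$, the weight of $ab$, the common weight $w(0x)=w(0y)=w(0z)$, and the common weight $w(xy)=w(xz)=w(yz)$.

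First I would pin down which bipartite subgraphs witness condition~(1). Up to the $S_3$-symmetry of $\{x,y,z\}$ and global colour-flip there are only two even-parity colourings of $\{0,x,y,z\}$: the "all equal" one, for which the natural witness is the complete bipartite graph between $\{0,x,y,z\}$ and $\{a,b\}$ (a $K_{4,2}$); and the "two-and-two" one, say $(0,0,1,1)$, for which the natural witness is a $K_{3,3}$ separating $\{0, x, b\}$ from $\{a, y, z\}$. I would then choose the weights so that \emph{both} of these subgraphs have weight exactly $17$; the $S_3$-symmetry then handles the two remaining "two-and-two" colourings automatically. This settles condition~(1).

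For conditions~(2) and~(3) the two facts to establish are: (i) the maximum-weight bipartite subgraph of $G_0$ has weight $17$, attained \emph{only} by the $K_{4,2}$ and the three symmetric $K_{3,3}$'s above — proved by a short finite case analysis over the (up to symmetry, four) colourings of $\{0,x,y,z\}$ and the two possible placements of each of $a, b$, with the weights arranged so that the two odd-parity colourings of $\{0,x,y,z\}$ only ever yield cuts of weight $\le 16$; and (ii) every non-bipartite triangle-free subgraph of $G_0$ has weight strictly below $17$ — which, as $G_0$ has only six vertices, reduces to bounding the weight of a $5$-cycle, and with the chosen small weights the heaviest $C_5$ has weight well under $17$. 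Granting (i) and (ii): a triangle-free subgraph of weight exceeding $16$ is bipartite by (ii), hence contained in the cut defined by its bipartition, hence — being of weight $17$ — equal to one of the four extremal cuts of (i); each of these is a connected complete bipartite graph in which $0$ lies at distance at most $2$ from each of $x, y, z$, and whose two proper $2$-colourings place an even number of $\{x,y,z\}$ on the same side as $0$, i.e.\ satisfy $c(0)+c(x)+c(y)+c(z)\equiv 0$. That yields conditions~(2) and~(3).

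The main obstacle is the calibration: one must choose a single weight function for which the $K_{4,2}$ witness and the $K_{3,3}$ witness both have weight exactly $17$, \emph{every} odd-parity cut and \emph{every} non-bipartite triangle-free subgraph stays strictly below $17$, and the full symmetry among $x,y,z$ is retained. A naive product weighting $w(i,j) = w(i)w(j)$ — as used in the parity-$1$ gadget — provably cannot work here, since on a complete graph that forces the maximum cut to equal a product of two integers and hence never $17$; so one genuinely needs a non-product weighting, and checking that a concrete such choice simultaneously threads all of these inequalities (and that the extremal cuts are exactly the four claimed complete bipartite graphs) is the technical heart of the proof.
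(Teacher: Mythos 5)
Your skeleton matches the paper's construction closely (the paper also takes the complete graph on six vertices with two weight-$2$ auxiliary vertices, calls your $b$ ``$1$'', and uses exactly your four extremal cuts), but as written the proposal has two genuine gaps. First, the step you yourself call the technical heart --- exhibiting one concrete weight function for which both witnesses weigh exactly $17$, all odd-parity cuts stay at most $16$, and all non-bipartite triangle-free subgraphs stay below $17$ --- is never carried out, so no gadget is actually produced. The paper does this with the near-product weighting $w(i,j)=w(i)w(j)+\delta(i,j)$, where $w(a)=w(1)=2$, all other vertex weights are $1$, and $\delta$ adds a single extra unit on the edge $(0,1)$; then a cut has weight (product of the side weights) plus $1$ exactly when $0$ and $1$ are separated, so the maximum $17=4\times 4+1$ is attained precisely by the four cuts you list, and every other cut is at most $16$. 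Without some such explicit calibration and verification, conditions (1)--(3) of \Cref{def:gadget} are only asserted, not proved.

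Second, your treatment of the non-bipartite case is not just unfinished but incorrect as stated: on six vertices the maximal non-bipartite triangle-free subgraphs are not $5$-cycles but have $7$ edges (a $C_5$ together with the sixth vertex joined to two non-adjacent cycle vertices), since a chord of a $C_5$ or a third neighbour on the cycle would create a triangle. Bounding only the heaviest $C_5$ therefore does not bound the quantity you need --- with the paper's weights the heaviest $C_5$ has weight about $11$ while the $7$-edge configuration can reach $15$ --- so your step (ii) would fail for a weighting where the two extra edges push a non-bipartite subgraph past $16$. The paper handles exactly this configuration: it shows the product part contributes at most $14$ (if it exceeded $14$ some edge would contribute more than $2$, forcing $a$ and $1$ adjacent, which caps the total at $13$), so with the $\delta$ term the weight is at most $15<17$. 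You would need an analogous argument for the $7$-edge graphs, not just for $5$-cycles, to make conditions (2) and (3) go through.
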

\begin{proof}
    Consider the complete graph on $\{0, 1, a, x, y, z\}$. Let $w(a) = w(1) = 2$, $w(0) = w(x) = w(y) = w(z) = 1$, $\delta(0, 1) = \delta(1, 0) = 1$, whereas otherwise $\delta(i, j) = 0$. Then define $w(i, j) = w(i) w(j) + \delta(i, j)$.

    \begin{enumerate}
        \item Without loss of generality suppose $c(0) = 0$. There are two cases. First suppose $c(x) = c(y) = c(z) = 0$. Then our cut is $(\{x, y, z, 0\}, \{a, 1\})$ with weight 17. Otherwise, suppose without loss of generality that $c(x) = 0, c(y) = c(z) = 1$. Then our cut is $(\{ 1, y, z\}, \{ 0, a, x\})$, with weight 17.
        \item Consider all the triangle-free subgraphs of the gadget. This subgraph is either bipartite, or it is a subgraph of the following graph
\begin{center}
    \begin{tikzpicture}[scale=0.6]
        \node[draw, circle, very thick] (A) at (0:3cm) {};
        \node[draw, circle, very thick] (B) at (60:3cm) {};
        \node[draw, circle, very thick] (C) at (120:3cm) {};
        \node[draw, circle, very thick] (D) at (180:3cm) {};
        \node[draw, circle, very thick] (E) at (240:3cm) {};
        \node[draw, circle, very thick] (F) at (300:3cm) {};
        \draw[black, very thick] (A) -- (B) -- (C) -- (D) -- (E) -- (A) (D) -- (F) -- (A);
    \end{tikzpicture}
\end{center}
    First we deal with the bipartite case. The weight of a complete bipartite graph with parts $(A, B)$ is
    \[
    \left(\sum_{i \in A, j \in B} \delta(i, j)\right) + \left(\sum_{i \in A} w(i)\right)\left( \sum_{j \in B} w(j)\right).
    \]
    The first part of the sum is 1 if and only if $0$ and $1$ are in different sides of the cut, and the second is maximised exactly when the two sides of the cut are as even as possible i.e.~have weight 4 and 4 respectively --- cuts that satisfy both of these conditions weight at most $17 = 4 \times 4 + 1$ edges, while all other cuts have weight at most 16.
    
    Now consider a non-bipartite triangle-free subgraph. The largest this could be is the graph pictured above. First consider the contribution of $w(i) w(j)$ to the total weight. Suppose for contradiction that the contribution is $> 14$; since there are 7 edges, there must exist one edge whose contribution is $> 2$. Since $w(i) w(j) \in \{1, 2, 4\}$, and $w(i)w(j) = 4$ only for $\{i, j \} = \{ a, 1 \}$, $a$ and $1$ are adjacent within the subgraph. Since any way of placing $a, 1$ adjacently within the subgraph leads to them having at most 3 other neighbours, the contribution of $w(i) w(j)$ is at most $4 + 3 \times 2 + 3 \times 1 = 13 \not > 14$, a contradiction. So the contribution of $w(i) w(j)$ is at most 14, and the total weight is at most 15.

    \item Consider now any triangle-free subgraph of the gadget with weight greater than 16. By the analysis in the previous item, this subgraph must be a complete bipartite subgraph where the two parts both have weight 4, and 0 and 1 are put on opposite sides of the cut. Consider the side of $a$: if it is on the same side as $1$, then $x, y, z$ are all on the opposite side i.e.~together with 0. Conversely, if $a$ is on the same side as $0$, then two of $x, y, z$ must be on the same side as $1$, and the remaining vertex among $x, y, z$ must be on the same side as $0$. It is not difficult to check that these bipartite subgraphs satisfy the desired connectivity and parity conditions.\qedhere
    \end{enumerate}
\end{proof}

Note that we do \emph{not} claim these gadgets are optimal. Hence it may be possible to improve the hardness factor even more via a better crafted gadget. 

\section{\texorpdfstring{Hardness of $\maxPCSP(C_{2k+1}, K_\ell)$}{Hardness of maxPCSP(C2k+1, Kl)}}
\label{sec:ugchardness}

In this section, we will prove the following result.

\thmUGChardness*

The proof is a generalisation of the proof
in~\cite{Dinur09:sicomp} that establishes the \NP-hardness of \emph{almost
3-colouring} (i.e.~given an $n$ vertex graph $G$, output \textsc{Yes} if $G$ has
a 3-colourable $(1 - \epsilon)n$ vertex-induced subgraph, and \textsc{No} if $G$
does not even have an independent set with more than $\epsilon n$ vertices), assuming the \UGC.
We will need, unlike~\cite{Dinur09:sicomp}, a
\emph{multilayered unique games conjecture}, in the style of~\cite{BWZ21}. We
first set up the necessary ingredients. The proof is based on \emph{Markov-chain noise operators} --- the one we will use is intimately related to $C_{2k + 1}$.

\begin{definition}
    Define $M_{2k+1}$ to be the matrix which has $\frac{1}{2}$ at position $(i, j)$
    if and only if $|i - j| \equiv 1 \bmod 2k+1$. I.e.~$M_{2k+1}$ is the
    \emph{circulant matrix} given by the vector $(0, \frac{1}{2}, 0, \ldots, 0,
    \frac{1}{2})$, of length $2k+1$. Note that $M_{2k+1}$ is the transition matrix of a Markov chain on $[2k+1]$, and $(i, j)$ has nonzero transition probability if and only if $(i, j)$ is an edge of $C_{2k+1}$.
\end{definition}

\begin{lemma}
    The uniform distribution is the stationary distribution of $M_{2k+1}$.
\end{lemma}
\begin{proof}
    Easy to check, but also follows immediately since $M_{2k+1}$ is the random walk on $C_{2k+1}$, which is an undirected, connected, regular graph.
\end{proof}

\begin{lemma}
    The eigenvalues of $M_{2k+1}$ are $\cos(2i\pi / (2k+1))$ for $i = 0, \ldots, 2k$. In particular, this matrix has one eigenvalue with absolute value 1; all the others have absolute at most $\cos(1 - \pi / (2k + 1)) < 1$.
\end{lemma}
\begin{proof}
    This follows from the formula for the eigenvalues of a circulant matrix.
\end{proof}

These are the key properties needed to apply the theory of \cite{Dinur09:sicomp}. We will need a multi-layered Unique Games Conjecture, which we now state.

\begin{definition}\label{def:layered}
    An $\ell$-layered unique label-cover instance consists of a set of variables
    $X_1, \ldots, X_\ell$, a domain $[D]$, and a multiset of constraints. Each
    constraint consists of $\ell$ variables $(x_1, \ldots, x_\ell) \in X_1
    \times \cdots \times X_\ell$, together with a family of permutations
    $\pi_{ij}$ on $[D]$ for $1 \leq i < j \leq \ell$, such that $\pi_{ik} =
    \pi_{jk} \circ \pi_{ij}$. A solution is an assignment $c : (X_1  \cup \ldots
    \cup X_\ell) \to [D]$. The assignment $c$ \emph{strongly satisfies} a
    constraint given by $(x_1, \ldots, x_{\ell})$ and $(\pi_{ij})_{1 \leq i < j
    \leq \ell}$ if $\pi_{ij}(c(x_i)) = c(x_j)$ \emph{for all $1 \leq i < j \leq
    \ell$}. The assignment $c$ \emph{weakly satisfies} this same constraint if
    $\pi_{ij}(c(x_i)) = c(x_j)$ \emph{for at least one pair $1 \leq i < j \leq
    \ell$}. The strong value of an instance is the maximum fraction of
    constraints that can be simultaneously strongly satisfied by some
    assignment; the weak value is given by the maximum fraction of constraints
    that can be simultaneously weakly satisfied by some assignment.
\end{definition}

Note that for $\ell = 2$, weak satisfaction and strong satisfaction (and hence weak and strong values) coincide. Hence for $\ell = 2$ we drop the weak/strong distinction.

\begin{conjecture}[\UGC~\cite{Khot02stoc}]\label{conj:UGC}
For every $\epsilon$ there exists $D$ such that, given a 2-layered unique label-cover instance with domain $[D]$, it is \NP-hard to distinguish if the value is at least $1 - \epsilon$ or not even $\epsilon$.
\end{conjecture}

We will show that \Cref{conj:UGC} implies the following conjecture. 

\begin{conjecture}[Multilayered UGC]\label{conj:mulUGC}
For every $\epsilon, \ell \geq 2$ there exists some $D$ such that, given an $\ell$-layered unique label-cover instance with domain $[D]$, it is \NP-hard to distinguish if the strong value is at least $1 - \epsilon$, or if the weak value is not even $\epsilon$.
\end{conjecture}

\begin{proof}[{Proof of \Cref{conj:mulUGC} from \Cref{conj:UGC}}]

Our reduction is essentially identical to that of~\cite{BWZ21}, which in turn builds on~\cite{DinurGKR05}. Our proof differs from the proof in~\cite{BWZ21} in two ways: (i) we must show that this reduction, if given a unique label cover, produces a multilayered label cover with bijective constraints, and (ii) we need to take care of the completeness case, which is imperfect here unlike in~\cite{BWZ21}. The proof of soundness is identical to that in~\cite{BWZ21}.

\paragraph{Reduction.} Suppose we are given a 2-layered unique label-cover instance with sets of variables $A, B$, domain $D$, and constraints $(a, b) \in E$ with permutations $\pi_{ab}$ for $(a, b) \in E$. Our reduction produces an $\ell$-layered unique label-cover instance, with sets of variables $X_1, \ldots, X_\ell$, where $X_i = B^{i - 1}\times A^{\ell - i}$, on domain $[D]^{\ell - 1}$. For every $(\ell - 1)$-tuple of constraints $(a_1, b_1), \ldots, (a_{\ell -1 }, b_{\ell -1 }) \in E$, where we let $\pi_i = \pi_{a_i, b_i}$, we create a constraint on variables $(x_1, \ldots, x_{\ell})$, where
\[
x_i = (b_1, \ldots, b_{i - 1}, a_i, \ldots, a_{\ell - 1}).
\]
Now, we define $\pi_{ij}$ as follows. Observe that $x_i$ and $x_j$ share the first $i - 1$, as well as the last $\ell - j$ variables in common --- the variables on the indices in between (i.e.~$i$ up to $j - 1$) are different. Hence we write
\[
\pi_{ij}(d_1, \ldots, d_{\ell - 1})
=
(d_1, \ldots, d_{i - 1}, \pi_{i}(d_{i}), \ldots, \pi_{j - 1}(d_{j - 1}), d_{j}, \ldots, d_{\ell - 1}).
\]
Observe that every such constraint is a bijection on $[D]^{\ell - 1}$ --- indeed,
\[
\pi_{ij}^{-1}(d_1, \ldots, d_{\ell - 1})
=
(d_1, \ldots, d_{i - 1}, \pi_{i}^{-1}(d_{i}), \ldots, \pi_{j - 1}^{-1}(d_{j - 1}), d_{j}, \ldots, d_{\ell - 1}).
\]
Furthermore, it is easy to check that $\pi_{ik} = \pi_{jk} \circ \pi_{ij}$ for every $1 \leq i < j < k \leq \ell$. Noting that the number of vertices and edges is polynomial completes the reduction.

\paragraph{Completeness.} Suppose that the original instance has value $1 -
  \delta$. We need to show that we can select $\delta$ small enough, independent of $D$, so that the final instance has strong value $1 - \epsilon$, for every $\epsilon, \ell$. Let $c$ be the solution that witnesses the value of the original instance. We claim that $c'$ given by
\[
c'(x_1, \ldots, x_{\ell - 1}) = (c(x_1), \ldots, c(x_{\ell - 1}))
\]
has strong value at least $(1 - \delta)^{\ell - 1}$. Setting $\delta = 1 -
  \sqrt[\ell - 1]{1 - \epsilon}$ makes this value equal $1 - \epsilon$, as
  required. To see why this is the case, suppose we select a constraint of the
  output instance uniformly at random. We want to show that this constraint is
  satisfied with probability at least $(1 - \delta)^{\ell - 1}$. Observe that,
  by construction, selecting a constraint of the output uniformly at random is
  the same as selecting an $(\ell - 1)$-tuple of constraints $(a_1, b_1),
  \ldots, (a_{\ell - 1}, b_{\ell -1}) \in E$ uniformly and independently at
  random. Each of these is satisfied by $c$ with probability $1 - \delta$; hence
  all constraints are satisfied by $c$ with probability $(1 - \delta)^{\ell - 1}$. Furthermore, when $c$ satisfies all the constraints $(a_i, b_i)$ then 
 $c'$ satisfies the constraint in the output instance built from these constraints. This completes the proof.

\paragraph{Soundness.} Identical to the proof in~\cite{BWZ21}. (This requires that the \UGC{} instance we start from is bi-regular, but this is possible by a standard transformation, cf.~\cite{Khot12:ccc-survey} or~\cite{Arora09:book,DinurGKR05}.)
\end{proof}

The proof of~\Cref{thm:ugcHardness} will be based on the \emph{long code
construction}~\cite{BGS:98}. We will now describe the building blocks of our reduction.

\begin{definition}
    Fix $k, \ell, D$. A \emph{cloud} of vertices, denoted by $\vec{f}$, is a set
    of vertices $f(a_1, \ldots, a_D)$ for $a_1, \ldots, a_D \in [2k + 1]$. For
    $\ell$ clouds of variables $\vec{f}_1, \ldots, \vec{f}_\ell$ and a family of
    permutations $\pi_{ij} : [D] \to [D]$ for $1 \leq i < j \leq \ell$ as in
    \Cref{def:layered}, we define the set of edges $E_\pi(\vec{f}_1, \ldots,
    \vec{f}_\ell)$ as follows: for every $1 \leq i < j \leq \ell$, $a_1, \ldots,
    a_D, b_1, \ldots, b_D \in [2k + 1]$ and for which every $a_t, b_t$ differ by $\pm 1$ modulo $2k + 1$, we include the edge $f_i(a_{\pi_{ij}(1)}, \ldots, a_{\pi_{ij}(D)}) - f_j(b_1, \ldots, b_D)$.
\end{definition}

The following is the key theorem from~\cite{Dinur09:sicomp} that we will use. We will define some notions, however we refer to~\cite{Dinur09:sicomp} for a full treatment.

\begin{definition}
    For a symmetric Markov operator $T$ on $[q]$, and letting $f, g : [q]^D \to \mathbb{R}$, the value $\langle f, T^{\otimes D} g \rangle$ has the following interpretation. Let $x \in [q]^D$ be distributed uniformly at random, and let $y \in [q]^D$ be such that $y_i$ is distributed according to the transition probabilities in $T$ starting at $x_i$. Then $\langle f, T^{\otimes D} g \rangle$ is the expected value of $f(x) g(y)$.

    The quantity $\Gamma_\rho(\mu, \nu)$, has the following interpretation. Let $x, y$ be two normally distributed variables with mean 0, variance 1 and covariance $\rho$. Then this value is the probability that $x \leq \Phi^{-1}(\mu), y \geq \Phi^{-1}(1 - \nu)$, where $\Phi$ is the cumulative distribution function of the normal distribution. Essentially, this value is nondecreasing in both $\mu$ and $\nu$.

    For a function $f : [q]^D \to \mathbb{R}$ and an integer $t$, the value $I_i^{\leq t}(f)$ is the \emph{low-degree influence} of coordinate $i$ in $f$. In particular, if $f(x) \in [0, 1]$, it can be shown that $\sum_{i = 1}^D I_i^{\leq t}(f) \leq t$ and $I_i^{\leq t}(f) \geq 0$. Furthermore, the influence of a coordinate is defined compatibly with permuting coordinates i.e.~if the influence of coordinate $i$ is $x$, and we permute the coordinates of $f$ so as to move $i$ to position $j$, yielding a function $g$, then the influence of $j$ in $g$ is still $x$.
\end{definition}

\begin{theorem}[\cite{Dinur09:sicomp}]\label{thm:dinur_thm}
    Fix $q$ and let $T$ be a symmetric Markov operator on $[q]$ with spectral radius $\rho < 1$ (by spectral radius we mean the second largest eigenvalue of $T$ in absolute value). Then for any $\epsilon > 0$ there exist $\delta > 0$ and $t \in \mathbb{N}$ so that if $f, g : [q]^D \to [0, 1]$ are two functions with
    \[
    \min(I_i^{\leq t}(f), I_i^{\leq t}(g)) < \delta,
    \]
    for all $i$, then
    \[
    \langle f, T^{\otimes D} g\rangle \geq \Gamma_\rho(\mu, \nu) - 
    \epsilon,
    \]
    where $\mu = E[f], \nu = E[g]$.
\end{theorem}

Similarly to \cite[Corollary 4.12]{Dinur09:sicomp}, we will use this in the contrapositive, in particular in the following form.

\begin{corollary}\label{cor:technical}
    Fix $q$ and let $T$ be a symmetric Markov operator on $[q]$ with spectral
    radius $\rho < 1$. For every $\epsilon>0$ there exist $\delta > 0$ and $t \in \mathbb{N}$ so that, if $f, g : [q]^D \to [0, 1]$ are two functions with $E[f] \geq \epsilon, E[g] \geq \epsilon$ and $\langle f, T^{\otimes D} g\rangle \leq \delta$, then there exists $i \in [D]$ so that $I_i^{\leq t}(f) \geq \delta$ and $I_i^{\leq t}(g) \geq \delta$.
\end{corollary}
\begin{proof}
Apply \Cref{thm:dinur_thm}; the values $\epsilon', \delta'$ are the values we use for \Cref{thm:dinur_thm}. We take $\epsilon' = \min(\epsilon, \Gamma_\rho(\epsilon, \epsilon) / 2)$ and take $\delta < \min(\delta', \Gamma_\rho(\epsilon, \epsilon)/2)$.

Suppose that $\langle f, T^{\otimes D} g \rangle \leq \delta < \Gamma_\rho(\epsilon, \epsilon) / 2$ --- it suffices to prove that $\langle f, T^{\otimes D} g\rangle < \Gamma_\rho(\mu, \nu) - \epsilon'$, where $\mu = E[f], \nu = E[g]$. Since $\epsilon \leq \mu, \epsilon \leq \nu$ it is sufficient to show that $\langle f, T^{\otimes D} g \rangle < \Gamma_\rho(\epsilon, \epsilon) - \epsilon'$. Since $\epsilon' \leq \Gamma_\rho(\epsilon, \epsilon) / 2$ it is sufficient to show that $\langle f, T^{\otimes D} g \rangle < \Gamma_\rho(\epsilon, \epsilon) / 2 $ --- which is true by our choice of $\delta$.
\end{proof}

\begin{lemma}\label{lem:technical}
    There exist $s, \delta$ which depend only on $k, \ell$ so that the following holds for any $D$.

    Consider an $\ell$-colouring $\vec{f} \to [\ell]$ of the cloud of vertices $\vec{f}$. There exists a way to assign any such cloud a subset $I(\vec{f})$ of $[D]$ of size $s$ such that the following holds.
    
    Consider any $\ell + 1$ clouds $\vec{f}_1, \ldots, \vec{f}_{\ell + 1}$ and a family of permutations $\pi_{ij}$ as in \Cref{def:layered}. Suppose that these vertices are $\ell$-coloured, and that the $\ell$-colouring satisfies a $(1 - \delta)$-fraction of the edges in $E_\pi(\vec{f}_1, \ldots, \vec{f}_{\ell + 1})$. Then there exist $1 \leq i < j \leq \ell + 1$ such that $\pi_{ij}(I(\vec{f}_i)) \cap I(\vec{f}_j) \neq \emptyset$.
\end{lemma}

\begin{proof}
    Consider some cloud $\vec{f}$. Note that this cloud must have a most frequent colour; let $c$ be that colour, and let $f : [2k + 1]^D \to \{0, 1\}$ be the indicator function of this colour inside $\vec{f}$ i.e.~$f(a_1, \ldots, a_D)$ is 1 if $\vec{f}(a_1, \ldots, a_D)$ is colored by $c$ and 0 otherwise. To construct $I(\vec{f})$, apply \Cref{cor:technical} to $f$ with $T = M_{2k + 1}$ and $\epsilon'= \frac{1}{\ell}$; suppose we get $t$ and $\delta'$ from \Cref{cor:technical}. We set $\delta = \frac{\delta'}{(\ell + 1)^2}$ and 
    \[
    I(\vec{f}) = \{ i \in [D] \mid I_i^{\leq t}(f) \geq \delta \}.
    \]
    Recall that $\sum_{i = 1}^D I_i^{\leq t}(f) = t$ and $I_i^{\leq t}(f) \geq 0$. So $|I(\vec{f})| \leq \frac{t}{\delta}$, and we set $s = \frac{t}{\delta}$.

    Now we must prove that the chain condition holds. Consider clouds $\vec{f}_1, \ldots, \vec{f}_{\ell + 1}$ and a family of permutations $\pi_{ij}$ as in \Cref{def:layered}. By the pigeonhole principle, there exist $i < j$ such that the most frequent colour in $\vec{f}_i, \vec{f}_j$ coincides. Say that colour is $c$, and let $\vec{f} = \vec{f}_i, \vec{g} = \vec{g}_i, \pi = \pi_{ij}$ and let $f, g : [2k + 1]^{D} \to \{0, 1\}$ be the indicator functions of the colour $c$ in $\vec{f}, \vec{g}$.

    We define the function $f^{\pi}$ as follows:
    \[
    f^\pi(x_1, \ldots, x_D) = f(x_{\pi(1)}, \ldots, x_{\pi(D)}).
    \]
    Note that $\pi(I(f)) = I(f^{\pi})$ since influences are compatible with permuting coordinates. Hence by \Cref{lem:technical} it suffices to show that $\langle f^\pi, M_{2k + 1}^{\otimes D} g\rangle \leq \delta' = (\ell + 1)^2 \delta$. Suppose for contradiction that $\langle f^\pi, M_{2k + 1}^{\otimes D} g \rangle > (\ell + 1)^2 \delta$.

    Recall that $\langle f^\pi, M_{2k + 1}^{\otimes D} g \rangle$ can be defined equivalently as follows. Suppose $x_1, \ldots, x_D$ are drawn uniformly and independently at random from $[2k + 1]$, and $y_1, \ldots, y_D$ is drawn so that $y_i$ follows the distribution of one step of $M_{2k + 1}$ starting from $x_i$. In other words, $y_i = (x_i \pm 1) \bmod 2k + 1$, with the $\pm$ being $+$ or $-$ with probability $1/2$. Then $\langle f^\pi, M_{2k + 1}^{\otimes D} g \rangle$ is the expected value of $f(x_{\pi(1)}, \ldots, x_{\pi(D)}) g(y_1, \ldots, y_D)$. But, by construction, this value is equal to the fraction of edges in $E_\pi(\vec{f}_1, \ldots, \vec{f}_{\ell + 1})$ between $\vec{f}$ and $\vec{g}$ that have both endpoints coloured to $c$. Since the number of edges between $\vec{f}_i$ and $\vec{f}_j$ in $E_\pi(\vec{f}_1, \ldots, \vec{f}_{\ell + 1})$ is equal for every choice of $i$ and $j$, of which (coarsely) there are at most $(\ell + 1)^2$, it follows that a $> \delta$ fraction of the edges in $E_{\pi}(\vec{f}_1, \ldots, \vec{f}_{\ell + 1})$ have both endpoints coloured to $c$. This contradicts our assumptions.
\end{proof}

We will now prove~\Cref{thm:ugcHardness}, which we restate here for the reader's
convenience.

\thmUGChardness*
 
\begin{proof}
  Fix $k \geq 1, \ell \geq 3$. We will reduce from~\Cref{conj:mulUGC}, which is implied by the \UGC, to
  $\maxPCSP(C_{2k + 1}, K_{\ell})$ in the case where $\rho = 1 - \epsilon$ for some very small $\epsilon$.
  We will show that there exists some completeness/soundness for \Cref{conj:mulUGC}, call it $\epsilon'$, for which our reduction will work.
    
\paragraph{Reduction.} Suppose we are given an instance of $(\ell + 1)$-layered unique label-cover with domain $D$. Suppose the variable set is $X_1, \ldots, X_{\ell + 1}$, and the constraint set $E \subseteq X_1 \times \cdots \times X_{\ell + 1}$. For every variable $x$, we introduce a cloud of vertices $\vec{f}_x$. For every constraint $(x_1, \ldots, x_{\ell + 1}) \in E$, characterised by functions $(\pi_{ij})_{1 \leq i < j \leq \ell + 1}$, we introduce the set of edges $E_\pi(\vec{f}_{x_1}, \ldots, \vec{f}_{x_{\ell + 1}})$ --- if an edge is included in multiple copies, we introduce it with a higher multiplicity.

\paragraph{Completeness.} Suppose that the original instance has a strong value of $1 - \epsilon'$, witnessed by $c$. We claim that the reduced instance has a $(1-\epsilon')$-fractional $C_{2k + 1}$-colouring. Indeed let $c'$ be this colouring, and colour vertex $f_x(a_1, \ldots, a_D)$ by
\[
c'(f_x(a_1, \ldots, a_D)) = a_{c(x)}.
\]
Now, we must show that a $1 - \epsilon'$ fraction of edges are coloured properly. We sample the edge as follows: first sample a constraint in the original instance uniformly at random, $(x_1, \ldots, x_{\ell + 1})$, then sample some edge from $E_\pi(\vec{f}_{x_1}, \ldots, \vec{f}_{x_{\ell + 1}})$ uniformly at random. Since each constraint spawns the same number of edges, this is the same as sampling an edge uniformly at random. Now, we will show that if the constraint $(x_1, \ldots, x_{\ell + 1})$ was satisfied by $c$, then \emph{all} the edges in $E_\pi(\vec{f}_{x_1}, \ldots, \vec{f}_{x_{\ell + 1}})$ are satisfied --- this is sufficient to show completeness.

  Consider an arbitrary edge $f_{x_i}(a_{\pi_{ij}(1)}, \ldots, a_{\pi_{ij}(D)})
  - f_j(b_1, \ldots, b_D)$ from a constraint $(x_1, \ldots, x_\ell)$ that was
  satisfied in the original instance. The endpoints are coloured by
  $a_{\pi_{ij}(c(x_i))}$ and $b_{c(x_j)}$. Since the original constraint was
  satisfied, we know that $\pi_{ij}(c(x_i)) = c(x_j)$, hence these colours are
  just $a_{c(x_j)}$ and $b_{c(x_j)}$. By construction, we know that these two
  values differ by $\pm 1$ modulo $2k + 1$, and hence the edge is satisfied.

Hence for the completeness to work, we need to take $\epsilon' \leq \epsilon$.

\paragraph{Soundness.} Suppose we have a $(1-\epsilon)$-fractional $\ell$-colouring of the reduced graph. We must show that the original instance has a weak solution of value $\epsilon'$.
Consider the unsatisfied edges; suppose that we select a constraint $(x_1, \ldots, x_{\ell + 1})$ at random, and we count the expected number of unsatisfied edges is $E_\pi(\vec{f}_{x_1}, \ldots, \vec{f}_{x_{\ell + 1}})$. Noting that there are the same number of edges in $E_\pi(\vec{f}_{x_1}, \ldots, \vec{f}_{x_{\ell + 1}})$ for every constraint $(x_1, \ldots, x_{\ell + 1})$, by Markov's inequality, the probability that there are more than $\sqrt{\epsilon}$ unsatisfied edges is at most $\sqrt{\epsilon}$. Hence, for a $(1 - \sqrt{\epsilon})$-fraction of the constraints $(x_1, \ldots, x_{\ell + 1})$, a $(1 - \sqrt{\epsilon})$-fraction of the edges in $E_\pi(\vec{f}_{x_1}, \ldots, \vec{f}_{x_{\ell + 1}})$ are coloured correctly.

Now, apply \Cref{lem:technical} to get $s, \delta$ and an assignment function $I$. Take $\epsilon$ so that $\sqrt{\epsilon} \leq \delta$. We claim that assigning variable $x$ a value from $I(f_x)$ uniformly at random will weakly satisfy a $\frac{1 - \sqrt{\epsilon}}{s^2}$-fraction of the constraints. Taking $\epsilon' \leq \frac{1 - \sqrt{\epsilon}}{s^2}$ will make the reduction work. For each of the $(1 - \sqrt{\epsilon})$-fraction of the constraints where a $(1 - \sqrt{\epsilon})$ fraction of the edges in $E_\pi(\vec{f}_{x_1}, \ldots, \vec{f}_{x_{\ell + 1}})$ are coloured correctly,  we claim that the constraint is weakly satisfied with probability at least $1 / s^2$. Indeed, this follows from \Cref{lem:technical}: the edges in $E_\pi(\vec{f}_{x_1}, \ldots, \vec{f}_{x_{\ell+1}})$ satisfy the preconditions of \Cref{lem:technical}, so for some $1 \leq i < j \leq \ell + 1$ we have that $\pi_{ij}(I(\vec{f}_{x_i})) \cap I(\vec{f}_{x_j}) \neq \emptyset$. So, with probability $1 / s^2$ we choose the correct value for $x_i$ and $x_j$, and weakly satisfy the constraint.
\end{proof}

\section*{Acknowledgements}

We thank Jakub Opr\v{s}al, who asked us about the complexity of $\maxPCSP$s for
graphs beyond cliques, and for useful discussions. We also thank Pravesh Kothari
for asking us about the performance of our maximum bipartite vs.~triangle-free
subgraph algorithm in the $1 - \epsilon$ regime, which led to~\Cref{thm:almost}.
Finally, we thank the three anonymous reviewers for their feedback on this paper.

{
\bibliography{nz}

\newcommand{\etalchar}[1]{$^{#1}$}
\begin{thebibliography}{KKMO07}

\bibitem[AB09]{Arora09:book}
Sanjeev Arora and Boaz Barak.
\newblock {\em Computational complexity: a modern approach}.
\newblock Cambridge University Press, 2009.

\bibitem[ABP20]{ABP20}
Per Austrin, Amey Bhangale, and Aditya Potukuchi.
\newblock Improved inapproximability of rainbow coloring.
\newblock In {\em Proc. 31st Annual {ACM-SIAM} Symposium on Discrete Algorithms
  (SODA'20)}, pages 1479--1495, 2020.
\newblock \href {http://arxiv.org/abs/1810.02784} {\path{arXiv:1810.02784}},
  \href {https://doi.org/10.1137/1.9781611975994.90}
  {\path{doi:10.1137/1.9781611975994.90}}.

\bibitem[AFO{\etalchar{+}}25]{Avvakumo25:stoc}
Sergey Avvakumov, Marek Filakovsk{\'{y}}, Jakub Opr\v{s}al, Gianluca Tasinato,
  and Uli Wagner.
\newblock {Hardness of $4$-Colourings $G$-Colourable Graphs}.
\newblock In {\em Proc. 57th Annual ACM Symposium on Theory of Computing
  (STOC'25)}, pages 72--83. {ACM}, 2025.
\newblock \href {http://arxiv.org/abs/2504.07592} {\path{arXiv:2504.07592}},
  \href {https://doi.org/10.1145/3717823.3718154}
  {\path{doi:10.1145/3717823.3718154}}.

\bibitem[AGH17]{AGH17}
Per Austrin, Venkatesan Guruswami, and Johan H{å}stad.
\newblock (2+{$\epsilon$})-{S}at is {NP}-hard.
\newblock {\em {SIAM} J. Comput.}, 46(5):1554--1573, 2017.
\newblock \href {https://doi.org/10.1137/15M1006507}
  {\path{doi:10.1137/15M1006507}}.

\bibitem[BBB21]{Barto21:stacs}
Libor Barto, Diego Battistelli, and Kevin~M. Berg.
\newblock {Symmetric Promise Constraint Satisfaction Problems: Beyond the
  Boolean Case}.
\newblock In {\em Proc. 38th International Symposium on Theoretical Aspects of
  Computer Science (STACS'21)}, volume 187 of {\em LIPIcs}, pages 10:1--10:16,
  2021.
\newblock \href {http://arxiv.org/abs/2010.04623} {\path{arXiv:2010.04623}},
  \href {https://doi.org/10.4230/LIPIcs.STACS.2021.10}
  {\path{doi:10.4230/LIPIcs.STACS.2021.10}}.

\bibitem[BBK{\etalchar{+}}24]{Barto24:lics-algebraic}
Libor Barto, Silvia Butti, Alexandr Kazda, Caterina Viola, and Stanislav
  {\v{Z}}ivn\'y.
\newblock Algebraic approach to approximation.
\newblock In {\em 39th Annual ACM/IEEE Symposium on Logic in Computer Science
  (LICS'24)}, pages 10:1--10:14. {ACM}, 2024.
\newblock \href {http://arxiv.org/abs/2401.15186} {\path{arXiv:2401.15186}},
  \href {https://doi.org/10.1145/3661814.3662076}
  {\path{doi:10.1145/3661814.3662076}}.

\bibitem[BBKO21]{BBKO21}
Libor Barto, Jakub Bulín, Andrei~A. Krokhin, and Jakub Opršal.
\newblock Algebraic approach to promise constraint satisfaction.
\newblock {\em J. {ACM}}, 68(4):28:1--28:66, 2021.
\newblock \href {http://arxiv.org/abs/1811.00970} {\path{arXiv:1811.00970}},
  \href {https://doi.org/10.1145/3457606} {\path{doi:10.1145/3457606}}.

\bibitem[BG21]{BG21:sicomp}
Joshua Brakensiek and Venkatesan Guruswami.
\newblock {Promise Constraint Satisfaction: Algebraic Structure and a Symmetric
  Boolean Dichotomy}.
\newblock {\em {SIAM} J. Comput.}, 50(6):1663--1700, 2021.
\newblock \href {http://arxiv.org/abs/1704.01937} {\path{arXiv:1704.01937}},
  \href {https://doi.org/10.1137/19M128212X} {\path{doi:10.1137/19M128212X}}.

\bibitem[BGS98]{BGS:98}
Mihir Bellare, Oded Goldreich, and Madhu Sudan.
\newblock {Free Bits, PCPs, and Nonapproximability --- Towards Tight Results}.
\newblock {\em {SIAM} J. Comput.}, 27(3):804--915, 1998.
\newblock \href {https://doi.org/10.1137/S0097539796302531}
  {\path{doi:10.1137/S0097539796302531}}.

\bibitem[BGS23a]{BGS23}
Joshua Brakensiek, Venkatesan Guruswami, and Sai Sandeep.
\newblock Conditional dichotomy of {B}oolean ordered promise {CSP}s.
\newblock {\em TheoretiCS}, 2, 2023.
\newblock \href {http://arxiv.org/abs/2102.11854} {\path{arXiv:2102.11854}},
  \href {https://doi.org/10.46298/theoretics.23.2}
  {\path{doi:10.46298/theoretics.23.2}}.

\bibitem[BGS23b]{BGS23:stoc}
Joshua Brakensiek, Venkatesan Guruswami, and Sai Sandeep.
\newblock {SDPs and Robust Satisfiability of Promise CSP}.
\newblock In {\em Proc. Annual 55th ACM Symposium on Theory of Computing
  (STOC'23)}, pages 609--622. {ACM}, 2023.
\newblock \href {http://arxiv.org/abs/2211.08373} {\path{arXiv:2211.08373}},
  \href {https://doi.org/10.1145/3564246.3585180}
  {\path{doi:10.1145/3564246.3585180}}.

\bibitem[BJK05]{Bulatov05:classifying}
Andrei Bulatov, Peter Jeavons, and Andrei Krokhin.
\newblock Classifying the complexity of constraints using finite algebras.
\newblock {\em {SIAM} J. Comput.}, 34(3):720--742, 2005.
\newblock \href {https://doi.org/10.1137/S0097539700376676}
  {\path{doi:10.1137/S0097539700376676}}.

\bibitem[BK05]{BK05}
Ankur Bhargava and S.~Rao Kosaraju.
\newblock Derandomization of dimensionality reduction and {SDP} based
  algorithms.
\newblock In {\em Proc. 8th International Workshop on Algorithms and Data
  Structures (WADS'05)}, pages 396--408. Springer Berlin Heidelberg, 2005.
\newblock \href {https://doi.org/10.1007/11534273\_35}
  {\path{doi:10.1007/11534273\_35}}.

\bibitem[BL{\v{Z}}25]{blz25:icalp}
Silvia Butti, Alberto Larrauri, and Stanislav {\v{Z}}ivn{\'{y}}.
\newblock Optimal inapproximability of promise equations over finite groups.
\newblock In {\em Proc. 52nd International Colloquium on Automata, Languages,
  and Programming (ICALP'25)}, 2025.
\newblock \href {http://arxiv.org/abs/2411.01630} {\path{arXiv:2411.01630}}.

\bibitem[BR15]{BCR15}
Jonah Brown{-}Cohen and Prasad Raghavendra.
\newblock Combinatorial optimization algorithms via polymorphisms.
\newblock Technical report, 2015.
\newblock \href {http://arxiv.org/abs/1501.01598} {\path{arXiv:1501.01598}}.

\bibitem[Bul17]{Bulatov17:focs}
Andrei~A. Bulatov.
\newblock A dichotomy theorem for nonuniform {CSP}s.
\newblock In {\em Proc. 58th Annual IEEE Symposium on Foundations of Computer
  Science (FOCS'17)}, pages 319--330, {USA}, 2017. {IEEE}.
\newblock \href {http://arxiv.org/abs/1703.03021} {\path{arXiv:1703.03021}},
  \href {https://doi.org/10.1109/FOCS.2017.37}
  {\path{doi:10.1109/FOCS.2017.37}}.

\bibitem[BW{\v{Z}}21]{BWZ21}
Alex Brandts, Marcin Wrochna, and Stanislav {\v{Z}}ivn{\'{y}}.
\newblock The complexity of promise {SAT} on non-{B}oolean domains.
\newblock {\em {ACM} Trans. Comput. Theory}, 13(4):26:1--26:20, 2021.
\newblock \href {http://arxiv.org/abs/1911.09065} {\path{arXiv:1911.09065}},
  \href {https://doi.org/10.1145/3470867} {\path{doi:10.1145/3470867}}.

\bibitem[Che68]{Cheng68}
M.~Cheng.
\newblock The clipping loss in correlation detectors for arbitrary input
  signal-to-noise ratios.
\newblock {\em IEEE Trans. Inf. Theory}, 14(3):382--389, 1968.
\newblock \href {https://doi.org/10.1109/TIT.1968.1054159}
  {\path{doi:10.1109/TIT.1968.1054159}}.

\bibitem[Che69]{Cheng69}
M.~C. Cheng.
\newblock The orthant probabilities of four gaussian variates.
\newblock {\em Ann. Math. Stat.}, 40(1):152--161, 1969.
\newblock \href {https://doi.org/10.1214/aoms/1177697812}
  {\path{doi:10.1214/aoms/1177697812}}.

\bibitem[DGKR05]{DinurGKR05}
Irit Dinur, Venkatesan Guruswami, Subhash Khot, and Oded Regev.
\newblock A new multilayered {PCP} and the hardness of hypergraph vertex cover.
\newblock {\em {SIAM} J. Comput.}, 34(5):1129--1146, 2005.
\newblock \href {https://doi.org/10.1137/S0097539704443057}
  {\path{doi:10.1137/S0097539704443057}}.

\bibitem[DKPS10]{Dinur10:focs}
Irit Dinur, Subhash Khot, Will Perkins, and Muli Safra.
\newblock Hardness of finding independent sets in almost 3-colorable graphs.
\newblock In {\em Proc. 51th Annual {IEEE} Symposium on Foundations of Computer
  Science (FOCS'10)}, pages 212--221. {IEEE} Computer Society, 2010.
\newblock \href {https://doi.org/10.1109/FOCS.2010.84}
  {\path{doi:10.1109/FOCS.2010.84}}.

\bibitem[DMR09]{Dinur09:sicomp}
Irit Dinur, Elchanan Mossel, and Oded Regev.
\newblock {Conditional Hardness for Approximate Coloring}.
\newblock {\em {SIAM} J. Comput.}, 39(3):843--873, 2009.
\newblock \href {https://doi.org/10.1137/07068062X}
  {\path{doi:10.1137/07068062X}}.

\bibitem[DRS05]{DRS05}
Irit Dinur, Oded Regev, and Clifford Smyth.
\newblock The hardness of 3-uniform hypergraph coloring.
\newblock {\em Comb.}, 25(5):519--535, September 2005.
\newblock \href {https://doi.org/10.1007/s00493-005-0032-4}
  {\path{doi:10.1007/s00493-005-0032-4}}.

\bibitem[EH08]{Engebretsen08:rsa}
Lars Engebretsen and Jonas Holmerin.
\newblock More efficient queries in {P}{C}{P}s for {NP} and improved
  approximation hardness of maximum {CSP}.
\newblock {\em Random Struct. Algorithms}, 33(4):497--514, 2008.
\newblock \href {https://doi.org/10.1002/RSA.20226}
  {\path{doi:10.1002/RSA.20226}}.

\bibitem[Euc26]{euclid}
Euclid.
\newblock {\em The Thirteen Books of Euclid's Elements}.
\newblock Cambridge University Press, Cambridge, second edition, 1926.
\newblock Trans. Sir Thomas Little Heath.

\bibitem[FJ97]{FJ97}
Alan~M. Frieze and Mark Jerrum.
\newblock {Improved Approximation Algorithms for {MAX} k-CUT and {MAX}
  {BISECTION}}.
\newblock {\em Algorithmica}, 18(1):67--81, 1997.
\newblock \href {https://doi.org/10.1007/BF02523688}
  {\path{doi:10.1007/BF02523688}}.

\bibitem[FKOS19]{Ficak19:icalp}
Miron Ficak, Marcin Kozik, Miroslav Olšák, and Szymon Stankiewicz.
\newblock {Dichotomy for Symmetric Boolean PCSPs}.
\newblock In {\em Proc. 46th International Colloquium on Automata, Languages,
  and Programming (ICALP'19)}, volume 132, pages 57:1--57:12, Dagstuhl,
  Germany, 2019. Schloss Dagstuhl -- Leibniz-Zentrum für Informatik.
\newblock \href {http://arxiv.org/abs/1904.12424} {\path{arXiv:1904.12424}},
  \href {https://doi.org/10.4230/LIPIcs.ICALP.2019.57}
  {\path{doi:10.4230/LIPIcs.ICALP.2019.57}}.

\bibitem[FV98]{FederVardiConjecture}
Tomás Feder and Moshe~Y. Vardi.
\newblock The computational structure of monotone monadic {SNP} and constraint
  satisfaction: {A} study through datalog and group theory.
\newblock {\em {SIAM} J. Comput.}, 28(1), 1998.
\newblock \href {https://doi.org/10.1137/S0097539794266766}
  {\path{doi:10.1137/S0097539794266766}}.

\bibitem[GJ76]{GJ76}
M.~R. Garey and D.~S. Johnson.
\newblock The complexity of near-optimal graph coloring.
\newblock {\em J. {ACM}}, 23(1):43--49, 1976.
\newblock \href {https://doi.org/10.1145/321921.321926}
  {\path{doi:10.1145/321921.321926}}.

\bibitem[GW95]{GW95}
Michel~X. Goemans and David~P. Williamson.
\newblock Improved approximation algorithms for maximum cut and satisfiability
  problems using semidefinite programming.
\newblock {\em J. ACM}, 42(6):1115--1145, 1995.
\newblock \href {https://doi.org/10.1145/227683.227684}
  {\path{doi:10.1145/227683.227684}}.

\bibitem[H{\r{a}}s01]{Hastad01}
Johan H{\r{a}}stad.
\newblock Some optimal inapproximability results.
\newblock {\em J. ACM}, 48(4):798–859, 7 2001.
\newblock \href {https://doi.org/10.1145/502090.502098}
  {\path{doi:10.1145/502090.502098}}.

\bibitem[HMS23]{Hecht23:approx}
Yahli Hecht, Dor Minzer, and Muli Safra.
\newblock {NP-Hardness of Almost Coloring Almost 3-Colorable Graphs}.
\newblock In {\em Proc. Approximation, Randomization, and Combinatorial
  Optimization. Algorithms and Techniques (APPROX/RANDOM'23)}, volume 275 of
  {\em LIPIcs}, pages 51:1--51:12. Schloss Dagstuhl - Leibniz-Zentrum f{\"{u}}r
  Informatik, 2023.
\newblock \href {https://doi.org/10.4230/LIPIcs.APPROX/RANDOM.2023.51}
  {\path{doi:10.4230/LIPIcs.APPROX/RANDOM.2023.51}}.

\bibitem[HN90]{Hell90:h-coloring}
Pavol Hell and Jaroslav Nešetřil.
\newblock On the {C}omplexity of {${H}$}-coloring.
\newblock {\em J. Comb. Theory, Ser. B}, 48(1):92--110, 1990.
\newblock \href {https://doi.org/10.1016/0095-8956(90)90132-J}
  {\path{doi:10.1016/0095-8956(90)90132-J}}.

\bibitem[Jea98]{Jeavons98:algebraic}
Peter~G. Jeavons.
\newblock On the {A}lgebraic {S}tructure of {C}ombinatorial {P}roblems.
\newblock {\em Theor. Comput. Sci.}, 200(1-2):185--204, 1998.
\newblock \href {https://doi.org/10.1016/S0304-3975(97)00230-2}
  {\path{doi:10.1016/S0304-3975(97)00230-2}}.

\bibitem[Kar72]{Karp1972}
Richard~M. Karp.
\newblock Reducibility among combinatorial problems.
\newblock In {\em Complexity of Computer Computations: Proceedings of a
  symposium on the Complexity of Computer Computations}, pages 85--103.
  Springer US, 1972.
\newblock \href {https://doi.org/10.1007/978-1-4684-2001-2_9}
  {\path{doi:10.1007/978-1-4684-2001-2_9}}.

\bibitem[Kho02]{Khot02stoc}
Subhash Khot.
\newblock On the power of unique 2-prover 1-round games.
\newblock In {\em Proc. 34th Annual ACM Symposium on Theory of Computing
  ({STOC}'02)}, pages 767--775. ACM, 2002.
\newblock \href {https://doi.org/10.1145/509907.510017}
  {\path{doi:10.1145/509907.510017}}.

\bibitem[Kho10]{Khot12:ccc-survey}
Subhash Khot.
\newblock On the unique games conjecture (invited survey).
\newblock In {\em Proc. 25th Annual {IEEE} Conference on Computational
  Complexity (CCC'10)}, pages 99--121. {IEEE} Computer Society, 2010.
\newblock \href {https://doi.org/10.1109/CCC.2010.19}
  {\path{doi:10.1109/CCC.2010.19}}.

\bibitem[KKMO07]{KKMO07}
Subhash Khot, Guy Kindler, Elchanan Mossel, and Ryan O'Donnell.
\newblock {Optimal Inapproximability Results for MAX-CUT and Other 2-Variable
  CSPs?}
\newblock {\em {SIAM} J. Comput.}, 37(1):319--357, 2007.
\newblock \href {https://doi.org/10.1137/S0097539705447372}
  {\path{doi:10.1137/S0097539705447372}}.

\bibitem[KMS98]{Karger98:jacm}
David~R. Karger, Rajeev Motwani, and Madhu Sudan.
\newblock Approximate graph coloring by semidefinite programming.
\newblock {\em J. {ACM}}, 45(2):246--265, 1998.
\newblock \href {https://doi.org/10.1145/274787.274791}
  {\path{doi:10.1145/274787.274791}}.

\bibitem[KOW{\v{Z}}23]{KOWZ23}
Andrei~A. Krokhin, Jakub Opršal, Marcin Wrochna, and Stanislav
  {\v{Z}}ivn{\'{y}}.
\newblock Topology and adjunction in promise constraint satisfaction.
\newblock {\em SIAM J. Computing}, 52(1):37--79, 2023.
\newblock \href {http://arxiv.org/abs/2003.11351} {\path{arXiv:2003.11351}},
  \href {https://doi.org/10.1137/20M1378223} {\path{doi:10.1137/20M1378223}}.

\bibitem[KS12]{Khot12:focs}
Subhash Khot and Rishi Saket.
\newblock Hardness of finding independent sets in almost $q$-colorable graphs.
\newblock In {\em Proc. 53rd Annual {IEEE} Symposium on Foundations of Computer
  Science (FOCS'12)}, pages 380--389. {IEEE} Computer Society, 2012.
\newblock \href {https://doi.org/10.1109/FOCS.2012.75}
  {\path{doi:10.1109/FOCS.2012.75}}.

\bibitem[KSTW00]{KSTW00}
Sanjeev Khanna, Madhu Sudan, Luca Trevisan, and David~P. Williamson.
\newblock The approximability of constraint satisfaction problems.
\newblock {\em {SIAM} J. Comput.}, 30(6):1863--1920, 2000.
\newblock \href {https://doi.org/10.1137/S0097539799349948}
  {\path{doi:10.1137/S0097539799349948}}.

\bibitem[MOO10]{Mossel10:ann}
Elchanan Mossel, Ryan O'Donnell, and Krzysztof Oleszkiewicz.
\newblock Noise stability of functions with low influences: invariance and
  optimality.
\newblock {\em Ann. of Math. (2)}, 171(1):295--341, 2010.
\newblock \href {https://doi.org/10.4007/annals.2010.171.295}
  {\path{doi:10.4007/annals.2010.171.295}}.

\bibitem[MR99]{Mahajan99:sicomp}
Sanjeev Mahajan and H.~Ramesh.
\newblock Derandomizing approximation algorithms based on semidefinite
  programming.
\newblock {\em {SIAM} J. Comput.}, 28(5):1641--1663, 1999.
\newblock \href {https://doi.org/10.1137/S0097539796309326}
  {\path{doi:10.1137/S0097539796309326}}.

\bibitem[N{\v{Z}}23]{nz23:arxiv}
Tamio-Vesa Nakajima and Stanislav {\v{Z}}ivn\'{y}.
\newblock Maximum $k$- vs. $\ell$-colourings of graphs, 2023.
\newblock \href {http://arxiv.org/abs/2311.00440} {\path{arXiv:2311.00440}}.

\bibitem[N{\v{Z}}25]{NZ25:icalp}
Tamio{-}Vesa Nakajima and Stanislav {\v{Z}}ivn{\'{y}}.
\newblock Maximum bipartite vs. triangle-free subgraph.
\newblock In {\em Proc. 52nd International Colloquium on Automata, Languages,
  and Programming (ICALP'25)}, volume 334 of {\em LIPIcs}, pages 121:1--121:16.
  Schloss Dagstuhl - Leibniz-Zentrum f{\"{u}}r Informatik, 2025.
\newblock \href {https://doi.org/10.4230/LIPICS.ICALP.2025.121}
  {\path{doi:10.4230/LIPICS.ICALP.2025.121}}.

\bibitem[PY91]{PapadimitriouY91}
Christos~H. Papadimitriou and Mihalis Yannakakis.
\newblock Optimization, approximation, and complexity classes.
\newblock {\em J. Comput. Syst. Sci.}, 43(3):425--440, 1991.
\newblock \href {https://doi.org/10.1016/0022-0000(91)90023-X}
  {\path{doi:10.1016/0022-0000(91)90023-X}}.

\bibitem[Rag08]{Raghavendra08:everycsp}
Prasad Raghavendra.
\newblock Optimal algorithms and inapproximability results for every {C}{S}{P}?
\newblock In {\em Proc. 40th Annual ACM Symposium on Theory of Computing
  (STOC'08)}, pages 245--254, 2008.
\newblock \href {https://doi.org/10.1145/1374376.1374414}
  {\path{doi:10.1145/1374376.1374414}}.

\bibitem[Rai19]{Raic19}
Martin Rai{\v c}.
\newblock {A multivariate Berry--Esseen theorem with explicit constants}.
\newblock {\em Bernoulli}, 25(4A):2824 -- 2853, 2019.
\newblock \href {http://arxiv.org/abs/1802.06475} {\path{arXiv:1802.06475}},
  \href {https://doi.org/10.3150/18-BEJ1072} {\path{doi:10.3150/18-BEJ1072}}.

\bibitem[RS09]{Raghavendra09:focs}
Prasad Raghavendra and David Steurer.
\newblock How to round any {CSP}.
\newblock In {\em Proc. 50th Annual {IEEE} Symposium on Foundations of Computer
  Science (FOCS'09)}, pages 586--594. {IEEE} Computer Society, 2009.
\newblock \href {https://doi.org/10.1109/FOCS.2009.74}
  {\path{doi:10.1109/FOCS.2009.74}}.

\bibitem[Sch78]{Schaefer78:stoc}
Thomas Schaefer.
\newblock The complexity of satisfiability problems.
\newblock In {\em Proc. 10th Annual ACM Symposium on the Theory of Computing
  (STOC'78)}, pages 216--226, USA, 1978. ACM.
\newblock \href {https://doi.org/10.1145/800133.804350}
  {\path{doi:10.1145/800133.804350}}.

\bibitem[TSSW00]{Trevisan00:sicomp}
Luca Trevisan, Gregory~B. Sorkin, Madhu Sudan, and David~P. Williamson.
\newblock Gadgets, approximation, and linear programming.
\newblock {\em {SIAM} J. Comput.}, 29(6):2074--2097, 2000.
\newblock \href {https://doi.org/10.1137/S0097539797328847}
  {\path{doi:10.1137/S0097539797328847}}.

\bibitem[T{\v{Z}}16]{TZ16:jacm}
Johan Thapper and Stanislav {\v{Z}}ivn{\'{y}}.
\newblock {The Complexity of Finite-Valued CSPs}.
\newblock {\em J. {ACM}}, 63(4):37:1--37:33, 2016.
\newblock \href {https://doi.org/10.1145/2974019} {\path{doi:10.1145/2974019}}.

\bibitem[Zhu20]{Zhuk20:jacm}
Dmitriy Zhuk.
\newblock A proof of the {CSP} dichotomy conjecture.
\newblock {\em J. {ACM}}, 67(5):30:1--30:78, 2020.
\newblock \href {http://arxiv.org/abs/1704.01914} {\path{arXiv:1704.01914}},
  \href {https://doi.org/10.1145/3402029} {\path{doi:10.1145/3402029}}.

\end{thebibliography}
\bibliographystyle{alphaurl}
}

\end{document}